\newtheorem{theorem}{Theorem}[section]
\newtheorem{corollary}[theorem]{Corollary}
\newtheorem{fact}[theorem]{Fact}
\newtheorem{lemma}[theorem]{Lemma}
\newtheorem{definition}[theorem]{Definition}
\newtheorem{remark}[theorem]{Remark}
\newtheorem{claim}[theorem]{Claim}
\newcommand{\norm}[1]{\left\| #1 \right\|}
\newcommand{\normalized}[1]{\Dminushalf #1 \Dminushalf}
\newcommand{\EE}{\mathop{\mathbb E}} 
\newcommand{\RR}{\mathop{\mathbb R}}
\newcommand{\NN}{\mathop{\mathbb N}}
\newcommand{\CC}{\mathop{\mathbb C}}
\newcommand{\PP}{\mathop{\mathbb P}}
\newcommand{\1}{\mathbbm{1}}
\newcommand{\eg}{{\it e.g.}}
\newcommand{\ie}{{\it i.e.}}
\newcommand{\etal}{{\it et al.}}
\newcommand{\unitFlow}{Bounded-Distance-Flow}
\newcommand\numeq[1]{\stackrel{\scriptscriptstyle(\mkern-1.5mu#1\mkern-1.5mu)}{=}}
\newcommand\numge[1]{\stackrel{\scriptscriptstyle(\mkern-1.5mu#1\mkern-1.5mu)}{\ge}}
\newcommand\numle[1]{\stackrel{\scriptscriptstyle(\mkern-1.5mu#1\mkern-1.5mu)}{\le}}
\NewDocumentCommand\F{gg}{\ensuremath{F\IfNoValueTF{#1}{}{_{#1}}\IfNoValueTF{#2}{}{(#2)}}}
\NewDocumentCommand\M{gg}{\ensuremath{M\IfNoValueTF{#1}{}{_{#1}}\IfNoValueTF{#2}{}{(#2)}}}
\NewDocumentCommand\N{gg}{\ensuremath{N\IfNoValueTF{#1}{}{_{#1}}\IfNoValueTF{#2}{}{(#2)}}}
\NewDocumentCommand\W{gg}{\ensuremath{W\IfNoValueTF{#1}{}{_{#1}}\IfNoValueTF{#2}{}{(#2)}}}
\NewDocumentCommand\D{gg}{\ensuremath{D\IfNoValueTF{#1}{}{_{#1}}\IfNoValueTF{#2}{}{(#2)}}}
\NewDocumentCommand\A{gg}{\ensuremath{A\IfNoValueTF{#1}{}{_{#1}}\IfNoValueTF{#2}{}{(#2)}}}
\NewDocumentCommand\B{gg}{\ensuremath{B\IfNoValueTF{#1}{}{_{#1}}\IfNoValueTF{#2}{}{(#2)}}}
\NewDocumentCommand\X{gg}{\ensuremath{X\IfNoValueTF{#1}{}{_{#1}}\IfNoValueTF{#2}{}{(#2)}}}
\NewDocumentCommand\Y{gg}{\ensuremath{Y\IfNoValueTF{#1}{}{_{#1}}\IfNoValueTF{#2}{}{(#2)}}}
\NewDocumentCommand\Z{gg}{\ensuremath{Z\IfNoValueTF{#1}{}{_{#1}}\IfNoValueTF{#2}{}{(#2)}}}
\NewDocumentCommand\R{gg}{\ensuremath{R\IfNoValueTF{#1}{}{_{#1}}\IfNoValueTF{#2}{}{(#2)}}}
\NewDocumentCommand\Smat{gg}{\ensuremath{S\IfNoValueTF{#1}{}{_{#1}}\IfNoValueTF{#2}{}{(#2)}}}
\NewDocumentCommand\U{gg}{\ensuremath{U\IfNoValueTF{#1}{}{_{#1}}\IfNoValueTF{#2}{}{(#2)}}}
\NewDocumentCommand\Hmat{gg}{\ensuremath{H\IfNoValueTF{#1}{}{_{#1}}\IfNoValueTF{#2}{}{(#2)}}}
\NewDocumentCommand\Pmat{gg}{\ensuremath{P\IfNoValueTF{#1}{}{_{#1}}\IfNoValueTF{#2}{}{(#2)}}}
\NewDocumentCommand\I{gg}{\ensuremath{I\IfNoValueTF{#1}{}{_{#1}}\IfNoValueTF{#2}{}{(#2)}}}
\NewDocumentCommand\dG{gg}{\ensuremath{d\IfNoValueTF{#1}{}{_{#1}}\IfNoValueTF{#2}{}{(#2)}}}
\NewDocumentCommand\Matrix{mm}{\ensuremath{\left(#1\right)(#2)}}
\NewDocumentCommand\MatrixSimple{mm}{\ensuremath{#1(#2)}}
\NewDocumentCommand\dGG{g}{\ensuremath{d_G\IfNoValueTF{#1}{}{(#1)}}}
\NewDocumentCommand\vv{g}{\ensuremath{v\IfNoValueTF{#1}{}{(#1)}}}
\NewDocumentCommand\Dminushalf{gg}{\ensuremath{D\IfNoValueTF{#1}{}{_{#1}}^{-\frac{1}{2}}\IfNoValueTF{#2}{}{(#2)}}}
\DeclareMathOperator{\vol}{\mathbf{vol}}
\DeclareMathOperator{\tr}{Tr}
\DeclareMathOperator{\diag}{\mathbf{diag}}
\newcommand{\defeq}{\mathrel{\mathop:}=}
\begin{document}
	\title{Expander Decomposition with Fewer Inter-Cluster Edges Using a Spectral Cut Player\thanks{This work was supported in part by Israel Science Foundation grant no.\ 1595/19 and 1156/23 and the Blavatnik Family Foundation.}
	}

	\author{Daniel Agassy\thanks{Tel Aviv University,
danielagassy@mail.tau.ac.il} \and Dani Dorfman\thanks{Max Planck Institute for Informatics,
Saarbrücken, Germany. Part of this work was done in Tel Aviv University,
dani.i.dorfman@gmail.com} 	\and Haim Kaplan\thanks{Tel Aviv University, haimk@tau.ac.il}}

	\maketitle

	\noindent

\begin{abstract}
A $(\phi,\epsilon)$-expander decomposition of a graph $G$ (with $n$ vertices and $m$ edges) is a partition of $V$ into clusters $V_1,\ldots,V_k$ with conductance $\Phi(G[V_i]) \ge \phi$, such that there are at most $\epsilon m$ inter-cluster edges. Such a decomposition plays a crucial role in many graph algorithms. We  give a randomized $\tilde{O}(m/\phi)$ time algorithm for computing a $(\phi, \phi\log^2 {n})$-expander decomposition. This improves upon the $(\phi, \phi\log^3 {n})$-expander decomposition also obtained in $\tilde{O}(m/\phi)$ time by [Saranurak and Wang, SODA 2019] (SW) and brings the number of inter-cluster edges within logarithmic factor of optimal. 

\looseness = -1
One crucial component of SW's algorithm is a non-stop version of the cut-matching game of 
[Khandekar, Rao, Vazirani, JACM 2009] (KRV):
 The cut player does not stop when it gets from the matching player an unbalanced sparse cut, but continues to play on a trimmed part of the large side. The crux of our improvement is the design of a non-stop version of the cleverer cut player of [Orecchia, Schulman, Vazirani, Vishnoi, STOC 2008] (OSVV). The cut player of OSSV uses a more sophisticated random walk, a subtle potential function, and spectral arguments. Designing and analysing a non-stop version of this game was an explicit open question asked by SW.
\end{abstract}

\section{Introduction}
\looseness=-1
The \emph{conductance} of a cut $(S, V\setminus {S})$ is $\Phi_{G}(S,V\setminus {S}) = \frac{|E(S,V\setminus {S})|}{\min(\vol(S), \vol(V\setminus {S}))}$, where $\vol(S)$ is the sum of the degrees of the vertices of $S$. The  conductance of a graph $G$ is the smallest conductance of a cut in $G$. 

A \emph{$(\phi,\epsilon)$-expander decomposition} of a graph $G$ is a partition of the vertices of $G$ into clusters $V_1,\ldots,V_k$ with conductance $\Phi(G[V_i]) \ge \phi$ such that there are at most $\epsilon m$ inter-cluster edges, where $\phi,\epsilon\ge 0$. We consider the problem of computing in almost linear time ($\tilde{O}(m)$ time) a $(\phi,\epsilon)$-expander decomposition for a given graph $G$ and $\phi>0$, while minimizing $\epsilon$ as a function of $\phi$.  
It is known that a $(\phi,\epsilon)$-expander decomposition, with $\epsilon = O(\phi\log n)$, always exists and that $\epsilon = \Theta(\phi\log n)$ is optimal~\cite{saranurak2019expander,alev2017graph}. 

Expander decomposition algorithms have been used in many cutting edge results, such as directed/undirected Laplacian solvers~\cite{spielman2004nearly,cohen2017almost}, graph sparsification~\cite{chu2020graph,chuzhoy2020deterministic}, distributed algorithms~\cite{chang2020deterministic}, and maximum flow algorithms~\cite{kelner2014almost}. Expander decomposition was also used \cite{chuzhoy2020deterministic}  in order to break the $O\left(\sqrt{n}\right)$ deterministic dynamic connectivity bound and achieve an improved running time of $O(n^{o(1)})$ per operation, and  in the recent breakthrough result by Chen et al.~\cite{chen2022maximum}, who showed algorithms for maximum flow and minimum cost flow in almost linear time. 

\looseness = -1
Given an $f(n)$-approximation algorithm for the problem of finding a minimum conductance cut, one can get a $(\phi, O(f(n)\cdot\phi\log n))$-expander decomposition algorithm by recursively computing approximate cuts (and thus splitting $V$) until all components are certified as expanders.  
In particular, using an exact minimum conductance cut algorithm ensures the existence of an expander decomposition with $\epsilon = O\left(\phi\log n\right)$ as mentioned above. Using the polynomial algorithms of \cite{orecchia2012approximating, arora2009expander} which provide the best approximation ratios  of $O\left(\sqrt{\phi}\right)$ and $O\left(\sqrt{\log n}\right)$, respectively, for conductance, gives polynomial time expander decomposition algorithms with $\epsilon =O\left(\phi^{3/2}\log n\right)$ and $\epsilon = O\left(\phi\log^{\frac{3}{2}} n\right)$. 
However, these decomposition algorithms might lead to a linear recursion depth, and therefore have superlinear time complexity.

To get a near linear time algorithm using this recursive  approach, one must be able to efficiently compute low conductance cuts with additional guarantees. We get such cuts  using the cut-matching framework of \cite{khandekar2009graph} (abbreviated as KRV). In order to present our results in the appropriate context we now give a brief background on the cut-matching framework.

\smallskip
\noindent
{\bf Cut-matching:} 
Edge-expansion is a connectivity measure related to conductance. 
The \emph{edge-expansion} of a cut $(S, V\setminus {S})$ is $h_{G}(S,V\setminus {S}) = \frac{|E(S,V\setminus {S})|}{\min(|S|,|V\setminus {S}|)}$ and the \emph{edge-expansion} of a graph $G$ is  the smallest edge-expansion of a cut in $G$.

The cut-matching game is a technique that reduces the approximation task for sparsest  cut (in terms of edge-expansion) to a polylogarithmic number of maximum flow problems.
The resulting approximation algorithm for sparsest cut is remarkably simple and robust. 

The cut-matching game is played between a \emph{cut player} and a \emph{matching player}, as follows. We start with an empty graph $G_0$ on $n$ vertices. At round $t$, the cut player chooses a bisection $(S_t, \overline{S_t})$ of the vertices (we assume $n$ is even). In response, the matching player presents a perfect matching $M_t$ between the vertices of $S_t$ and $\overline{S_t}$ and the game graph is updated to $G_t = G_{t-1} \cup M_t$. Note that this graph may contain parallel edges. The game ends when $G_t$ is a sufficiently good edge-expander. The goal of this game is to devise a strategy for the cut player that maximizes the ratio $r(n)\defeq \phi/T$, where $T$ is the number of rounds and $\phi = h(G_T)$ is the edge-expansion of $G_T$. KRV showed that one can translate a cut strategy of quality $r(n)$ into a sparsest cut algorithm of approximation ratio $1/r(n)$ by  applying a binary search on a sparsity parameter $\phi$ until we certify	that $h(G) \ge \phi$ and $h(G) = O(\phi/ r(n))$.  

KRV devised a randomized cut-player strategy that finds the bisection using a stochastic matrix that corresponds to a random walk on all previously discovered matchings. Their walk traverses the previous matchings in order and with probability half takes a step according to each matching. They showed that the matrix corresponding to this random walk can actually be embedded (as a flow matrix) into $G_t$ with constant congestion.  They terminate when the random walk matrix is close to uniform (i.e.\ having constant edge-expansion), resulting in $G_T$ for $T=O\left( \log^2 n\right)$, having constant edge-expansion. 

Orecchia et al.~\cite{orecchia2008partitioning} (abbreviated as OSVV) took the same approach but devised a more sophisticated random walk and used Cheeger's inequality~\cite{cheeger1970lower} in order to show that $G_T$, for $T = O\left(\log^2 n\right)$, has $\Omega\left(\log n\right)$ edge-expansion. That is, they got a ratio of $r(n) = \Omega\left(\frac{1}{\log n}\right)$.

\smallskip

Equipped with this background we now get back to expander decomposition, and focus on the $\tilde{O}(m/\phi)$ time algorithm by Saranurak and Wang~\cite{saranurak2019expander} (abbreviated as SW). Their algorithm is  randomized, follows the recursive scheme described above, and computes a $(\phi, \phi\log^3 n)$-expander decomposition in $O\left(\frac{m \log^4 n}{\phi}\right)$ time. Its number of inter-cluster edges is off by a factor of $O\left(\log^2 n\right)$ from optimal and off by a factor of $O\left(\log^{\frac{3}{2}} n\right)$ from the aforementioned best achievable polynomial time construction.	

One core component of this algorithm is a variation of the cut-matching game (inspired by  R\"{a}cke et al.~\cite{racke2014computing}).
In this variation, the game graph $G_t = (V_t,E_t)$ may lose vertices (\ie, $V_{t+1} \subseteq V_t$) throughout the game and the objective of the cut player is to make $V_T$ a \emph{near expander} in $G_T$ (see Definition \ref{def:expander-near-expander}). The result of each round does not consist of a perfect matching in $V_t$, but rather a subset to remove from $V_t$ and a matching of the remaining vertices. The game ends either with a balanced cut of low conductance, or with an unbalanced cut of low conductance, such that the larger side is a \emph{near expander}. This allows SW to avoid recurring on the large side of the cut. Indeed, if the cut is balanced, they run recursively on both sides, and if it is unbalanced, they use the fact that the large side is a \emph{near expander} and ``trim'' it by finding a large subset of this side which is an expander. Then, they run recursively on the smaller side combined
with the ``trimmed'' vertices.
SW's analysis of the new cut-matching game is based on the ideas and the potential function of KRV while carefully taking into account of the shrinkage of the game graph.

An open question, raised by SW, was whether one can adapt the technique of the cut-matching strategy of OSVV to improve their decomposition. A major obstacle is how to perform an OSVV-like spectral analysis  when we lose vertices throughout the process and need to bound the near-expansion of the final piece. This is challenging as the analysis of OSVV is already somewhat more complicated than that of KRV: It uses a different lazy random walk and a subtle potential to measure progress towards near expansion. Moreover Cheeger's inequality is suitable to show high expansion but the object we are targeting is a near expander.

\smallskip

{\bf Our contribution:} In this paper we answer this question of SW affirmatively. We present and analyze	an expander decomposition algorithm with a new cut-player inspired by OSVV. This improves the result of SW and gives a randomized 
$\tilde{O}(m/\phi)$ time algorithm
for computing an $(\phi, \phi\log^2 n)$-expander decomposition 
(Theorem~\ref{theorem:our_expander_decomposition}). 
This brings the number of inter-cluster edges to be off  only by  $O(\log n)$ factor from the best possible. 

To achieve this we overcome two main technical challenges: (1) We generalize the lazy random walk of the cut player of OSVV and the subtle potential  tracking its progress, to the setting in which the vertex set shrinks (by ripping off of it small cuts as in SW). (2) We show that when the generalized potential is small the remaining part of the game graph is a near expander. This required a generalization of Cheeger's inequality appropriate
for our purpose (see Lemma \ref{lemma:our_F_expander}).

\looseness=-1

Our techniques may be applied in  similar contexts. One concrete such context is the construction of tree-cut sparsifiers. Specifically, one could try to use our technique to improve the $O\left(\log^4 n\right)$-approximate tree-cut sparsifier construction of~\cite{racke2014computing} by a factor of $\log n$. (Note that~\cite{racke2014computing} in fact construct a tree-flow sparsifier, which is a stronger notion.)    

\smallskip

\looseness=-1
The  cut-matching framework \cite{khandekar2009graph} is formalized for edge-expansion rather than conductance. Consequently, SW and others whose primary objective is conductance had to transform the graph into a \emph{subdivision graph} in order to use this framework. 
The subdivision graph is obtained  by adding a new vertex (called a \emph{split-node}) in the middle of each edge $e$, splitting $e$ into a path of length two. The analysis has to translate cuts of low expansion in the modified graph (the subdivision graph) to cuts of low conductance in the original graph. 
This transformation complicates the algorithms and their analysis.

\looseness=-1
 To avoid this transformation we revisit the seminal results of KRV and OSVV and redo them directly for conductance. This is not trivial and requires subtle changes to the cut players, the matching players, and the potentials measuring progress towards a graph with high conductance.
 In particular the matching player does not produce a matching anymore but rather what we call a $\dGG$-matching, which is a graph with the same degrees as $G$. 

 Our new cut-matching algorithm is then described using this natural reformulation of the cut-matching framework directly for conductance, removing the complications that would have followed from using the subdivision graph.   

 We believe that our clean presentation of the cut-matching framework for conductance would prove useful for other applications of cut-matching that 
 require optimization for conductance rather than expansion.

\smallskip

{\bf Further related work:}     
Computing the expansion and the conductance of a graph $G$ is NP-hard \cite{matula1990sparsest, vsima2006np}, and there is a long line of research on approximating these connectivity measures. 
The best known polynomial algorithms for approximating the minimum conductance cut have either $O\left(\sqrt{\log n}\right)$ \cite{arora2009expander, sherman2009breaking} or $O\left(\sqrt{\Phi(G)}\right)$ approximation ratios~\cite{orecchia2012approximating}.
Approximation algorithms for expansion and conductance play a crucial role in algorithms for expander decomposition \cite{saranurak2019expander,chang2019improved,chuzhoy2020deterministic}, expander hierarchies \cite{goranci2021expander,hua2022maintaining}, and tree flow sparsifiers \cite{racke2014computing}.

\looseness=-1
In his thesis, Orecchia~\cite{orecchia2011fast} elaborates on the two cut-matching strategies described in OSVV, one based on a lazy random walk, called $C_{\text{NAT}}$, and a more sophisticated one based on the \emph{heat-kernel} random walk, called $C_{\text{EXP}}$.
Orecchia proves (Theorem 4.1.5 of~\cite{orecchia2011fast}) that using $C_{NAT}$ or $C_{EXP}$, after $T=\Theta(\log^2 n)$ iterations, the graph $G_T$ has expansion $\Omega(\log n)$ (and thereby conductance $\Omega\left(\frac{1}{\log n}\right)$,
since it is regular with degrees $\Theta(\log^2 n)$). Orecchia also bounds the second largest eigenvalue of the normalized Laplacian of $G_T$.
However, Orecchia does not show how to use cut-matching to get approximation algorithms for the conductance of $G$.

In a recent paper \cite{pmlr-v162-orecchia22a} Ameranis \etal~use a generalized notion of expansion, also mentioned in \cite{orecchia2011fast}, where we normalize the number of edges crossing the cut by a general measure ($\mu$) of the smaller side of the cut. 
They define a corresponding generalized version of the cut-matching game, and show how to use a cut strategy for this game to get an approximation algorithm for two generalized cut problems.
They claim that one can construct a cut strategy for this measure using ideas from \cite{orecchia2011fast}.\footnote{The details of such a cut player do not appear in \cite{pmlr-v162-orecchia22a} or \cite{orecchia2011fast}.}

Both SW and our result can be implemented in $\tilde{O}(m)$ time using the recent result of \cite{LNPSsoda13}, by replacing \emph{\unitFlow}~(Lemma \ref{lemma:unit_flow}) and the ``Trimming Step'' of  \cite{saranurak2019expander} with the algorithm of \cite[Section 8]{LNPSsoda13}. This $\tilde{O}(m)$ hides many log factors and requires more complicated machinery.

\smallskip
\looseness = -1

The structure of this paper is as follows.
Section~\ref{section:preliminaries} contains additional definitions.
In order to provide the appropriate context for our work, Section \ref{section:cut-matching-conductance} gives an overview of the cut-matching games in \cite{khandekar2009graph} and 
\cite{orecchia2008partitioning}
and highlights the differences between them. In Appendices \ref{section:krv} and \ref{section:osvv} we give a complete and self-contained description of these approximation algorithms directly {\bf for conductance}.
A reader knowledgeable in the Cut-Matching game can skip directly to Section \ref{section:our-short}.
In Section \ref{section:our-short} we present our new non-stop spectral cut player and expander decomposition algorithm. Section \ref{section:proof-main-theorem} contains the analysis of our algorithm. 
In Appendix \ref{appendix:unit_flow_trimming} we use the new cut player to prove the improvement of the expander decomposition algorithm (Theorem \ref{theorem:our_expander_decomposition}). 
Appendix \ref{appendix:matrix_inequalities} contains algebraic tools used throughout the paper. Appendix \ref{appendix:projection} contains probabilistic lemmas used in the analyses of the cut players. Finally, Appendix \ref{appendix:unit_flow_matching} gives the full details of the \emph{\unitFlow}~algorithm (called \emph{Unit-Flow} in \cite{henzinger2017flow, saranurak2019expander}) which is used in Subsection \ref{section:our-matching-player}.

To be consistent with common terminology we refer to a graph with conductance at least $\phi$ as a $\phi$-expander (rather than $\phi$-conductor).
No confusion should arise since
in the rest of this paper we focus on conductance and do not use the notion of edge-expansion anymore.
In this paper we only focus on unweighted graphs, although our algorithm can be adapted to the case of integral, polynomially bounded weights. 

\section{Preliminaries}
\label{section:preliminaries}

We denote the transpose of a vector or a matrix $x$ by $x'$. That is, if $v$ is a column vector then $v'$ is the corresponding row vector. 
For a vector $v\in \RR_{\ge 0}^n$, define
$\sqrt{v}$ to be vector whose coordinates are the square roots of those of $v$. Given $A\in \RR^{n\times n}$, we denote by $\A{}{i,j}$ the element at the $i$'th row and $j$'th column of $\A$. We denote by $\A{}{i,}, \A{}{,i}$ the {\em column} vectors corresponding to the $i$'th  row and the $i$'th column of $A$, respectively.
We use the abbreviation $\A{}{i}\defeq \A{}{i,}$ only with respect to the rows of $A$. Given a vector $v\in \RR^n$, we denote its $i$'th element by $\vv{i}$. For disjoint $A,B \subseteq V$, we denote by $E_G(A,B)$ the set of edges connecting $A$ and $B$ (if the graph is directed then we consider edges in both directions). We denote by $|E_G(A,B)|$ the number of edges in $E_G(A,B)$, or the sum of their weights if the graph is weighted. We sometimes omit the subscript when the graph is clear from the context. If $A = V \setminus B $, then we call $(A,B)$ a \emph{cut}.
\begin{definition} [$\dGG$,$\vol_G(S)$]
    \label{def:d_G}
        Given a graph $G$, the vector $\dGG\in \RR^n$ is defined as $\dGG{v} = \deg_G(v)$. To simplify the notation, we denote $\dG \defeq \dGG$ whenever the graph $G$ is clear from the context.
        For $S\subseteq V$, we denote by $\vol_G(S) \defeq \sum_{v\in S} \dGG{v}$ the \emph{volume} of $S$. 
    \end{definition}
\begin{definition} [$G\{A\}$]
	\label{def:subgraph_with_loops}
	    Let $G = (V,E)$ be a graph, and let $A\subseteq V$ be a set of vertices. We define the graph $G\{A\}=(V', E')$ as the graph induced by $A$ with self-loops added to preserve the degrees: $V'=A, E'=\{\{u,v\}\in E : u,v\in A\}\cup\{\{u,u\}:u\in A, v\in V\setminus A, \{u,v\}\in E\}$. 

	\end{definition}
\begin{definition}[$d$-Matching]
    \label{def:d-matching}
    Given a vector $d \in \NN^n$ and a collection of pairs $M = \left\{ \{u_i,v_i\} \right\}_{i=1}^{m}$. 

    We say that $M$ is a $d$-\emph{matching} if the graph defined by $M$
    (\ie, the graph whose edges are $M$)
    satisfies $\dG{M}(v) = \dG{}{v}$, for every $v$.
    \end{definition}
\begin{definition} [$\dGG$-stochastic]
    \label{def:d_G_stochastic}
        A matrix $\F\in \RR^{n\times n}$ is \emph{$\dGG$-stochastic} with respect to a graph $G$ if the following two conditions hold: (1)
$ F\cdot\1_n = \dGG $ and (2) $   \1_n'\cdot F = \dGG' $. 
    \end{definition}
\begin{definition}[Laplacian, Normalized Laplacian]
	\label{def:laplacian_normalized_laplacian}
	    Let $A\in \RR^{n \times n}$ be a symmetric matrix and let $d = A\cdot \1_n,\; D= \diag(d)$. The \emph{Laplacian of $A$} is defined as $\mathcal{L}(A) = D - A$. The \emph{normalized-Laplacian of $A$} is defined as $\mathcal{N}(A) = \Dminushalf \mathcal{L}(A) \Dminushalf = I - \Dminushalf A \Dminushalf$. The (normalized) Laplacian of an undirected graph is defined analogously using its adjacency matrix. 
	\end{definition}
\begin{definition}[Conductance]
\label{def:conductance}
	Let $G= (V,E)$ and $S \subset V$, $S\neq\emptyset$. The \emph{conductance} of the cut $(S, V\setminus {S})$, denoted by  $\Phi_{G}(S,V\setminus {S})$, is 
	\[
	\Phi_{G}(S,V\setminus {S}) = \frac{|E(S,V\setminus {S})|}{\min(\vol(S), \vol(V\setminus {S}))}.
	\]
	The conductance of $G$ is defined to be	$ \Phi(G) = \min_{S\subseteq V}\Phi_{G}(S,V\setminus {S}) $.
	\end{definition}
\begin{lemma} [Cheeger Bound~\cite{cheeger1970lower}]
	\label{lemma:normalized_laplacian_expansion}
	    Let $G=(V,E)$ be an undirected graph with $n=|V|$. Denote by $N\in\RR^{n\times n}$ its normalized Laplacian, and denote by $0=\nu_1\le\nu_2\le \ldots \le \nu_n$ the eigenvalues of the normalized Laplacian. Then, $\Phi(G)\ge \frac{\nu_2}{2}$.
	\end{lemma}

\begin{definition}[Expander, Near-Expander]
	\label{def:expander-near-expander}
	Let $G = (V,E)$. We say that $G$ 
	is a \emph{$\phi$-expander} if $\Phi(G) \ge \phi$. Let $A\subseteq V$. We say that $A$ is a \emph{near $\phi$-expander}
	in $G$ if 
	\[
	\min_{S \subseteq A}\frac{|E(S,V \setminus  S)|}{\min(\vol(S), \vol(A\setminus S))}\ge \phi.
	\]
	\end{definition}

 \looseness=-1
	That is, a near expander is allowed to use cut edges that go outside of $A$.

    When we apply these definitions to directed graphs we simply ignore the directions on the edges (and allow parallel edges).

\begin{definition} [Embedding]
    \label{def:embedding}
        Let $G=(V,E)$ be an undirected  graph. Let $F\in \RR^{V\times V}_{\ge 0}$ be a matrix (not necessarily symmetric). We say that $F$ is \emph{embeddable in $G$ with congestion $c$}, if there exists a multi-commodity flow $f$ in $G$, with $|V|$ commodities, one for each vertex (vertex $v$ is the source of its commodity), such that, simultaneously for each $(u,v)\in V\times V$, $f$ routes $F(u,v)$ units of $u$'s commodity from $u$ to $v$, and the total flow on each edge is at most $c$.\footnote{This definition requires to route $F(u,v)=F(v,u)$ both from $u$ to $v$ and from $v$ to $u$ if $F$ is symmetric.} 

        If $F$ is the weighted adjacency matrix of a graph $H$ on the same vertex set $V$, we say that $H$ is \emph{embeddable in $G$ with congestion $c$} if $F$ is embeddable in $G$ with congestion $c$.
    \end{definition}
\begin{lemma}
\label{lemma:near_expansion_and_embedding}
        Let $G,H$ be two graphs on the same vertex set $V$. Let $A\subseteq V$. Let $\alpha>0$ be a constant such that for each $v\in V$, $\dGG{v} = \alpha\cdot \dG{H}{v}$. Assume that $H$ is embeddable in $G$ with congestion $c$, and that $A$ is a near $\phi$-expander in $H$. Then, $A$ is a near $\frac{\phi}{c\alpha}$-expander in $G$.
    \end{lemma}
    \begin{proof}
        Let $S\subseteq A, \bar{S} = V\setminus S$ be a cut, and assume $\vol(S)\le\vol(A\setminus S)$. 
        In the embedding of $H$ in $G$, each edge $(u,v) \in E(H)$ corresponds to $w_H(u,v)$ units of flow, routed in $G$ from $u$ to $v$. Since each edge of $G$ carries at most $c$ units of flow we get that $|E_G(S,\bar{S})|\ge \frac{1}{c}|E_H(S,\bar{S})|$.\footnote{Recall that $|E_H(S,\bar{S})|$ is the sum of the weights of edges crossing the cut $(S, \bar{S})$ in both directions if  $H$ is directed.}

        Additionally, $\vol_G(S)=\alpha\vol_H(S)$, $\vol_G(\bar{S}) = \alpha\vol_H(\bar{S})$. In particular, $\vol_H(S)\le\vol_H(A\setminus S)$. Since $A$ is a near $\phi$-expander in $H$, we get that
        \begin{align*}
            \Phi_G(S,\bar{S}) = \frac{|E_G(S,\bar{S})|}{\vol_G(S)}\ge\frac{1}{c\alpha}\frac{|E_H(S,\bar{S})|}{\vol_H(S)} \ge \frac{\phi}{c\alpha} .
        \end{align*}
    \end{proof}

Note that if $H$ is undirected then our definition of embedding implies that we  route $w_H(u,v)$ units of flow  in $G$ from $u$ to $v$ and from $v$ to $u$. This means that 
$|E_G(S,\bar{S})|\ge \frac{2}{c}|E_H(S,\bar{S})|$ and $A$ is in fact 
near $\frac{2\phi}{c\alpha}$-expander in $G$. 
\begin{corollary}
    \label{cor:expansion_and_embedding}
        Let $G,H$ be two graphs on the same vertex set $V$. Let $\alpha>0$ be a constant such that for each $v\in V$, $\dGG{v} = \alpha\cdot \dG{H}{v}$. Assume that $H$ is embeddable in $G$ with congestion $c$, and that $H$ is a $\phi$-expander. Then, $G$ is a $\frac{\phi}{c\alpha}$-expander.
    \end{corollary}
    \begin{proof}
        This follows from Lemma \ref{lemma:near_expansion_and_embedding} by choosing $A = V$.
    \end{proof}

\section{Approximating conductance via cut-matching}\label{section:cut-matching-conductance}
In preparation for our expander decomposition algorithm  we 
give a high level overview of the conductance approximation algorithms of \cite{khandekar2009graph} and \cite{orecchia2008partitioning}. 
KRV and OSVV described their results for edge-expansion rather than conductance. In Appendices \ref{section:krv} and \ref{section:osvv}, respectively, we give a complete description and analysis of these algorithms for conductance. This translation from edge-expansion to conductance is not trivial as both the cut player, the matching player, and the analysis have to be carefully modified to take the degrees into account. 
Here we give a high level overview of the key components of these algorithms and the differences between them so one can better absorb our main algorithm in Section \ref{section:our}.

The cut-matching game of 
KRV (in the conductance setting)
works as follows.
\begin{center}
\fbox{\parbox{0.95\textwidth}{
        The Cut-Matching game for conductance, with parameters $T$ and a degree vector $\dG$, such that $\sum_{i\in V}{\dG{}{i}} = 2m$:
        \begin{itemize}
            \item The game is played on a series of graphs $G_i$. Initially, $G_0 = \emptyset$.
            \item In iteration $t$, the cut player produces two multisets of size $m$, $L_t,R_t\subseteq V$, such that each $v\in V$ appears in $L_t\cup R_t$ exactly $\dG{}{v}$ times.
            \item The matching player responds with a $\dG$-matching $\M{t}$ that only matches vertices in $L_t$ to vertices in $R_t$.
            \item We set $G_{t+1} = G_t \cup \M{t}$.
            \item The game ends at iteration $T$, and the \emph{quality} of the game is $r \defeq \Phi(G_T)$\footnotemark. 
            Note that the volume of $G_t$ increases by $2m$ from one iteration to the next.

        \end{itemize}
    }}
\end{center}

\footnotetext{Note that this definition of quality corresponds to \emph{conductance} while in the introduction we defined quality with respect to \emph{edge-expansion}.}

The difference between the results of KRV and OSVV is mainly in the cut player.  They both run the game for $T=\Theta(\log^2 n)$ iterations but  KRV's cut player achieves quality of $r \defeq \Phi(G_t) = \Omega\left(\frac{1}{\log^2 n}\right)$ whereas OSVV's achieves quality of $r = \Omega\left(\frac{1}{\log n}\right)$. Notice that the cut player produces the stated conductance of $G_T$ regardless of the matchings given by the matching player. 

\subsection{KRV's Cut-Matching Game for Conductance}
\label{section:krv-short}

The cut player implicitly maintains a $\dGG$-stochastic flow matrix (\ie, representing flow demands)
$\F{t}\in \RR^{n\times n}$, and the graph $G_t$ which is the union of the $\dGG$-matchings that it obtained so far from the matching player ($t$ is the index of the round). The flow $F_t$ and the graph $G_t$ have two crucial properties. First, we can embed $\F{t}$  in $G_t$ with  $O(1)$ congestion (See Definition \ref{def:embedding}). Second, after $T=\Theta(\log^2 n)$ rounds, with high probability, $\F{T}$  will have constant conductance.\footnote{We think about $\F{t}$ as a weighted graph on $V=[n]$. The definitions of conductance, expander and near-expander for weighted graphs are the same as Definitions \ref{def:conductance}-\ref{def:expander-near-expander} where $|E(S, V\setminus S)|$ is the sum of the weights of the edges crossing the cut.} 
\looseness=-1 
Since the degrees in $G_T$ are factor of $O(\log^2 n)$ larger than the degrees in $F_T$ (when we think of $F_T$ as a weighted graph) then it follows by Corollary \ref{cor:expansion_and_embedding} that $G_T$ is $\Omega(1/\log^2 n)$ expander. Note that the cut player is unrelated to the input graph $G$ in which we approximate the conductance. Its goal is to produce the expander $G_T$.

At the beginning, $\F{0} = \D = \diag(\dG)$, and $G_0$ is the empty graph on $V = [n]$. 
The cut player updates $\F{t}$ as follows.
It draws  a random unit vector  $r\in \RR^n$ orthogonal to $\sqrt{\dG}$ and computes the projections $u_i = \frac{1}{\dG{}{i}} \langle D^{-\frac{1}{2}}\F{t}{i}, r \rangle$.\footnote{Recall that $\F{t}{i}$ is a column vector.} The cut player computes these projections in  $O(m\log^2 n)$ time since the vector of all projections is $u \defeq D^{-1} \F{t}\Dminushalf \cdot r$ and $F_t$ is defined (see below) as a multiplication of $\Theta(\log^2 n)$ sparse matrices, each having $O(m)$ non-zero entries. The cut player sorts the projections as $u_{i_1} \le ... \le u_{i_n}$. Consider the sequence $Q = (i_1, i_1, \ldots, i_1, i_2, i_2, \ldots, i_2, \ldots, i_n, \ldots, i_n)$, where each $i_j$ appears $\dG{}{i_j}$ times. Then, $|Q|=2m$. Take $L_t\subseteq Q$ to be the multi-set containing the first $m$ elements, and $R_t=Q \setminus L_t$ to be the multi-set containing the last $m$ elements. Define $\eta\in\RR$ such that $L_t \subseteq \{i_k : u_{i_k} \le \eta\}$ and $R_t\subseteq \{i_k : u_{i_k} \ge\eta\}$. Note that a vertex can appear both in $L_t$ and in $R_t$, if $u_{i_j} = \eta$.
For a vertex $v\in V$, denote by $m_v$ the number of times $v$ appears in $L_t$, and by $\bar{m}_v$ the number of times $v$ appears in $R_t$. That is, except for (maybe) one vertex, for any $v\in V$, either $m_v = 0$ and $\bar{m}_v = \dG{}{v}$ or $m_v = \dG{}{v}$ and $\bar{m}_v = 0$. 

\looseness=-1
The cut player hands out the partition 
$L_t$, $R_t$ to the matching player who sends back a $\dGG$-matching $\M{t}$ (we think of $\M{t}$ as an $n\times n$ matrix with at most $2m$ non-zero entries that encodes the matching) between $L_t$ and $R_t$. The cut player updates its flow matrix using $M_t$ and sets $\F{{t+1}}(v) = \frac{1}{2}\F{t}(v) + \sum_{\{v,u\}\in \M{t}}{\frac{1}{2\dG{}{u}} \F{t}(u)}$ (in matrix form $\F{{t+1}} = \frac{1}{2}\left(I + \M{t}\cdot D^{-1}\right)\F{t}$).\footnote{ 
    Note that it is possible that some $u\in V$ appears in the sum $\sum_{\{v,u\}\in \M{t}}{\frac{1}{2\dG{}{u}} \F{t}(u)}$ multiple times, if $v$ is matched to $u$ multiple times in $M_t$.}
 This update keeps $\F{t}$ a $\dGG$-stochastic matrix for all $t$ (see Lemma~\ref{lemma:krv_F_t_embeddable_in_G_t}).  The cut player
also defines the graph $G_{t+1}$ as $G_{t+1} \defeq G_t\cup \M{t}$.
This completes the description of  the cut player of KRV adapted for conductance.

The matching player constructs an auxiliary flow problem on $G' \defeq G \cup \{s,t\}$, where $s$ is a new vertex which would be the source and $t$ is a new vertex which would be the sink. It adds an edge $(s,v)$ of capacity $m_v$ for each $v\in L_t$ and an edge $(v,t)$ of capacity $\bar{m}_v$ for each $v\in R_t$. The capacity of each edge $e\in G$ is set to be $c=\Theta\left(\frac{1}{\phi\log^2 n}\right)$, where $c$ is an integer. The matching player computes a maximum flow $g$ from $s$ to $t$ in this network.

If the value of $g$ is less than $m$, then the matching player uses the minimum cut in $G'$ separating the source from the sink to find a cut in $G$ of conductance $O(\phi\log^2 n)$ (see Lemma \ref{lemma:krv_3.7_small_flow}).  Otherwise, it decomposes $g$ to a set of paths, each carrying exactly one unit of flow from a vertex $u\in L_t$ to a vertex $v\in R_t$.\footnote{Note that there can be multiple flow paths between a pair of vertices $u\in L_t$ and $v\in R_t$. Furthermore, if $u\in L_t\cap R_t$  then it is possible that a path starts and ends at $u$. } Then, it
defines the $\dGG$-matching $\M{t}$ as $\M{t} = \{\{v_j, u_j\}\}_{j=1}^m$, where $v_j$ and $u_j$ are the endpoints of path $j$. We view $\M{t}$ as a symmetric $n\times n$ matrix, such that $\M{t}{v, u}$ is the number of paths between $v$ and $u$. 

The matching player connects the game to the input graph $G$. Indeed, by solving the maximum flow problems in $G$ it guarantees that the expander $G_T$ is embeddable
in $G$ with congestion $O(cT)=O(1/\phi)$.
Since the degrees of $G_T$ are a factor of $O(\log^2 n)$ larger than the degrees of $G$ and $G_T$ is $\Omega(1/\log^2 n)$ expander, we get that $G$ is a $\Omega(\phi)$-expander (see Corollary \ref{cor:expansion_and_embedding}).
The algorithm is summarized in Algorithm \ref{algo:krv_cut_matching} in Appendix \ref{appendix:krv}.
The following theorem summarizes the properties of this algorithm. 
\begin{theorem}[\cite{khandekar2009graph}'s cut-matching game for conductance]
    \label{theorem:krv}
        Given a graph $G$ and a parameter $\phi > 0$, there exists a randomized algorithm, whose running time is dominated by computing a polylogarithmic number of maximum flow problems, that either
        \begin{enumerate}
            \item Certifies that $\Phi(G)= \Omega(\phi)$ with high probability; or
            \item Finds a cut $(S, V\setminus {S})$ in $G$ whose conductance is $\Phi_G(S,V\setminus {S}) = O(\phi\log^2 n)$.
        \end{enumerate}    
\end{theorem}

\looseness=-1
If the matching player finds a sparse cut in any iteration then we terminate with Case (2). On the other hand, if  the game continues for
$T=O(\log^2 n)$ rounds then since 
the  cut player can embed $F_T$ in
$G_T$ and the matching player can embed $G_T$ in $G$,
and since $F_t$ is an expander, then we get Case (1).

The running time of the cut player is $O(m\log^4 n)$. The matching player solves $O(\log^2 n)$ maximum flow problems.    By using the most recent maximum flow algorithm of~\cite{chen2022maximum}, we get the matching player to run  in $O\left( m^{1+ o(1)} \right)$  time. 
Alternatively, we can adapt the cut-matching game, and use a version of the \emph{\unitFlow} algorithm described in Appendix \ref{appendix:unit_flow_matching} to get a running time of $\tilde{O}(\frac{m}{\phi})$ for the matching player. We can also get $\tilde{O}(m)$ running time using the recent result \cite{LNPSsoda13}. 

The key part of the analysis is to show that 
$F_T$ is indeed an $\Omega(1)$-expander with high probability for any choice of $\dGG$-matchings of the matching player. To this end, we keep track of the progress of the cut player using the potential function
\[\psi(t) = \sum_{i\in V}{\sum_{j\in V}{\frac{1}{\dG{}{i}\cdot \dG{}{j}}\left(\F{t}{i,j}-\frac{\dG{}{i} \dG{}{j}}{2m}\right)^2}}= \norm{D^{-\frac{1}{2}} \F{t} D^{-\frac{1}{2}}-\frac{1}{2m}\sqrt{\dG}\sqrt{\dG'}}^2_F
\]
 where the matrix norm which we use here is the Frobenius norm (sum of the squares of the entries). 
 This potential represents the distance between the normalized flow matrix $\bar{F_t} = \Dminushalf \F{t} \Dminushalf$ and the (normalized) uniform random walk distribution $\Dminushalf \cdot (\dGG \dGG'/2m) \cdot \Dminushalf$.

 Let  $P = I - \frac{1}{2m}\sqrt{\dG}\sqrt{\dG'}$ be the projection matrix on the orthogonal complement of the span of the vector $\sqrt{\dG}$, then we can also write this potential as 
      \begin{align*}
        \psi(t) =
        \norm{\bar{\F{t}} P}^2_F = 
        \tr \left((\bar{\F{t}}  P)(\bar{\F{t}} P)' \right) 
        = \tr( \bar{\F{t}} P^2 \bar{\F{t}'}) = \tr(  P \bar{\F{t}'} \bar{\F{t}}).
    \end{align*}
The first equality holds by Lemma~\ref{lemma:normalized_commutes_with_P} and the last equality is due to Fact~\ref{fact:book_trace_identities} (and that $P^2=P$ as a projection matrix).

The crux of the proof is to show that after $T$ rounds, with high probability, this potential is smaller than $1/(16m^2)$ which implies that for every pair of vertices $u$ and $v$, $F_T(u,v)\ge d(v)d(u)/(4m)$.
From this we get a lower bound of $1/4$ on the conductance of every cut. 

The full details of the analysis appear in Appendix \ref{section:krv}. 

 \subsection{OSVV's Cut-Matching Game for Conductance}\label{section:osvv-short}

The cut player of OSVV for conductance also maintains (implicitly) a flow matrix $F_t$ and the union $G_t$ of the $\dGG$-matchings it got from the matching player.
Let $P = I - \frac{1}{2m}\sqrt{\dG} \sqrt{\dG'}$ be the projection to the subspace orthogonal to  $\sqrt{\dG}$ as before  (hence $P^2 = P$). 
Let $\delta = \Theta(\log{n})$ be a power of $2$. 
Here the matrix  $\W{t} = (P\normalized{\F{t}}P)^{\delta}$
takes the role of $D^{-\frac{1}{2}}F_tD^{-\frac{1}{2}}$ from the cut player of Section \ref{section:krv-short}.

In round $t$ the cut player samples a random unit vector $r\in \RR^n$, computes the projections $u_i = \frac{1}{\sqrt{\dG{}{i}}}\langle \W{t}{i}, r \rangle$, and 
defines $L_t$ and $R_t$ based on these projections as in the previous section.\footnote{Computing these projections takes $O(m\log^3 n)$ time since $F_t$ is a multiplication of $\Theta(\log^2 n)$ sparse matrices, each with $O(m)$ non-zero entries. Therefore $W_t$ is a multiplication of $\Theta(\log^3 n)$ matrices, each of which is either $P$ or a sparse matrix.}
Note that here, $r$ does not need to be orthogonal to $\sqrt{\dG}$. 

Then it gets from the matching player a $\dGG$-matching $\M{t}$  between
$L_t$ and $R_t$. It defines $\N{t} = \frac{\delta - 1}{\delta}D + \frac{1}{\delta}\M{t}$ and updates the flow to be $\F{{t+1}} = \N{t} D^{-1} \F{t} D^{-1} \N{t}$. If we think of $F_t$ as  a random walk then $D^{-1} \N{t}$ is a ``lazy step'' that we add before and after the walk $F_t$ to get $F_{t+1}$.
It holds that $F_{t+1}$ is $\dGG$-stochastic and moreover that for all rounds $t$, $\F{t}$ is embeddable in $G_t$ with congestion $\frac{4}{\delta}=O(1/\log n)$. Note that here we embed $F_t$ in $G_t$ with smaller congestion  than in Section \ref{section:krv-short}. We can still prove, however, that 
$F_T$ for $T=O(\log^2 n)$ is a $\Omega(1)$-expander with high probability, and therefore, $G_T$ is a $\Omega(1/\log n)$-expander.

The matching player solves the same flow problem as in Section 
\ref{section:krv-short}
but with an integer capacity  value of $c=\Theta(\frac{1}{\phi\log n})$ on the edges of $G$.  If the value of maximum flow is less than $m$ then it finds a cut of   conductance $O(\phi\log n)$, and otherwise it returns the matching that it derives from a decomposition of the flow into paths.
The matching player guarantees that the expander $G_T$ is embeddable
in $G$ with congestion $O(cT)=O(\log n /\phi)$.
Since the degrees of $G_T$ are larger by  a factor of $O(\log^2 n)$ than the degrees of $G$ and $G_T$ is $\Omega(1/\log n)$-expander, we get that $G$ is a $\Omega(\phi)$-expander (see Lemma \ref{lemma:near_expansion_and_embedding}).
The following theorem summarizes the properties of this algorithm. 

\begin{theorem}[\cite{orecchia2008partitioning}'s cut-matching game for conductance]
    \label{theorem:osvv}
        Given a graph $G$ and a parameter $\phi > 0$, there exists a randomized algorithm, whose running time is dominated by computing a polylogarithmic number of maximum flow problems, that either
        \begin{enumerate}
            \item Certifies that $\Phi(G)= \Omega(\phi)$ with high probability; or
            \item Finds a cut $(S, V\setminus {S})$ in $G$ whose conductance is $\Phi_G(S,V\setminus {S}) = O(\phi\log n)$.
        \end{enumerate}    
    \end{theorem}

The running time of the cut player is dominated by computing the projections in $O(m\log^3 n)$ time per iteration, for a total of  $O(m\log^5 n)$ time. 
The matching player solves $O(\log^2 n)$ maximum flow problems. 
Again, we can modify the algorithm so that its running time is $\tilde{O}(\frac{m}{\phi})$ or $\tilde{O}(m)$, similarly to the previous subsection.

As in Section \ref{section:krv-short},
the key part of the analysis is to show that 
$F_T$ is indeed an $\Omega(1)$-expander with high probability for any choice of $\dGG$-matchings of the matching player. Here we keep track of the progress of the cut player using the potential function
\begin{align*}
    \psi(t) &= \norm{(\normalized{\F{t}})^\delta - \frac{1}{2m}\sqrt{\dG} \sqrt{\dG'}}^2_F.
\end{align*}

Recall that $\W{t} = (P\normalized{\F{t}}P)^{\delta}$, so we can rewrite the potential function as
    \begin{align*}
        \psi(t) 
        = \norm{(\normalized{\F{t}})^\delta P}^2_F =
        \tr(P (\normalized{\F{t}})^{2\delta} P) \numeq{4} \tr( (P\normalized{\F{t}}P)^{2\delta}) = \tr(W^2_t) \ ,
    \end{align*}
where equality $(4)$ follows from Lemma~\ref{lemma:normalized_commutes_with_P}, the fact that $\F{t}$ is $\dG$-stochastic (Lemma \ref{lemma:osvv_basic_properties}(1)), and the fact that $P^2 = P$.
A careful argument shows that after $T=O(\log^2 n)$ iterations, $\psi(T)\le 1/n$ with high probability. 
From this we deduce that the second smallest eigenvalue of 
the normalized Laplacian of $F_T$  is at least $1/2$ and then by Cheeger's inequality~\cite{cheeger1970lower} we get that $\Phi(\F{T}) = \Omega(1)$.

The full details of the analysis appear in Appendix \ref{section:osvv}.

\section{Expander decomposition via spectral Cut-Matching}
    \label{section:our-short}
To put our main result in context we first show how SW
\cite{saranurak2019expander}
modified the cut-matching game of KRV for their expander decomposition algorithm.

\subsection{SW's Cut-Matching for expander decomposition}\label{section:SW-overview}
\looseness = -1
SW take a recursive approach to find an expander decomposition. One can use the cut-matching game to find a sparse cut, but if the cut is unbalanced, we want to avoid recursing on the large side.

In order to refrain from recursing on the large side of the cut, 
SW changed the cut-matching game as follows.
The cut player 
  now maintains
a partition of $V$ into a small set $R$ and  a large set $A = V \setminus R$, where initially $R = \emptyset$ and  $A = V$. In each iteration, the cut and the matching player interact as follows.
\begin{itemize}
        \item The cut player computes two disjoint sets $A^l,A^r\subseteq A$ such that $|A^l| \le n/8$ and $|A^r| \ge n/2$.

        \item The matching player  returns a partition  $(S, A\setminus S)$ of $A$, which may be empty ($S=\emptyset$), and a matching of $A^l\setminus S$ to a subset of $A^r\setminus S$.
\end{itemize}

\looseness=-1
The cut player computes the sets $A^l$ and $A^r$  by projecting the rows of a  \emph{flow-matrix} $F$ that it maintains (as in KRV) onto a random unit vector $r$, and applying a result by \cite{racke2014computing} to generate the sets $A^l$ and $A^r$ from the values of the projections. For the matching player, SW use a flow-based algorithm which simultaneously gives a  cut $(S,A\setminus S)$ of conductance $O(\phi \log^2 n)$ of $G[A]$, and a matching of the vertices left in $A^l\setminus S$ to vertices of $A^r\setminus S$ ($S$ may be empty when $G[A]$ has conductance $\ge \phi$). If the matching player found a sparse cut $(S,A\setminus S)$ then the cut player updates the partition $(R,A)$ of $V$ by moving $S$ from $A$ to $R$.

The game terminates either when 
the volume of $R$ gets larger than $\Omega(m/\log^2 n)$ or after 
$O(\log^2 n)$ rounds.
In the latter case, SW proved that the remaining set $A$ (which is large) is  a near $\phi$-expander in $G$ with high probability (see Definition \ref{def:expander-near-expander}).

\looseness=-1
To prove that after $T=\Theta(\log^2 n)$ iterations, the remaining set $A$ is a near $\phi$-expander, SW essentially followed the footsteps of KRV and used a similar potential. The argument is more complicated since they have to take the shrinkage of $A$  into account.
SW did not use a version of KRV suitable to conductance as we give in Appendix \ref{appendix:krv}. Therefore, they had to modify the graph by adding a split node for each edge, essentially reducing conductance to edge-expansion, a reduction that further complicated the analysis.
The following theorem summarizes the properties of the cut-matching game of SW.
\begin{theorem}[Theorem $2.2$ of~\cite{saranurak2019expander}]
	\label{theorem:2.2-SW}
        Given a graph $G=(V,E)$ of $m$ edges and a parameter $0< \phi < 1/\log^2n$,\footnote{The theorem is trivial if $\phi \ge \frac{1}{\log^2 n}$, because any cut $(A, V\setminus {A})$ has conductance $\Phi_G(A, V\setminus {A}) \le 1$.
        We can therefore assume that $\phi < \frac{1}{\log^2 n}$.} there exists a randomized algorithm, called ``the cut-matching step'', which takes $O\left( (m\log n)/\phi\right)$ time and terminates in one of the following three cases:
        \begin{enumerate}
            \item We certify that $G$ has conductance $\Phi(G)=\Omega(\phi)$ with high probability.
            \item We find a cut $(R,A)$ of $G$ of conductance $\Phi_G(R,A)=O(\phi\log^2 n)$, and $\vol(R), \vol(A)$ are both $\Omega(\frac{m}{\log^2 n})$, \ie, we find a relatively balanced low conductance cut.
            \item We find a cut $(R,A)$ of $G$ with $\Phi_G(R,A)\le c_0 \phi \log^2 n$ for some constant $c_0$, and $\vol(R)\le \frac{m}{10c_0\log^2 n}$, and with high probability $A$ is a near $\phi$-expander in $G$.
        \end{enumerate}
    \end{theorem}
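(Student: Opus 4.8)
This is Theorem 2.2 of~\cite{saranurak2019expander}, and the plan is to reproduce its argument: a cut-matching game played on the split nodes of the subdivision graph $G_E$ (Definition~\ref{def:subdivision_graph}) with a KRV-style cut player~\cite{khandekar2009graph}, run for $T=\Theta(\log^2 m)$ rounds while allowing the set of active split nodes to shrink. I would maintain an active set $A_t\subseteq X_E$ with $A_0=X_E$ and a game graph $H_t$ on the vertex set $X_E$ with $H_0$ edgeless and $H_t=H_{t-1}\cup M_t$. At round $t$ the cut player picks, as in KRV, a random unit vector orthogonal to $\1$, pushes it through the random-walk operator of the previously returned matchings restricted to $A_{t-1}$, sorts the coordinates, and splits $A_{t-1}$ into $(S_t,\bar S_t)$ with $|S_t|\le|A_{t-1}|/2$.

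The matching player is implemented by a maximum flow on $G_E$: attach a super-source to every node of $S_t$ by a unit-capacity arc, attach every node of $\bar S_t$ to a super-sink by a unit-capacity arc, and give every edge of $G_E$ capacity $c\defeq\Theta(1/(\phi\log^2 m))$. If the max flow saturates all source arcs I would decompose it into a fractional perfect matching $M_t$ of $S_t$ into $\bar S_t$ that embeds in $G_E$ with congestion $c$, set $A_t\defeq A_{t-1}$ and continue. If it does not, the minimum cut exposes a set $R_t\subseteq A_{t-1}$ whose boundary in $G_E$ has at most $|R_t|/c$ edges; by Lemma~\ref{lemma:G_E_conductance} this is a cut in $G$ of conductance $O(1/c)=O(\phi\log^2 m)$. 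If the $G$-volume of its small side is $\Omega(m/\log^2 m)$ I stop and output it, which is case~(2); otherwise I set $A_t\defeq A_{t-1}\setminus R_t$ and continue.

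For the analysis I would track the KRV potential $\psi(t)$ of the current embedding, re-centered to its mean over $A_t$. The two ingredients are: (i) adding a matching across a KRV bisection contracts $\psi$ by a $1-\Omega(1/\log m)$ factor in expectation; and (ii) deleting the block $R_t$ never increases $\psi$. Since $\psi(0)=\Theta(m)$, after $T=\Theta(\log^2 m)$ rounds $\EE[\psi(T)]<m^{-2}$, so w.h.p.\ $\psi(T)<m^{-1}$, and the standard KRV argument then yields that $H_T$ restricted to $A_T$ is a strong near $\Omega(1)$-edge-expander on $X_E$ (the near form is needed because $A_T$ may be a proper subset of $X_E$). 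To finish, $H_T$ is routable in $G_E$ with congestion $cT=O(1/\phi)$: if no vertex was ever deleted then $A_T=X_E$, and Lemma~\ref{lemma:expansion_and_embedding} followed by Lemma~\ref{lemma:G_E_conductance}(3) upgrades the $\Omega(1)$-edge-expansion of $H_T$ to $\Phi(G)\ge\phi$ for a suitable constant hidden in $c$, which is case~(1); otherwise, writing $R\defeq X_E\setminus A_T=\bigcup_t R_t$, the discarded side has $G$-volume $\le m/(10c_0\log^2 m)$ (else case~(2) would already have fired), aggregating the per-round cuts shows its conductance in $G$ is $\le c_0\phi\log^2 m$, and Lemmas~\ref{lemma:expansion_and_embedding} and~\ref{lemma:G_E_conductance}(3) certify the surviving side as a near $\phi$-expander, which is case~(3). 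The running time is $T=O(\log^2 m)$ max-flow computations on the $O(m)$-edge graph $G_E$ with capacities $O(1/\phi)$, plus $O(m\log m)$ bookkeeping per round using dynamic trees to extract the matchings; with a careful local/blocking-flow implementation this totals $O((m\log m)/\phi)$, and I would refer to~\cite{saranurak2019expander} for the exact accounting.

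The hardest part will be coping with the loss of vertices: showing the potential still contracts when the matching is only partial, that re-centering after each deletion does not hurt, and --- most delicately --- that the union of the discarded blocks always forms a single cut of conductance $O(\phi\log^2 m)$ whose removal leaves a side that is certifiably a near $\phi$-expander. This interplay between the flow/cut side and the potential/spectral side is exactly the technical core of~\cite{saranurak2019expander}, and the place our later sections must generalize to an OSVV-style cut player.
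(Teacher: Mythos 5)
This theorem is cited from~\cite{saranurak2019expander}; the paper does not prove it, only sketches SW's approach in Section~\ref{section:SW} and then develops an improved analogue (Theorem~\ref{theorem:our_cut_matching}). So there is no ``paper's own proof'' to compare against, but your reconstruction can be checked against the description of SW's machinery in Section~\ref{section:SW} and the parallel construction in Sections~\ref{section:proof_of_cut_matching}--\ref{section:proof_of_our_cut_matching} and Appendix~\ref{appendix:unit_flow_matching}.

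Your high-level skeleton --- cut-matching on $G_E$, a shrinking active set $A_t\subseteq X_E$, a re-centered KRV potential, routing the game graph back into $G_E$, and dispatching into the three cases --- is faithful to SW. But there are two intertwined gaps that are precisely the ``hardest part'' you flag and then do not resolve. First, you use a KRV bisection $(S_t,\bar S_t)$ of $A_{t-1}$ and an \emph{exact} max flow, so that each round produces either a fractional perfect matching or a cut, but never both. SW instead apply Lemma~\ref{lemma:RST} (from R\"acke et al.) to extract an \emph{unbalanced} pair $A^l,A^r$ with $|A^l|\le|A_{t-1}\cap X_E|/8$, $|A^r|\ge|A_{t-1}\cap X_E|/2$ and the crucial guarantee that $A^l$ carries a constant fraction of the projected potential mass; they then run a \emph{bounded-height push-relabel} (Unit-Flow, the analogue of Lemma~\ref{lemma:unit_flow}) which in every round simultaneously returns a sparse cut $S_t$ (possibly empty) \emph{and} a feasible flow on $G-S_t$, i.e.\ a partial matching of $A^l\setminus S_t$ into $A^r\setminus S_t$. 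The potential reduction argument (cf.\ Lemma~\ref{lemma:potential_reduction}) needs \emph{both} contributions in the same round: the matched pairs contribute $\sum_{(x_e,x_h)\in M_{t+1}}\|W_t(x_e)-W_t(x_h)\|^2$ and the deleted vertices contribute $\sum_{x_e\in S_t}\|W_t(x_e)\|^2$, and it is only their sum, via the RST properties of $A^l$, that yields the $1-\Omega(1/\log m)$ factor. Under your exact-flow dichotomy, an infeasible round produces a cut but no matching, and that cut may be tiny and carry negligible potential mass, so the $\Theta(\log^2 m)$ round bound is not justified.

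Second, your derivation of the conductance bound for the removed side from the exact min cut (``boundary has at most $|R_t|/c$ edges'') is not correct as stated: the min cut value counts unit-capacity source/sink arcs as well as capacity-$c$ edge arcs, and no direct bound of that form falls out. In SW (and in Lemma~\ref{lemma:unit_flow_cut} here), the conductance bound for $S_t$ comes from the bounded-label sweep argument in the push-relabel, not from min-cut duality. The bounded height $h=O(c\log m)$ is also what makes a single round run in $O(m/\phi)$ rather than the time of an exact max flow. In short: to close your sketch you should replace the bisection with Lemma~\ref{lemma:RST}, replace exact max flow with the Unit-Flow subroutine of Lemma~\ref{lemma:unit_flow}, and rewrite the potential drop so that both the partial matching and the discarded vertices contribute in every round, as in Lemma~\ref{lemma:potential_reduction}.
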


SW derived an expander decomposition algorithm from this modified cut-matching game by recursing on both sides of the cut only if Case (2) occurs. In Case (3) they find a large subset $B\subseteq A$ which is an expander (in what they called the \emph{trimming step}), add $A\setminus B$ to $R$ and recur only on $R$.
 The main result of SW is as follows.
\begin{theorem}[Theorem~$1.2$ of~\cite{saranurak2019expander}]
    \label{theorem:SW-main}
    Given a graph $G = (V,E)$ of $m$ edges and a parameter $\phi$, there is a randomized algorithm that with high probability finds a partitioning of $V$ into clusters $V_1,\ldots, V_k$ such that $\forall i: \Phi_{G\{V_i\}}= \Omega(\phi)$ and there are at most $O(\phi m \log^3{n})$ inter cluster edges.\footnote{$G\{V_i\}$ is defined in Definition \ref{def:subgraph_with_loops}.} The running time of the algorithm is $O(m\log^4{n} / \phi)$. 
    \end{theorem}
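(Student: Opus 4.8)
The plan is to turn the recursive template stated just after Theorem~\ref{theorem:2.2-SW} into a concrete algorithm and to analyze it by induction on the volume of the current subgraph. First reduce to connected inputs by decomposing each connected component separately, and run everything with expansion parameter $\Theta(\phi)$ so that the constant factor lost when a near-expander is trimmed down to an honest expander is compensated up front. On a connected graph $H$ with $m'$ edges, invoke the cut-matching step of Theorem~\ref{theorem:2.2-SW}. If it returns Case~$1$, output the single cluster $V(H)$. If it returns Case~$2$, with a relatively balanced cut $(A,\bar A)$ of conductance $O(\phi\log^2 m')$, recurse on $H\{A\}$ and on $H\{\bar A\}$ and return the concatenation of the two partitions. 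If it returns Case~$3$, with $(A,\bar A)$ in which $A$ is a near $\phi$-expander and $\vol_H(\bar A)=O(m'/\log^2 m')$, apply the trimming procedure of Theorem~\ref{theorem:SW19_2.1} to extract $A'\subseteq A$ such that $H\{A'\}$ is an honest $\Omega(\phi)$-expander and only $O(\vol_H(\bar A))$ edges leave $A'$; emit $A'$ as a cluster and recurse on $H\{V(H)\setminus A'\}$.

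Correctness is then almost immediate. For termination note $\vol(H\{A\})+\vol(H\{\bar A\})=\vol(H)$ with both terms positive in Case~$2$, and $\vol(H\{V(H)\setminus A'\})<\vol(H)$ in Case~$3$, so the recursion bottoms out; the emitted clusters partition $V$ by construction. Each cluster is emitted either in Case~$1$, where Theorem~\ref{theorem:2.2-SW} certifies that the current subgraph has conductance $\ge\phi$, or in Case~$3$, where Theorem~\ref{theorem:SW19_2.1} certifies that $H\{A'\}$ has conductance $\ge\phi$ after the up-front rescaling. The one thing to check is the routine fact that the ``induced subgraph with degree-preserving self-loops'' operation composes, i.e.\ $(G\{B\})\{C\}=G\{C\}$ whenever $C\subseteq B$, so that a cluster $V_i$ produced deep inside a subproblem $G\{B\}$ indeed satisfies $\Phi_{G\{V_i\}}\ge\phi$, which is exactly the guarantee of Theorem~\ref{theorem:SW-main}.

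The two quantitative bounds follow from a charging argument. Every cut the algorithm commits to is \emph{sparse}: its conductance is $O(\phi\log^2 m')$, so it removes at most $O(\phi\log^2 m')\cdot\vol_H(\bar A)$ edges on a subproblem with $m'$ edges, plus a further $O(\vol_H(\bar A))$ edges for the trimming in Case~$3$; and whenever the cut is unbalanced its small side $\bar A$ has volume $O(m'/\log^2 m')$, so in that case it removes only $O(\phi m')$ edges while its larger side loses a $(1-\Omega(1/\log^2 m'))$ fraction of its volume and its smaller side shrinks by a $\Theta(\log^2 m')$ factor. Writing $I(\cdot)$ for the number of inter-cluster edges, one obtains a recurrence of the shape $I(m')\le O(\phi\log^2 m')\vol_H(\bar A)+I(m_1)+I(m_2)$ with $m_1+m_2\le m'+O(\vol_H(\bar A))$ in Case~$2$, and $I(m')\le O(\phi m')+I(m_1)$ with $m_1=O(m'/\log^2 m')$ in Case~$3$; since the subproblem volumes at any fixed recursion depth sum to $O(m)$, unrolling yields the stated bound $O(\phi m\log^3 m)$. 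Likewise, charging to every subproblem the cost $O((m'\log m')/\phi)$ of its cut-matching step (Theorem~\ref{theorem:2.2-SW}) and the cost of its trimming (Theorem~\ref{theorem:SW19_2.1}), and summing over the whole recursion tree, yields the claimed running time $O((m\log^4 m)/\phi)$.

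The whole difficulty of the assembly is concentrated in that last accounting step. Because Case~$2$ of Theorem~\ref{theorem:2.2-SW} gives only a \emph{relatively} balanced cut (each side of volume $\Omega(m'/\log^2 m')$ rather than a constant fraction of $m'$), the recursion tree can be polylogarithmically deep, and a naive bound on either the removed edges or the accumulated work overshoots the target by extra logarithmic factors. The point is to trade the number of times a given type of cut can occur along a root-to-leaf path against the number of edges it removes and the volume it destroys — a genuinely balanced cut removes $O(\phi m'\log^2 m')$ edges but can occur only $O(\log m)$ times along a path, whereas a mildly unbalanced or a trimming cut removes only $O(\phi m')$ edges (and drives one side's volume down by a $\Theta(\log^2 m')$ factor) but may recur polylogarithmically many times — and to verify that the two contributions telescope to $O(\phi m\log^3 m)$ inter-cluster edges and $O((m\log^4 m)/\phi)$ time. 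Everything outside this charging argument — termination, the partition property, and the per-cluster conductance bound — is immediate from Theorems~\ref{theorem:2.2-SW} and~\ref{theorem:SW19_2.1}.
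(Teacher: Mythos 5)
Your algorithm is the right one (it is exactly Algorithm~\ref{algo:expander_decomposition}), but the accounting step — which you yourself identify as ``the whole difficulty'' — is not carried out, and the explicit recurrence you write down for Case~3 is wrong. After trimming, the remaining volume does \emph{not} drop by a $\Theta(\log^2 m')$ factor. From Theorem~\ref{theorem:2.2-SW} you only have $\vol(\bar A)\le m'/(10 c_0\log^2 m')$ and $|E(A,\bar A)|\le c_0\phi\log^2 m'\cdot\vol(\bar A)\le \phi m'/10$; Theorem~\ref{theorem:SW19_2.1} then gives $\vol(A')\ge\vol(A)-\tfrac{4}{\phi}|E(A,\bar A)|\ge\vol(A)-0.4m'$, so $\vol(V\setminus A')\le\vol(\bar A)+0.4m'$, which is only bounded by a constant fraction of $\vol(H)=2m'$. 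In other words $m_1\le (1-\Omega(1))\,m'$, not $O(m'/\log^2 m')$. The parallel claim in your ``trade-off'' paragraph (``a trimming cut \dots drives one side's volume down by a $\Theta(\log^2 m')$ factor'') is likewise false, and the recursion depth you need is $O(\log^3 m)$, driven by the possibly very lopsided Case~2 cuts, not by Case~3.

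The clean way to obtain the inter-cluster-edge bound — and the way the paper does it for Theorem~\ref{theorem:expander_decomposition_new}, which is a copy of SW's argument — is to charge per vertex/edge, not per cut type: every cut the algorithm commits to has conductance $O(\phi\log^2 m)$ (this holds for Case~2 directly and for Case~3 because $|E(A',\bar{A'})|\le 2|E(A,\bar A)|$ while $\min(\vol(A'),\vol(\bar{A'}))\ge\vol(\bar A)$), so the number of cut edges is at most $O(\phi\log^2 m)$ times the volume of the \emph{smaller} side; whenever a vertex lands on the smaller side, the volume of its component at least halves, so it is on the smaller side of a cut at most $\log m$ times over the entire recursion. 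Summing, each unit of degree is charged $O(\phi\log^2 m)\cdot\log m=O(\phi\log^3 m)$, giving $O(\phi m\log^3 m)$ total. This is independent of the recursion depth and sidesteps the case analysis you attempt. (Your running-time accounting is fine once you use depth $O(\log^3 m)$: each level costs $O((m\log m)/\phi)$ for cut-matching plus $O((m\log m)/\phi)$ for trimming, totalling $O((m\log^4 m)/\phi)$.) So the proposal has the right skeleton, but the missing per-edge charging lemma is precisely the nontrivial content of the proof, and the case-based recurrence as stated is both incorrect in Case~3 and not actually unrolled.
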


\subsection{Our contribution: Spectral cut player for expander decomposition}
\label{section:our}
SW left open the question if one can improve their expander decomposition algorithm using tools similar to the ones that allowed OSVV to improve the edge-expansion approximation algorithm of KRV.
We give a positive answer to this question. Specifically, we improve the cut-matching game of SW and derive the following improved version of Theorem \ref{theorem:2.2-SW}.
\begin{theorem}
    \label{theorem:our}
    \label{theorem:our_cut_matching}
    Given a graph $G=(V,E)$ of $m$ edges and a parameter $0<\phi<\frac{1}{\log n}$,\footnote{         The theorem is trivial if $\phi \ge \frac{1}{\log n}$, because any cut $(A, V\setminus {A})$ has conductance $\Phi_G(A, V\setminus {A}) \le 1$.
        We can therefore assume that $\phi < \frac{1}{\log n}$.}
        there exists a randomized algorithm which takes $O\left(m\log^5 n + \frac{m\log^2 n}{\phi}\right)$ time and must end in one of the following three cases:
        \begin{enumerate}
            \item We certify that $G$ has conductance $\Phi(G)=\Omega(\phi)$ with high probability. 
            \item We find a cut $(R,A)$ in $G$ of conductance $\Phi_G(R,A)=O(\phi\log n)$, and $\vol(R), \vol(A)$ are both $\Omega(\frac{m}{\log n})$, i.e, we find a relatively balanced low conductance cut.
            \item We find a cut $(R,A)$ with $\Phi_G(R,A)\le c_0 \phi \log n$ for some constant $c_0$, and $\vol(R)\le \frac{m}{10c_0\log n}$, and with high probability $A$ is a near $\Omega(\phi)$-expander in $G$.
        \end{enumerate}
    \end{theorem}

The proof of Theorem  \ref{theorem:our_cut_matching} is given in Section \ref{section:proof-main-theorem}. 
Theorem \ref{theorem:our_cut_matching} implies the following theorem.
\begin{theorem}
    \label{theorem:our_expander_decomposition}
        Given a graph $G = (V, E)$ of $m$ edges and a parameter $\phi$, there is a randomized algorithm that with high probability finds a partition of $V$ into clusters $V_1,..., V_k$ such that $\forall i: \Phi_{G\{V_i\}} = \Omega(\phi)$ and $\sum_{i}{|E(V_i, V\setminus {V_i})|} = O(\phi m \log^2 n)$. The running time of the algorithm is $O(m\log^7 n + \frac{m \log^4  n}{\phi})$.\footnote{Note that if $\phi \le \frac{1}{\log^3 n}$, then the running time matches the running time of SW in Theorem \ref{theorem:SW-main}. In case that $\phi\ge \frac{1}{\log^3 n}$, we get a slightly worse running time of $O(m\log^7 n)$ instead of $O(\frac{m \log^4 n}{\phi})$.}
    \end{theorem}

    The derivation of Theorem \ref{theorem:our_expander_decomposition}
    using Theorem \ref{theorem:our_cut_matching} uses the same methods as in SW, and we include the proof for completeness in Appendix \ref{appendix:unit_flow_trimming}.

    \begin{remark}
    \label{remark:fair_cut_global}
        We can implement Theorems \ref{theorem:our_cut_matching} and \ref{theorem:our_expander_decomposition} in $\tilde{O}(m)$ time, by replacing \emph{\unitFlow}~(Lemma \ref{lemma:unit_flow}) and Theorem \ref{theorem:SW19_2.1} with the algorithm of \cite[Section 8]{LNPSsoda13}.
    \end{remark}

To get Theorem \ref{theorem:our_cut_matching}
we use the cut player and matching player described in the following sections. Their summary is also presented in Algorithm \ref{algo:our-update}.

\subsection{Cut player}
    \label{section:our_update_F}
    Like in Section \ref{section:cut-matching-conductance}, we consider a $\dG$-stochastic flow matrix $\F{t}\in \RR^{n\times n}$, and a series of graphs $G_t$. $\F{0}$ is initialized as $\F{0} = D \defeq \diag(\dG)$, and $G_0$ is initialized as the empty graph on $V = [n]$.  Here the cut player also maintains a low conductance cut $A_t\subseteq V, R_t = V\setminus A_t$, such that after $T=\Theta(\log^2 n)$ rounds, with high probability, $A_T$ is a near expander in $G_T$. At the beginning, $A_0 = V$ and $R_0 = \emptyset$.

    Since the new cut-matching game consists of iteratively shrinking the domain $\A{t}\subseteq V$, we start by generalizing our matrices from Section \ref{section:cut-matching-conductance} to this context of shrinking domain.
\begin{definition}[$\I{t},\dG{t},\D{t},\Pmat{t},\vol_t$]
\label{def:blocked_variables}
        We define the following variables:\footnote{These variables are the analogs of $I, \dG, \D, \vol(G)$ and $\Pmat$ (respectively) from Section \ref{section:osvv-short} in $G[A_t]$.}
        \begin{enumerate}
            \item $\I{t}\in\RR^{n\times n}$ is the diagonal $0/1$ matrix that has $1$'s on the diagonal entries corresponding to $A_t$.
            \item $\dG{t} = \I{t} \cdot \dG \in \RR^n$, i.e the projection of $\dG$ onto $A_t$.
            \item $\D{t} = \I{t} \cdot D = \diag(\dG{t}) \in \RR^{n\times n}$.
            \item $\vol_t = \vol_G(A_t)$.
            \item $\Pmat{t} = \I{t} - \frac{1}{\vol_t}\sqrt{\dG{t}}\sqrt{\dG{t}'} \in \RR^{n\times n}$.
        \end{enumerate}
    \end{definition}

    We define the matrix  $\W{t} = (\Pmat{t}\normalized{\F{t}}\Pmat{t})^{\delta}$, where 
$\delta = \Theta(\log{n})$ is set in the proof of Lemma \ref{lemma:our_F_expander},  that plays a crucial role in  this section.
  This definition is similar to the definition of $W_t$ in Section \ref{section:osvv-short}, but with $\Pmat{t}$ instead of $\Pmat$. This makes us ``focus'' only on the remaining vertices $A_t$, as any row/column of $\W{t}$ corresponding to a vertex $v\in R_t$ is zero. 
    The matrix $W_t$ is used in this section to define the projections that our algorithm uses to update $F_t$. It is also used in Section \ref{section:our_A_expander_in_F} to define the potential that measures how far is the remaining part of the graph from a  near expander. In particular, we show in Lemma~\ref{lemma:our_F_expander} and Corollary \ref{cor:our_G_expander} that if $W_T^2$ has small eigenvalues (which will be the case when the potential is small) then $A_T$ is a near-expander in $G_T$.

    Let $r\in \RR^n$ be a random unit vector.
    Consider the projections $u_i = \frac{1}{\sqrt{\dG{}{i}}}\langle \W{t}{i}, r \rangle$, for $i\in A_t$. Note that because $\Pmat{t} \sqrt{\dG{t}} = 0$, and $\W{t}$ is symmetric:
    {\small}
    \begin{align*}
        \sum_{i\in A_t}{d(i) u_i} = \sum_{i\in A_t}{\sqrt{d(i)}\left\langle\W{t}{i}, r\right\rangle} = \left\langle \sum_{i\in A_t}{\sqrt{d(i)}\W{t}{i}}, r \right\rangle = \left\langle \W{t}\sqrt{\dG_{t}}, r \right\rangle = 
        \left\langle 0, r \right\rangle = 
        0
    \end{align*}

    We use the following lemma to partition (some of) the remaining vertices into two multisets $A_t^l$ and $A_t^r$.\footnote{Note that this does not produce a bisection of $V$ or of $A_t$.} The lemma follows by applying Lemma 3.3 in \cite{racke2014computing} on the multiset of the $u_i$'s, where each $u_i$ appears with multiplicity of $\dG{}{i}$.

\begin{lemma} [Lemma 3.3 in \cite{racke2014computing}]
	\label{lemma:RST}
	    Given $u_i\in \RR$ for all $i\in A_t$, such that $\sum_{i\in A_t}{\dG{}{i} u_i} = 0$, 
	    we can find in time $O(|A_t|\cdot \log (|A_t|))$ a multiset of source nodes $A^l_t\subseteq A_t$, a multiset of target nodes $A^r_t\subseteq A_t$, and a separation value $\eta$ such that each $i\in A_t$ appears in $A^l_t \cup A^r_t$ at most $\dG{}{i}$ times, 
	    and additionally:
	    \begin{enumerate}
	        \item $\eta$ separates the sets $A^l_t, A^r_t$, \ie,\ either $\max_{i\in A^l_t}{u_i}\le \eta \le \min_{j\in A^r_t}{u_j}$, or $\min_{i\in A^l_t}{u_i}\ge \eta \ge \max_{j\in A^r_t}{u_j}$,
	        \item $|A^r_t|\ge \frac{\vol_t}{2}$, $|A^l_t|\le \frac{\vol_t}{8}$,
	        \item $\forall i\in A^l_t: (u_i-\eta)^2\ge \frac{1}{9}u_i^2$,
	        \item $\sum_{i\in A^l_t}{m_i u_i^2}\ge \frac{1}{80}\sum_{i\in A_t}{\dG{}{i} u_i^2}$, where $m_i$ is the number of times $i$ appears in $A^l_t$.\footnote{Note that even though $A^l_t$ is a multiset, in the sum $\sum_{i\in A^l_t}$, we iterate over each element once.}
	    \end{enumerate}
	\end{lemma}

    Note that a vertex could appear both in $A_t^l$ and in $A_t^r$, if $u_{i_j} = \eta$. The cut player sends $A^l_t,A^r_t$ and $A_t$ to the matching player. In turn, the matching player (see Subsection \ref{section:our-matching-player}) returns a cut $(S_t,A_t\setminus S_t$) and a $\dGG$-stochastic matrix $\M{t}$ ($\M{t}$ is obtained from a matching $\tilde{\M{t}}$ of $A^l_t \setminus S_t$ to $A^r_t \setminus S_t$ by adding self-loops). 

Define $\N{t} = \frac{\delta - 1}{\delta}D + \frac{1}{\delta}\M{t}$. The cut player then updates $\F{{t}}$ similarly to Section \ref{section:osvv-short}: $\F{{t+1}} = \N{t} D^{-1} \F{t} D^{-1} \N{t}$. Like in the previous sections, we also define the graph $G_{t+1}$ as $G_{t+1} = G_t\cup \M{t}$.\footnote{$G_{t+1}$ may have self-loops.} We define $\A{t+1} = \A{t}\setminus S_t$.

    \subsection{Matching player}\label{section:our-matching-player}

    The matching player receives $A^l_t$ and $A^r_t$ and the current $A_t$. 
    For a vertex $v\in V$, denote by $m_v$ the number times $v$ appears in $A_t^l$, and by $\bar{m}_v$ the number of times $v$ appears in $A_t^r$.
    The matching player solves the flow problem on $G[A_t]$, specified by Lemma \ref{lemma:unit_flow} below. 
    This lemma is similar to Lemma B.6 in~\cite{saranurak2019expander} and is proved using the \emph{\unitFlow} algorithm (called \emph{Unit-Flow} by \cite{henzinger2017flow, saranurak2019expander}). For completeness, the proof of this lemma appears in Appendix~\ref{appendix:unit_flow_matching}. Note that we can get a running time of $\tilde{O}(m)$ mentioned in Remark \ref{remark:fair_cut_global} by replacing this subroutine   with a fair-cut computation as shown in \cite[Section 8]{LNPSsoda13}.

\begin{lemma}
	\label{lemma:unit_flow}
	    Let $G=(V,E)$ be a graph with $n$ vertices and $m$ edges, let $A^l, A^r \subseteq V$ be multisets such that $|A^r| \ge \frac{1}{2}m, |A^l| \le \frac{1}{8}m$, and let $0<\phi<\frac{1}{\log n}$ be a parameter. For a vertex $v\in V$, denote by $m_v$ the number times $v$ appears in $A^l$, and by $\bar{m}_v$ the number of times $v$ appears in $A^r$. Assume that $m_v + \bar{m}_v\le \dG{}{v}$ for all $v\in V$. We define the flow problem $\Pi(G)$, as the problem in which a source $s$ is connected to each vertex $v\in A^l$ with an edge of capacity $m_v$ and each vertex $v\in A^r$ is connected to a sink $t$ with an edge of capacity $\bar{m}_v$. Every edge of $G$ has the same capacity $c=\Theta\left(\frac{1}{\phi\log n}\right)$, which is an integer. A feasible flow for $\Pi(G)$ is a maximum flow that saturates all the edges outgoing from $s$.
	    Then, in time $O(\frac{m}{\phi})$, we can find either
	    \begin{enumerate}
	        \item A feasible flow $f$ for $\Pi(G)$; or
	        \item A cut $S$ where $\Phi_{G}(S, V\setminus S)\le\frac{7}{c}=O(\phi\log n)$, $\vol(V\setminus S) \ge \frac{1}{3}m$ and a feasible flow for the problem $\Pi(G-S)$, where we only consider the sub-graph $G[V\setminus S \cup \{s, t\}]$ (that is, vertices $v\in A^l\setminus S$ are sources of $m_v$ units, and vertices $v\in A^r\setminus S$ are sinks of $\bar{m}_v$ units). 
	    \end{enumerate}
	\end{lemma}

	\begin{remark}
	    It is possible that $A^l\subseteq S$, in which case the feasible flow for $\Pi(G-S)$ is trivial (there are no edges leaving $s$). 
	\end{remark}
    Let $S_t$ be the cut returned by the lemma. If the lemma terminates with the first case, we denote $S_t = \emptyset$. Since $c$ is an integer, we can decompose the returned flow into a set of paths (using \eg{} dynamic trees~\cite{ST83}), each carrying exactly one unit of flow from a vertex $u\in A_t^l\setminus S_t$ to a vertex $v\in A^r_t\setminus S_t$. Note that multiple paths can route flow between the same pair of vertices. 
    If $u\in A^l_t\cap A^r_t$ then it is possible that a path starts and ends at $u$. Each $u\in A^l_t\setminus S_t$ is the endpoint of exactly $m_u\le\dG{}{u}$ paths, and each $v\in A^r_t\setminus S_t$ is the endpoint of \emph{at most} $\bar{m}_v\le\dG{}{v}$ paths. Define the ``matching''\footnote{Note that this is \textbf{not} a matching or a $\dG$-matching, but rather a graph that connects vertices of $A_t^l$ to vertices of $A_t^r$, whose degrees are bounded by $\dG$.}
    $\tilde{\M{t}}$ as $\tilde{\M{t}} = \{\{u_i, v_i\}\}_{i=1}^{|A^l_t\setminus S_t|}$, where $u_i$ and $v_i$ are the endpoints of path $i$. We can view $\tilde{\M{t}}$ as a symmetric $n\times n$ matrix, such that $\MatrixSimple{\tilde{\M{t}}}{u, v}$ is the number of paths between $u$ and $v$.
    We turn $\tilde{\M{t}}$ into a $\dG$-stochastic matrix by increasing its diagonal entries by $\dG - \tilde{\M{t}}\1_n$. Formally, we set $\M{t} \defeq \tilde{\M{t}} + \diag(\dG - \tilde{\M{t}}\1_n)$. Notice that $\dG - \tilde{\M{t}}\1_n$ has only non-negative entries, so $\M{t}$ also has non-negative entries. 
    Additionally, note that due to the definition of $\M{t}$, it is embeddable in $G$ with congestion $c$. 

    Intuitively, we can think of $\M{t}$ as the response of the matching player to the subsets $A^l_t$ and $A^r_t$ given by the cut player. 

\section{Analysis of the spectral cut player}\label{section:proof-main-theorem} \label{section:rest-of-analysis}

This section is organized as follows. Subsection \ref{section:our-algorithm} presents in detail the algorithm for Theorem \ref{theorem:our_cut_matching}, using the cut player and matching player described in Subsections \ref{section:our_update_F} and \ref{section:our-matching-player}. Subsection \ref{section:our_F_embeddable_G} shows that $\F{t}$ is embeddable in $G_t$ with congestion $\frac{4}{\delta}$ and that $G_t$ is embeddable in $G$ with congestion $c\cdot t$. Subsection \ref{section:our_A_expander_in_F} shows that if we reach round $T$, then with high probability, $A_T$ is a near $\Omega(\phi)$-expander in $G$. Finally, in Subsection \ref{section:our_theorem_proof} we prove Theorem \ref{theorem:our}.

    \subsection{The Algorithm}\label{section:our-algorithm}
    Similarly  to Section \ref{section:osvv-short}, let $\delta = \Theta(\log{n})$ be a power of $2$, let $T=\Theta(\log^2 n)$ and $c=\Theta(\frac{1}{\phi\log n})$. We choose $c$ to be an integer.
    The algorithm for Theorem \ref{theorem:our} is presented in Algorithm \ref{algo:cut_matching}. It follows along the same lines as the algorithm of SW in Section \ref{section:SW-overview}. 
    The algorithm runs for at most $T$ rounds and stops when $\vol(R_t)>\frac{m\cdot c\cdot \phi}{70}=\Omega(\frac{m}{\log n})$. 
    In each round $t$, we update (the sparse representation of) $\F{t}$ (see Sections \ref{section:our_update_F}, \ref{section:our-matching-player} and Algorithm \ref{algo:our-update}). 

    Like SW, in order to keep the running time near linear, we use the
    flow routine \emph{\unitFlow} \cite{henzinger2017flow, saranurak2019expander}
    which is mentioned in Subsection \ref{section:our-matching-player}. This routine may also return a cut $S_t \subseteq A_t$ with $\Phi_{G[A]}(S_t, A_t\setminus S_t) \le \frac{7}{c}$, in which case we ``move'' $S_t$ to $R_{t+1}$. After $T$ rounds, $\F{T}$ certifies that the remaining part of $A_T$ is a near $\phi$-expander.
\begin{algorithm}[hbt!]
    \caption{Cut-Matching}
    \label{algo:cut_matching}
    \begin{algorithmic}[1]
        \Function{Cut-Matching}{$G, \phi$}
            \State $T\gets\Theta(\log^2 n)$, $c\gets\Theta(\frac{1}{\phi\log n})$. \Comment{$c$ is an integer.}
            \State $t\gets 0$.
            \State $A_0\gets V$,\; $R_0 \gets \emptyset$. \;
            \State $\F{0}\gets D \defeq \diag(\dG)$.
            \While{$\vol(R_t)\le \frac{m\cdot c\cdot \phi}{70}$ and $t < T$} 
                \State $S_t, \F{t+1} \gets$ \Call{Update-F}{$G, \phi, c, \F{t}, \A{t}$} \Comment{See Algorithm \ref{algo:our-update} and Sections \ref{section:our_update_F}, \ref{section:our-matching-player}.}
                \\
                \Comment{It holds that $S_t\subseteq A_t$ where $\Phi_{G[A_t]}(S_t, A_t\setminus S_t)\le \frac{7}{c}$, or $S_t=\emptyset$.}
                \State $A_{t+1} \gets A_t \setminus S_t$, $R_{t+1} \gets R_t \cup S_t$.
                \State $t\gets t+1$.
            \EndWhile
            \If {$t = T$}
                \If {$R = \emptyset$}
                    \State Certify that $\Phi(G)=\Omega(\phi)$. \Comment{Case (1) of Theorem \ref{theorem:our}.}
                \Else 
                    ~\Return {$(A_T, R_T)$}. \Comment{Case (3) of Theorem \ref{theorem:our}.}
                \EndIf
            \Else 
                ~\Return {$(A_T, R_T)$}. \Comment{Case (2) of Theorem \ref{theorem:our}.}
            \EndIf
        \EndFunction
    \end{algorithmic}
\end{algorithm}

    \begin{algorithm}[H]
        \caption{Round Update. See Sections \ref{section:our_update_F} and \ref{section:our-matching-player} for details.}
        \label{algo:our-update}
        \begin{algorithmic}[1]
            \Function{Update-F}{$G, \phi, c, \F{t}, \A{t}$}
                \State Define $\dG{t}, \D{t}, \Pmat{t}$ as in Definition \ref{def:blocked_variables}.
                \State $\delta \gets \Theta(\log n)$. \Comment{$\delta$ is a power of $2$ (see the proof of Lemma \ref{lemma:our_F_expander}).}
                \smallskip
                \State \underline{\textbf{Cut Player:}}
		\smallskip
                \State $\W{t} \gets (\Pmat{t}\normalized{\F{t}}\Pmat{t})^{\delta}$
                \State $r\gets$ Random unit vector of $\RR^{n}$.
                \State $u_i \gets \frac{1}{\sqrt{\dG{t}{i}}}\langle \W{t}{i}, r \rangle$ for $i \in \A{t}$. 

                \State Compute $\A{t}^{l}, \A{t}^{r}$ using Lemma \ref{lemma:RST}.
                \smallskip
                \State \underline{\textbf{Matching Player:}}
                \smallskip
                \State Use Lemma \ref{lemma:unit_flow} to get a cut $(S_t, V\setminus S_t)$ and a feasible flow $f$ for $\Pi(G-S_t)$.
                \State If the lemma terminated with the first case, $S_t \gets \emptyset$.
                \State Decompose $f$ into a set of paths $\{u_j \to v_j\}_j$, where $u_j \in \A{t}^l\setminus S_t$ and $v_j \in \A{t}^r\setminus S_t$.
                \State $\tilde{\M{t}} \gets \{\{u_j, v_j\}\}_{j=1}^{|\A{t}^l\setminus S_t|}$. 

                \\\Comment{$\tilde{\M{t}}$ is a symmetric $n\times n$ matrix, $\tilde{\M{t}}(u,v)$ is the number of paths between $u$ and $v$.}
                \State $\M{t} \gets \tilde{\M{t}} + \diag(\dG - \tilde{\M{t}}\1_n)$.
                \smallskip
                \State \underline{\textbf{Update of $F$:}}
                \smallskip
                \State $\N{t} \gets \frac{\delta - 1}{\delta}D + \frac{1}{\delta}\M{t}$.
                \State $\F{t+1} \gets \N{t} D^{-1} \F{t} D^{-1} \N{t}$.
                \State \Return{$S_t, \F{t+1}$}.
            \EndFunction
        \end{algorithmic}
    \end{algorithm}

\subsection{$\F{t}$ is embeddable in $G$}
\label{section:our_F_embeddable_G}

To begin the analysis of the algorithm, we first define a blocked matrix. This notion will be useful when our matrices ``operate'' only on vertices of $A_t$.
    \begin{definition}
    \label{def:d_t_block_stochastic}
        Let $A \subseteq V$. A matrix $B\in \RR^{n\times n}$ is $A$-\emph{blocked} if $\MatrixSimple{B}{i, j} = 0$ for all $i\neq j$ such that $(i,j)\notin A \times A$.
    \end{definition}
\begin{lemma}  \label{lemma:our_basic_properties}
        The following holds for all $t$:
        \begin{enumerate}
            \item $\M{t}, \N{t}, \F{t}$ and $\W{t}$ are symmetric.
            \item $\M{t}, \N{t}$ and $\F{t}$ are $\dG$-stochastic.
            \item $\M{t}$ and $\N{t}$ are $\A{t+1}$-blocked. 
        \end{enumerate}
    \end{lemma}
\begin{proof}
        \begin{enumerate}
            \item This is clear from the definitions.
            \item For $\M{t}$ this is true because we explicitly update $\tilde{\M{t}}$ to be $\dG$-stochastic. 
            (See the definition of these matrices in the description of the matching player in Section \ref{section:our-matching-player}.)
            For $\N{t}$ and $\F{t}$ the proof is similar to the proof of Lemma \ref{lemma:osvv_basic_properties}(2).

            \item Since $\tilde{\M{t}}$ is a matching of vertices in $A_t^l\setminus S_t \subseteq \A{t+1}$ to vertices in $A_t^r\setminus S_t\subseteq \A{t+1}$, then $\MatrixSimple{\tilde{\M{t}}}{i, j} = 0$ for all $i\neq j$ such that $(i,j) \notin \A{t+1} \times \A{t+1}$. After the addition of $\diag(d - \tilde{\M{t}}\1_n)$, we still have $\M{t}{i, j} = 0$ for all $i\neq j$ such that $(i,j) \notin \A{t+1} \times \A{t+1}$, so $\M{t}$ is $\A{t+1}$-blocked. For $\N{t}$, note that for all $i\neq j$ such that $(i,j) \notin \A{t+1} \times \A{t+1}$, $\N{t}{i, j} = \frac{\delta - 1}{\delta}\MatrixSimple{D}{i, j} + \frac{1}{\delta}\M{t}{i, j} = \frac{\delta - 1}{\delta}\cdot 0 + \frac{1}{\delta}\cdot 0 = 0$.
        \end{enumerate}
    \end{proof}

    The following lemma will allow us to show that for every $t$, $F_t$ can be embedded into $G_{t}$. 

\begin{lemma}
    \label{lemma:our_F_t_embeddable_in_G_t_step1}

    Let $F\in \RR^{n\times n}$ be a $\dG$-stochastic matrix, and let $H$ be a graph. Fix any $t$. Assume that $F$ is embeddable in $H$ such that the congestion on each edge $e\in H$ is $c(e)$, then
    \begin{enumerate}
        \item $\N{t}D^{-1} F$ is embeddable in $H\cup\M{t}$ such that the congestion on edges of $\M{t}$ is $\frac{2}{\delta}$ and the congestion on each edge $e\in H$ is still $c(e)$.
        \item $F D^{-1} \N{t}$ is embeddable in $H\cup\M{t}$ such that the congestion on edges of $\M{t}$ is $\frac{2}{\delta}$ and the congestion on each edge $e\in H$ is still $c(e)$.
        \item $\N{t} D^{-1} \F D^{-1} \N{t}$ is embeddable in $H \cup \M{t}$ such that the congestion on edges of $\M{t}$ is $\frac{4}{\delta}$ and the congestion on each edge $e\in H$ is still $c(e)$.

    \end{enumerate}
    \end{lemma}
\begin{proof}
        \begin{enumerate}
        \item Let $P:E\times V \to \RR_{\ge 0}$ be a routing of $\F$ in $H$ (where $P((u,v), w)$ indicates how much of $w$'s commodity goes through the edge $(u,v)\in E$ from $u$ to $v$). Note that $\N{t} D^{-1} F$ is the flow matrix obtained by performing a weighted average on the rows of $F$ described by $\M{t}$. Explicitly, for every $v\in V, a \in V$, we have 
        \[(\N{t}D^{-1} F)(v,a) = \Matrix{\frac{\delta-1}{\delta} F + \frac{1}{\delta} \M{t}D^{-1} F}{v,a} = \frac{\delta-1}{\delta}F(v,a) + \frac{1}{\delta}\sum_{u : \{u,v\}\in \M{t}}{\frac{1}{\dG{}{u}}F(u,a)}.\footnote{Note that $u$ may appear in the sum multiple times.}\]

        The precise construction of the new embedding is as follows. 

        \begin{algorithm}[H]
            \begin{algorithmic}[]
                \State $P' \leftarrow \frac{\delta-1}{\delta}\cdot P$.
                \For{$\{a,b\}\in \M{t}$} \Comment{Note that $\{a,b\}$ may appear in this loop multiple times.}
                    \State $P'((a,b),a) \leftarrow P'((a, b), a) + \frac{1}{\delta}$.
                    \State $P'((b,a),b) \leftarrow P'((b, a), b) + \frac{1}{\delta}$.
                \EndFor
                \For{$\{a,b\}\in \M{t}$}
                    \For{$(u,v)\in H$} \Comment{The edge $\{u,v\}\in H$ is scanned in both directions.}
                        \State $P'((u,v),a) \leftarrow P'((u,v),a) + \frac{1}{\delta\dG{}{b}}P((u,v),b)$.
                        \State $P'((u,v),b) \leftarrow P'((u,v),b) + \frac{1}{\delta\dG{}{a}}P((u,v),a)$.
                    \EndFor
                \EndFor
            \end{algorithmic}
        \end{algorithm}

    The argument that we indeed obtain an embedding of $N_tD^{-1} F=\frac{\delta-1}{\delta}F+\frac{1}{\delta} M_t D^{-1} F$ is as follows.
    We think of $P'$ in  stages. In the first stage, we scale $P$ by $\frac{\delta-1}{\delta}$. This routes $\frac{\delta-1}{\delta}\MatrixSimple{F}{v,a}$ units of flow (of $v$'s commodity) from $v$ to $a$ for every $v$ and $a$. After this stage, each vertex $v\in V$ currently sends $\frac{(\delta-1)}{\delta}\dG{}{v}$ units of its commodity. In the next stage we wish to route an additional $\frac{1}{\delta}\sum_{u : \{u,v\}\in \M{t}}{\frac{1}{\dG{}{u}}\MatrixSimple{F}{u,a}}$ units from $v$ to $a$. To this end, we first move $\frac{1}{\delta}$ units from $v$'s commodity to each $u$ with $\{u,v\}\in \M{t}$\footnote{Note that the same pair $\{u,v\}$ might appear multiple times in the matching $\M{t}$ and therefore $v$ will send $\frac{1}{\delta}$ units of flow to $u$ multiple times.} (this flow is sent through the edge $\{v,u\}\in H\cup \M{t}$ from $v$ to $u$). Now, each vertex $u\in V$ ``mixes'' the commodities it got from its neighbors and routes the $\frac{\dG{}{u}}{\delta}$ ``new'' units according to $P$, as if they were of its own commodity. 
    Thus, $\frac{\MatrixSimple{F}{u,a}}{\delta}$ is the total flow sent from $u$ to $a$ when it routes the $\frac{\dG{}{u}}{\delta}$ ``new'' units and the ``share'' of $v$'s commodity from this flow is $\frac{1}{\dG{}{u}}$.
    It follows that out of the $\frac{1}{\delta}$ units $v$ sent to $u$, exactly $\frac{1}{\dG{}{u}}\frac{\MatrixSimple{F}{u,a}}{\delta}$ units go to $a$. 

    As for the congestion, note that on each edge of $\M{t}$, we route $\frac{1}{\delta}$ units of flow in each direction, so the congestion on each such edge is $\frac{2}{\delta}$.
    For each (directed) edge $(u, v)\in H$ and $a\in V$, we have $P'((u,v),a) = \frac{\delta - 1}{\delta}\cdot P((u,v),a) + \sum_{b : \{a,b\}\in \M{t}}{\frac{1}{\delta\dG{}{b}}P((u,v),b)}$. Note that some vertices $b\in V$ may appear in the sum multiple times.
    Therefore, on each edge $(u,v)\in H$, the congestion is 
    \begin{align*}
        \sum_{a\in V}{P'((u,v),a)} &= \frac{\delta - 1}{\delta}\cdot \sum_{a\in V}{P((u,v),a)} + \sum_{a\in V}{\sum_{b : \{a,b\}\in \M{t}}{\frac{1}{\delta\dG{}{b}}P((u,v),b)}}
        \\
        &= \frac{\delta - 1}{\delta}\cdot \sum_{a\in V}{P((u,v),a)} + \frac{1}{\delta}\sum_{b\in V}{P((u,v),b)} 
        \\
        &= \sum_{a\in V}{P((u,v),a)} = c(u,v) \ .
    \end{align*}

    \item  Let $P$ be the routing of $F$ in $H$. Note that $F D^{-1}\N{t}= \frac{\delta-1}{\delta}F + \frac{1}{\delta}FD^{-1}M_t$
    is the flow matrix obtained by performing a weighted average on the columns of $F$ described by $\M{t}$, \ie, we average the flow received by matched vertices.

    We define a routing $P'$ of $F D^{-1}\N{t}$ in $H\cup \M{t}$ as follows:
    \begin{algorithm}[H]
        \begin{algorithmic}[]
            \State $P' \leftarrow P$.
            \For{$\{a,b\}\in \M{t}$}
                \For{$u\in V$}
                    \State $P'((a,b),u) \leftarrow \frac{1}{\delta \dG{}{a}}\cdot \MatrixSimple{F}{u,a}$.
                    \State $P'((b,a),u) \leftarrow \frac{1}{\delta \dG{}{b}}\cdot \MatrixSimple{F}{u,b}$.
                \EndFor
            \EndFor
        \end{algorithmic}
    \end{algorithm}

    That is, $P'$ routes the same as $P$ over edges in $H$. For every $\{a,b\}\in \M{t}$, $a$ sends $\frac{1}{\delta}$ units of its received flow, which is a mix of commodities, to $b$ and in turn, $b$ receives $\frac{1}{\delta}$ units of flow from $a$.
    Note that for every $a, u \in V$, $\Matrix{F D^{-1}\N{t}}{a, u} = \frac{\delta - 1}{\delta}\MatrixSimple{F}{a, u} + \frac{1}{\delta}\sum_{v : \{u,v\}\in \M{t}}{\frac{1}{\dG{}{v}}\MatrixSimple{F}{a, v}}$. Thus, $P'$ routes $F D^{-1}\N{t}$ in~$H$. 

    Note that the congestion of $P'$ on each edge $e\in H$ is still $c(e)$, and on each edge $(a,b)$ such that $\{a,b\}\in \M{t}$ we have $\sum_{u\in V}{P'((a,b),u)} = \sum_{u\in V}{\frac{1}{\delta \dG{}{a}}\MatrixSimple{F}{u,a}} = \frac{1}{\delta}$ (the last equality follows as $F$ is $\dG$-stochastic). Therefore $\frac{1}{\delta}$ flow was routed in each direction, so the congestion on the edge $\{a, b\}\in \M{t}$ is $\frac{2}{\delta}$.

    \item Note that because $F$ and $\N{t}$ are $\dG$-stochastic (see Lemma \ref{lemma:our_basic_properties}(2)), $N_{t} D^{-1} F$ is also $\dG$-stochastic, since:  
    \begin{align*}
        N_{t} D^{-1} F \cdot \1_n &= N_{t} D^{-1} \dG = N_{t} \cdot \1_n = \dG
        \\
        \1_n' \cdot N_{t} D^{-1} F &= \dG' \cdot D^{-1} F = \1_n' \cdot F = \dG' \ .
    \end{align*}
    Therefore we can apply Parts (1) and (2) to get the result.
    \end{enumerate}

    \end{proof}

As an immediate corollary of Lemma~\ref{lemma:our_F_t_embeddable_in_G_t_step1}, we have: 

\begin{corollary}
\label{lemma:our_F_t_embeddable_in_G_t}
    For all rounds $t$, $\F{t}$ is embeddable in $G_t$ with congestion $\frac{4}{\delta}$.
\end{corollary}

\begin{lemma}
\label{lemma:our_G_t_embeddable_in_G}
    For all rounds $t$, $G_t$ is embeddable in $G$ with congestion $ct$.
\end{lemma}
\begin{proof}
    For every $t$, by the definition of the flow problem at round $t$, $\M{t}$ is embeddable in $G$ with congestion $c$. Summing these routings gives a routing of $G_t = \bigcup_{i=1}^t{\M{i}}$ in $G$ with congestion $ct$.
\end{proof}

    \subsection{$A_T$ is a near expander in $\F{T}$}
    \label{section:our_A_expander_in_F}
    In this section we prove that after $T=\Theta(\log^2 n)$ rounds, with high probability, $A_T$ is a near $\Omega(1)$-expander in $\F{T}$, which will imply that it is a near $\Omega(\phi)$-expander in $G$. 
    
    The section is organized as follows. Lemma \ref{lemma:submatrix_properties} contains matrix identities and Lemma \ref{lemma:our_X_as_normalized_laplacian} specifies a spectral property that our proof requires.
    We then define a potential function and lower bound the decrease in potential in Lemmas~\ref{lemma:our_potential_step_1}-\ref{cor:our_total_potential}. Finally, in Lemma \ref{lemma:our_F_expander} and Corollary \ref{cor:our_G_expander} we use the upper bound on the potential at round $T$, to show that with high probability $A_T$ is a near $\Omega(1)$-expander in $\F{T}$ and a near $\Omega(\phi)$-expander in $G$. 
\begin{lemma}
    \label{lemma:submatrix_properties}
    The following relations hold for all $t$:
    \begin{enumerate}
        \item For any $\A{t}$-blocked $\dG$-stochastic matrix $B\in \RR^{n\times n}$ we have $\I{t}\normalized{B} = \normalized{B}\I{t}$ and $\Pmat{t} \cdot  \Dminushalf B \Dminushalf =  \Dminushalf B \Dminushalf \cdot \Pmat{t}$.
        \item $\I{t}\Pmat{t} = \Pmat{t}$, $\I{t}^2 = \I{t}$ and $\Pmat{t}^2 = \Pmat{t}$.
        \item $\Pmat{t} \Pmat{{t+1}} = \Pmat{{t+1}} \Pmat{t} = \Pmat{{t+1}}$.
        \item $\Pmat{t} = \Dminushalf \mathcal{L}(\frac{1}{\vol_t} \dG{t} \dG{t}') \Dminushalf$ (recall the Laplacian defined in Definition \ref{def:laplacian_normalized_laplacian}).
        \item for any $v\in \RR^n$, it holds that 
        $v' \mathcal{L}\left(\frac{1}{\vol_t} \dG{t} \dG{t}'\right) v = \norm{\D{t}^{\frac{1}{2}} v}_2^2 - \frac{1}{\vol_t}  \left\langle v, \dG{t}\right\rangle^2$. 
        \item For any $B\in \RR^{n \times n}$, $\tr(\I{t} BB') = \sum_{i\in A_t}\norm{B(i)}_2^2$.
    \end{enumerate}
    \end{lemma}
\begin{proof}
    \begin{enumerate}
        \item Since $B$ is a $\A{t}$-blocked $\dG{t}$-stochastic matrix, $\I{t}\cdot \normalized{B} = \normalized{B}\cdot \I{t}$ is clear. Moreover 
        \begin{align*}
            \Dminushalf B \Dminushalf \sqrt{\dG{t}}\sqrt{\dG{t}'} = 
            \Dminushalf B \1_t \sqrt{\dG{t}'} = \Dminushalf \dG{t} \sqrt{\dG{t}'} = \sqrt{\dG{t}}\sqrt{\dG{t}'}, \\ 
            \sqrt{\dG{t}}\sqrt{\dG{t}'} \Dminushalf B \Dminushalf  = 
            \sqrt{\dG{t}} \1_t'  B  \Dminushalf = \sqrt{\dG{t}}  \dG{t}' \Dminushalf = \sqrt{\dG{t}}\sqrt{\dG{t}'} \ .
        \end{align*}

        Thus,
        \begin{align*}
            \Pmat{t} \cdot \Dminushalf B \Dminushalf &= (\I{t}-\frac{1}{\vol_t}\sqrt{\dG{t}}\sqrt{\dG{t}'})\Dminushalf B \Dminushalf \\
            &= \Dminushalf B \Dminushalf \I{t} - \frac{1}{\vol_t}\sqrt{\dG{t}}\sqrt{\dG{t}'} = \Dminushalf B \Dminushalf \cdot \Pmat{t} \ .
        \end{align*}

        \item The first and second equalities are clear from the definitions. For $\Pmat{t}$,
        \begin{align*}
            \Pmat{t}^2 &= \left(\I{t} - \frac{1}{\vol_t}\sqrt{\dG{t}}\sqrt{\dG{t}'}\right)\cdot\left(\I{t} - \frac{1}{\vol_t}\sqrt{\dG{t}}\sqrt{\dG{t}'}\right)
            \\
            &= \I{t} - \frac{2}{\vol_t}\sqrt{\dG{t}}\sqrt{\dG{t}'} + \frac{1}{\vol_t^2}\sqrt{\dG{t}}\sqrt{\dG{t}'}\sqrt{\dG{t}}\sqrt{\dG{t}'} = \I{t} - \frac{1}{\vol_t}\sqrt{\dG{t}}\sqrt{\dG{t}'} = \Pmat{t}
        \end{align*}
        \item We show $\Pmat{t} \Pmat{{t+1}} = \Pmat{{t+1}}$. The other direction follows by symmetry.
        \begin{align*}
            \Pmat{t} \Pmat{{t+1}} &= \left(\I{t} - \frac{1}{\vol_t}\sqrt{\dG{t}}\sqrt{\dG{t}'}\right)\cdot \left(\I{{t+1}} - \frac{1}{\vol_{t+1}}\sqrt{\dG{t+1}}\sqrt{\dG{t+1}'}\right) \\&=
            \I{{t+1}} - \frac{1}{\vol_{t+1}}\sqrt{\dG{t+1}}\sqrt{\dG{t+1}'} -
            \frac{1}{\vol_t}\sqrt{\dG{t}}\sqrt{\dG{t}'}\I{{t+1}} +
            \frac{1}{\vol_t \cdot \vol_{t+1}}\sqrt{\dG{t}}\sqrt{\dG{t}'}\sqrt{\dG{t+1}}\sqrt{\dG{t+1}'} \\&=
            \I{{t+1}} - \frac{1}{\vol_{t+1}}\sqrt{\dG{t+1}}\sqrt{\dG{t+1}'} -
            \frac{1}{\vol_t}\sqrt{\dG{t}}\sqrt{\dG{t+1}'} +
            \frac{1}{\vol_t }\sqrt{\dG{t}}\sqrt{\dG{t+1}'} \\&=
            \I{{t+1}} - \frac{1}{\vol_{t+1}}\sqrt{\dG{t+1}}\sqrt{\dG{t+1}'} = \Pmat{{t+1}}.
        \end{align*}

        \item Since $\langle \dG{t},\1 \rangle = \vol_t$, we get that the degree matrix of $\frac{1}{\vol_t}\dG{t} \dG{t}'$ is $\D{t}$. Hence
        \[
        \mathcal{L} \left( \frac{1}{\vol_t}\dG{t} \dG{t}'\right) = \D{t} - \frac{1}{\vol_t}\dG{t} \dG{t}' = 
        D^{\frac{1}{2}} (\I{t} - \frac{1}{\vol_t}\sqrt{\dG{t}}\sqrt{\dG{t}'}) D^{\frac{1}{2}} =
        D^{\frac{1}{2}} \Pmat{t} D^{\frac{1}{2}}.
        \]
        \item
        \begin{align*}
            v' \mathcal{L}\left(\frac{1}{\vol_t} \dG{t} \dG{t}'\right) v &= 
            v'  (\D{t} - \frac{1}{\vol_t}\dG{t} \dG{t}')  v =
             v' \D{t} v  - \frac{1}{\vol_t} v' \dG{t} \dG{t}' v \\&=
             (\D{t}^{\frac{1}{2}} v)' (\D{t}^{\frac{1}{2}} v) - \frac{1}{\vol_t} (v' \dG{t}) (\dG{t}' v) =
             \norm{\D{t}^{\frac{1}{2}} v}_2^2 - \frac{1}{\vol_t} \langle v,\dG{t} \rangle^2.
        \end{align*}
        \item Observe that $\sum_{i\in A_t}\norm{B(i)}_2^2 = \norm{\I{t} B}_F^2$. Indeed, $X= \I{t} B$ satisfies $\X{}{i,j} = B(i,j)$ if $i\in A_t$ (and otherwise $\X{}{i,j} = 0$). Therefore,
        \begin{align*}
            \sum_{i\in A_t}\norm{B(i)}_2^2 &= \norm{\I{t} B}_F^2 = \tr((\I{t} B)  (\I{t} B)') = 
            \tr (\I{t} B  B' \I{t}) \\&=
            \tr(\I{t}^2 B  B') = \tr(\I{t} B B').
        \end{align*}
        Where the fourth equality follows from Fact \ref{fact:book_trace_identities} and the last equality follows from (2).
    \end{enumerate}
\end{proof}

We define the potential $\psi(t) = \tr[\W{t}^2]$, where $\W{t}$ was defined as $\W{t} = (\Pmat{t} \Dminushalf \F{t} \Dminushalf \Pmat{t})^\delta$. 
By Lemma~\ref{lemma:submatrix_properties}(6) (and since $\I{t}\W(t) = \W{t}$), the potential can be written as $\psi(t) = \sum_{i\in A_t}{\norm{\W{t}{i}}_2^2}$.
This is the same potential from Section \ref{section:osvv-short} with the new definition of $\W{t}$. Intuitively, by projecting using $\Pmat{t}$ instead of $\Pmat$, the potential only ``cares'' about the vertices of $A_t$. As shown in Lemma~\ref{lemma:our_F_expander}, having small potential will certify that $A_T$ is a near expander in $\F{t}$. 

Before we bound the decrease in potential, we recall Definition~\ref{def:laplacian_normalized_laplacian} of a normalized Laplacian $\mathcal{N}(A) = \normalized{\mathcal{L}(A)} = I - \normalized{A}$, where $A$ is a symmetric $\dG$-stochastic matrix. 

The proof of the following technical lemma is identical to Lemma \ref{lemma:osvv_X_as_normalized_laplacian} in our analysis of OSVV's cut matching game for conductance  in Appendix~\ref{section:osvv}.

\begin{lemma}
\label{lemma:our_X_as_normalized_laplacian}
    For any matrix $A\in \RR^{n\times n}$, $\tr(A'(I-(\normalized{\N{t}})^{4\delta})A)\ge \frac{1}{3}\tr(A' \mathcal{N}(\M{t})A)$.
\end{lemma}

The following lemma bounds the decrease in potential. The bound takes into account both the contribution of the matched vertices and the removal of $S_t$ from $A_t$.
\begin{lemma}
    \label{lemma:our_potential_step_1}
        For each round $t$, \[\psi(t) - \psi(t+1) \ge \frac{1}{3}\sum_{\{i,k\}\in \M{t}} \norm{\left(\frac{\W{t}{i}}{\sqrt{\dG{}{i}}} - \frac{\W{t}{k}}{\sqrt{\dG{}{k}}}\right)}_2^2 + \sum_{j\in S_{t}} \dG{}{j} \norm{\frac{\W{t}{j}}{\sqrt{\dG{}{j}}}}_2^2\]
    \end{lemma}
    \begin{proof}
        To simplify the notation, we denote $\bar{N}_t \defeq \normalized{\N{t}}$ and $\bar{F}_t \defeq \normalized{\F{t}}$.
        We rewrite the potential in the next iteration as follows:
\begin{align*}
            \psi(t+1) &= \tr(\W{{t+1}}^2) = \tr\left( \left( \Pmat{{t+1}} \Dminushalf \F{{t+1}} \Dminushalf \Pmat{{t+1}} \right)^{2\delta} \right) 
            \\
            &= \tr\left( \left( \Pmat{{t+1}} \Dminushalf (\N{{t}} D^{-1} \F{{t}} D^{-1} \N{{t}}) \Dminushalf \Pmat{{t+1}} \right)^{2\delta}\right) 
            \\
            &= \tr\left( \left( \Pmat{{t+1}} \Dminushalf (\N{{t}} \Dminushalf \Dminushalf \F{{t}} \Dminushalf \Dminushalf \N{{t}}) \Dminushalf \Pmat{{t+1}} \right)^{2\delta}\right) 
            \\
            &= \tr\left( \left( \Pmat{{t+1}} \bar{N}_{t} \bar{F}_t \bar{N}_{t}  \Pmat{{t+1}} \right)^{2\delta}\right) 
            \\
            &\numeq{6} \tr\left( \left(  \bar{N}_{t} \Pmat{{t+1}} \bar{F}_t  \Pmat{{t+1}} \bar{N}_{t}  \right)^{2\delta}\right) 
            \\
            &\numeq{7} \tr\left( \left(  \bar{N}_{t} \Pmat{{t+1}} \Pmat{t} \bar{F}_t \Pmat{t} \Pmat{{t+1}} \bar{N}_{t}   \right)^{2\delta}\right) 
            \\
            &= \tr\left( \left(  \bar{N}_{t} \Pmat{{t+1}} (\Pmat{t} \bar{F}_t \Pmat{t}) \Pmat{{t+1}} \bar{N}_{t}   \right)^{2\delta}\right) \ ,
        \end{align*}
        where equality $(6)$ follows from Lemma \ref{lemma:submatrix_properties}(1) for $\N{{t}}$ (which is $\A{t+1}$-blocked $\dG$-stochastic by Lemma \ref{lemma:our_basic_properties}), and equality $(7)$ follows from Lemma \ref{lemma:submatrix_properties}(3). 

        By Properties (1) and (2) of Lemma~\ref{lemma:submatrix_properties} it holds that $\bar{N}_{t+1} \Pmat{{t+1}} = \Pmat{{t+1}} \bar{N}_{t+1}  = \Pmat{{t+1}} \bar{N}_{t+1} \Pmat{{t+1}}$. 
        Therefore, the potential can be written in terms of symmetric matrices:
        \begin{align*}
            \psi(t+1) &= \tr\left( \left(  (\Pmat{{t+1}} \bar{N}_{t} \Pmat{{t+1}}) (\Pmat{t} \bar{F}_t \Pmat{t}) (\Pmat{{t+1}} \bar{N}_{t} \Pmat{{t+1}})   \right)^{2\delta}\right) 
            \\
            &\le \tr ((\Pmat{{t+1}} \bar{N}_{t} \Pmat{{t+1}})^{2\delta} (\Pmat{t} \bar{F}_t \Pmat{t})^{2\delta} (\Pmat{{t+1}} \bar{N}_{t} \Pmat{{t+1}})^{2\delta}) 
            \\
            &\numeq{2} \tr ( (\Pmat{{t+1}} \bar{N}_{t} \Pmat{{t+1}})^{4\delta} (\Pmat{t} \bar{F}_t \Pmat{t})^{2\delta}) = \tr((\bar{N}_{t} \Pmat{{t+1}})^{4\delta} \W{t}^2) 
            \\
            &\numeq{4} \tr(\bar{N}_{t}^{4\delta}\Pmat{{t+1}} \W{t}^2) \numeq{5} \tr(\bar{N}_{t}^{2\delta}\Pmat{{t+1}}\bar{N}_{t}^{2\delta} \W{t}^2) 
            \\
            &\numeq{6} \tr(\W{t} \bar{N}_{t}^{2\delta}\Pmat{{t+1}}\bar{N}_{t}^{2\delta} \W{t})
            \\
            &\numeq{7} \tr(\W{t} \bar{N}_{t}^{2\delta} \Dminushalf \mathcal{L}\left(\frac{1}{\vol_{t+1}} \dG{t+1} \dG{t+1}'\right) \Dminushalf \bar{N}_{t}^{2\delta} \W{t}) 
            \\
            &= \tr\left(\left(\Dminushalf \bar{N}_{t}^{2\delta} \W{t} \right)' \cdot \mathcal{L}\left(\frac{1}{\vol_{t+1}} \dG{t+1} \dG{t+1}'\right) \cdot \left(\Dminushalf \bar{N}_{t}^{2\delta} \W{t}\right)\right)\ ,
        \end{align*}
        where the inequality follows from Theorem \ref{theorem:symmetric_rearrangement} and Equality $(2)$ follows from Fact \ref{fact:book_trace_identities}. Equalities~$(4)$  and $(5)$ follow from Properties (1) and (2) of Lemma \ref{lemma:submatrix_properties} (and from the fact that $\N{{t}}$ is $\A{t+1}$-blocked $\dG$-stochastic, by Lemma \ref{lemma:our_basic_properties}). 
        Equality $(6)$ again uses Fact \ref{fact:book_trace_identities}, and Equality $(7)$ follows from Lemma \ref{lemma:submatrix_properties}(4). 

        Let $\Z{t} = \Dminushalf \bar{N}_{t}^{2\delta} \W{t}$.  By applying Lemma~\ref{lemma:submatrix_properties}(5) we get
        \begin{align}
            \psi(t+1) &\le \tr\left(\Z{t}' \mathcal{L}\left(\frac{1}{\vol_{t+1}} \dG{t+1} \dG{t+1}'\right) \Z{t}\right) = \sum_{i=1}^n (\Z{t}{,i})' \mathcal{L}\left(\frac{1}{\vol_{t+1}} \dG{t+1} \dG{t+1}'\right) \Z{t} (,i) 
            \notag \\
            &\numeq{2} \sum_{i=1}^n \left(\norm{\D{{t+1}}^{\frac{1}{2}} \Z{t}{,i}}_2^2 - \frac{1}{\vol_{t+1}} \left\langle \Z{t}{,i}, \dG{t+1}\right\rangle^2\right) \le  \sum_{i=1}^n \norm{\D{{t+1}}^{\frac{1}{2}} \Z{t}{,i}}_2^2 
            \notag \\
            &= \sum_{i=1}^n \sum_{j\in A_{t+1}} \left(\sqrt{\dG{}{j}} \Z{t}{j,i}\right)^2 = \sum_{j\in A_{t+1}} \norm{\Matrix{\D{{t+1}}^{\frac{1}{2}}\Z{t}}{j}}_2^2 
            \numeq{5} \sum_{j\in A_{t+1}} \norm{\Matrix{\bar{N}_{t}^{2\delta}\W{t}}{j}}_2^2 
            \notag \\
            &= \sum_{j\in A_{t}} \norm{\Matrix{\bar{N}_{t}^{2\delta}\W{t}}{j}}_2^2 - \sum_{j\in S_{t}} \norm{\Matrix{\bar{N}_{t}^{2\delta}\W{t}}{j}}_2^2, \label{eq:psit+1ep}
        \end{align}
        where Equality $(2)$ holds by Property (5) of Lemma~\ref{lemma:submatrix_properties} and Equality $(5)$ holds since we only sum rows in $A_{t+1}$.
        Since $\bar{N}_{t}$ is diagonal outside $A_{t+1}$ (by the definition of $\M{t}$),
        we have that $\Matrix{\bar{N}_{t}^{2\delta} \W{t}}{j} = \W{t}{j}$, for every $j\in S_t$.
        Thus,
        \begin{equation}\label{eq:n(N)_W_correlation}
            \sum_{j\in S_{t}} \norm{\Matrix{\bar{N}_{t}^{2\delta}\W{t}}{j}}_2^2 =
        \sum_{j\in S_{t}} \norm{\W{t}{j}}_2^2. 
        \end{equation}

        By Lemma~\ref{lemma:submatrix_properties}(6), we get
        \begin{align}
            \sum_{j\in A_{t}} \norm{\Matrix{\bar{N}_{t}^{2\delta}\W{t}}{j}}_2^2 &= 
            \tr(\I{{t}} \cdot \bar{N}_{t}^{2\delta} \cdot \W{t}^2 \cdot \bar{N}_{t}^{2\delta}) \notag \\&\numeq{2}
            \tr(\bar{N}_{t}^{2\delta} \cdot \I{{t}} \cdot  \W{t}^2 \cdot \bar{N}_{t}^{2\delta}) \notag \\&\numeq{3}
            \tr(\bar{N}_{t}^{2\delta} \cdot  \W{t}^2 \cdot \bar{N}_{t}^{2\delta}) \notag \\&\numeq{4}
            \tr(   \bar{N}_{t}^{4\delta}\W{t}^2 ) \label{eq:At}
        \end{align}
        where equality $(2)$ holds since $\N{{t}}$ is $\A{t+1}$-blocked $\dG$-stochastic (by Lemma \ref{lemma:our_basic_properties}), so in particular it is $\A{t}$-blocked $\dG$-stochastic, and we can use Lemma \ref{lemma:submatrix_properties}(1). Equality $(3)$ holds because $\I{t} \W{t}  = \I{t} (\Pmat{t} \bar{F}_{t} \Pmat{t})^\delta$ 
        and $\I{t} \Pmat{t}  = \Pmat{t}$ (by Lemma \ref{lemma:submatrix_properties}(2)), and equality $(4)$ follows from Fact \ref{fact:book_trace_identities}. 
        Plugging 
        Equations (\ref{eq:n(N)_W_correlation}) and (\ref{eq:At}) 
        into (\ref{eq:psit+1ep})
 we get the following bound on the decrease in potential: 
        \begin{align*}
            \psi(t) - \psi(t+1) &\ge \tr(   (I-\bar{N}_{t}^{4\delta})\W{t}^2 ) + 
            \sum_{j\in S_{t}} \norm{\W{t}{j}}_2^2 \\&\numeq{1}
            \tr( \W{t}  (I-\bar{N}_{t}^{4\delta})\W{t} ) + 
            \sum_{j\in S_{t}} \norm{\W{t}{j}}_2^2 \\&\numge{2}
            \frac{1}{3}\tr( \W{t}  \mathcal{N}(\M{{t}}) \W{t} ) + 
            \sum_{j\in S_{t}} \norm{\W{t}{j}}_2^2  \\&=
            \frac{1}{3}\tr( (\Dminushalf \W{t})'  \mathcal{L}(\M{{t}}) (\Dminushalf \W{t}) ) + 
            \sum_{j\in S_{t}}  \dG{}{j} \norm{\frac{\W{t}{j}}{\sqrt{\dG{}{j}}}}_2^2 \\&\numeq{3}
            \frac{1}{3}\sum_{\{i,k\}\in \M{t}}{\norm{\frac{\W{t}{i}}{\sqrt{\dG{}{i}}} - \frac{\W{t}{k}}{\sqrt{\dG{}{k}}}}_2^2} + \sum_{j\in S_{t}} \dG{}{j} \norm{\frac{\W{t}{j}}{\sqrt{\dG{}{j}}}}_2^2
        \end{align*}
        where Equality $(1)$ follows from Fact \ref{fact:book_trace_identities}, Inequality~$(2)$ follows by Lemma~\ref{lemma:our_X_as_normalized_laplacian},  and Equality~$(3)$ follows from Lemma \ref{lemma:matching_laplacian}.

    \end{proof}

    The following lemma states that the potential is expected to drop by a factor of $1-\Omega(1/\log n)$.
\begin{lemma}
    \label{lemma:our_potential_step_2}
        For each round $t$, 
        \[\EE\left[\frac{1}{3}\sum_{\{i,k\}\in \M{t}} \norm{\frac{\W{t}{i}}{\sqrt{\dG{}{i}}} - \frac{\W{t}{k}}{\sqrt{\dG{}{k}}}}_2^2 + \sum_{j\in S_{t}} \dG{}{j} \norm{\frac{\W{t}{j}}{\sqrt{\dG{}{j}}}}_2^2\right]  
        \ge
        \frac{1}{3000 \alpha \log n}\psi(t) - \frac{3}{n^{\alpha/16}}\]
        for every $\alpha>48$, where the expectation is over the unit vector $r\in \RR^n$ of the current round and conditioned on $\psi(t)$. 

    \end{lemma}    

    \begin{proof}
        Recall that $u_i = \frac{1}{\sqrt{\dG{}{i}}} \langle \W{t}{i}, r \rangle$ for $i\in A_t$. Notice that $\Matrix{\frac{\W{t}{i}}{\sqrt{\dG{}{i}}} }{j} = \frac{\W{t}{i,j}}{\sqrt{\dG{}{i}}}$. 

        We use Lemma \ref{lemma:projection_pairs} from Appendix \ref{appendix:projection} for the set of vectors $\left\{\frac{1}{\sqrt{\dG{}{i}}}\W{t}{i} \mid i\in A_t\right\}$. 
        By Lemma \ref{lemma:projection_pairs}(2), we have with high probability:
        \begin{align*}
            \forall i, k\in A_t &: \norm{\frac{\W{t}{i}}{\sqrt{\dG{}{i}}} - \frac{\W{t}{k}}{\sqrt{\dG{}{k}}}}_2^2 \ge \frac{n}{\alpha\log n}\cdot(u_i-u_k)^2
            \\
            \forall i\in A_t &: \norm{\frac{\W{t}{i}}{\sqrt{\dG{}{i}}}}_2^2 \ge \frac{n}{\alpha\log n}\cdot u_i^2
        \end{align*}
        for every constant $\alpha \ge 16$. In order to replace the inequality with high probability by an inequality in expected values, we introduce a random variable $z$ that is non-zero only when this inequality fails to hold, such that  
        \begin{align*}
	        \forall i, k\in A_t &: \norm{\frac{\W{t}{i}}{\sqrt{\dG{}{i}}} - \frac{\W{t}{k}}{\sqrt{\dG{}{k}}}}_2^2 \ge \frac{n}{\alpha \log n}\cdot(u_i - u_k)^2 - z
	        \\
            \forall i\in A_t &: \norm{\frac{\W{t}{i}}{\sqrt{\dG{}{i}}}}_2^2 \ge \frac{n}{\alpha\log n}\cdot u_i^2 - z
        \end{align*}
        holds with probability $1$. \emph{I.e.}, we define
        \begin{align*}
            \mathcal{B} = \{0\} &\cup \left\{\frac{n}{\alpha\log n}(u_i - u_k)^2-\norm{\frac{\W{t}{i}}{\sqrt{\dG{}{i}}} - \frac{\W{t}{k}}{\sqrt{\dG{}{k}}}}_2^2 : (i, k)\in A_t\times A_t\right\}
            \\
            &\cup \left\{\frac{n}{\alpha\log n}u_i^2-\norm{\frac{\W{t}{i}}{\sqrt{\dG{}{i}}}}_2^2 : i\in A_t\right\}
        \end{align*}
        and $z = \max(\mathcal{B})$. We get that
        \begin{align*}
             \frac{1}{3}\sum_{\{i,k\}\in \M{t}} \norm{\frac{\W{t}{i}}{\sqrt{\dG{}{i}}} - \frac{\W{t}{k}}{\sqrt{\dG{}{k}}}}_2^2 &\ge
             \frac{n}{3\alpha \log{n}}\sum_{\{i,k\}\in \M{t}} (u_i - u_k)^2 - m\cdot z
             \\
             \sum_{j\in S_{t}} \dG{}{j} \norm{\frac{\W{t}{j}}{\sqrt{\dG{}{j}}}}_2^2 &\ge \frac{n}{\alpha \log{n}} \sum_{j\in S_{t}} \dG{}{j} u_j^2 - 2m\cdot z.
        \end{align*}

        This means that 
        \begin{align*}
            &\frac{1}{3}\sum_{\{i,k\}\in \M{t}} \norm{\left(\frac{\W{t}{i}}{\sqrt{\dG{}{i}}} - \frac{\W{t}{k}}{\sqrt{\dG{}{k}}}\right)}_2^2 + \sum_{j\in S_{t}} \dG{}{j} \norm{\frac{\W{t}{j}}{\sqrt{\dG{}{j}}}}_2^2
            \\
            &\ge \frac{n}{3\alpha \log{n}}\sum_{\{i,k\}\in \M{t}} (u_i - u_k)^2 + 
            \frac{n}{\alpha \log{n}} \sum_{j\in S_{t}} \dG{}{j} u_j^2  - 3m\cdot z\\&\numge{2}
            \frac{n}{3\alpha \log{n}}\sum_{i\in A_t^{l}\setminus S_t} m_i (u_i - \eta)^2 + 
            \frac{n}{\alpha \log{n}} \sum_{j\in S_{t}} \dG{}{j} u_j^2  - 3m\cdot z\\&\ge
            \frac{n}{3\alpha \log{n}}\sum_{i\in A_t^{l}\setminus S_t} m_i (u_i - \eta)^2 + 
            \frac{n}{\alpha \log{n}} \sum_{j\in A_t^{l} \cap S_{t}} \dG{}{j} u_j^2  - 3m\cdot z\\&\numge{4}
            \frac{n}{27\alpha \log{n}}\sum_{i\in A_t^{l}\setminus S_t} m_i u_i^2 + 
            \frac{n}{\alpha \log{n}} \sum_{j\in A_t^{l} \cap S_{t}} \dG{}{j} u_j^2  - 3m\cdot z\\&\numge{5}
            \frac{n}{27\alpha \log{n}}\sum_{i\in A_t^{l}} m_i u_i^2 - 3m\cdot z \ge
            \frac{n}{3000\alpha \log{n}}\sum_{i\in A_t} \dG{}{i} u_i^2 - 3m\cdot z \ .
        \end{align*} 

        Inequality $(2)$ is due to Lemma \ref{lemma:RST}(1) and because each $i\in A^l_t \setminus S_t$ is matched with exactly $m_i$ vertices in $A^r_t\setminus S_t$, inequality $(4)$ follows from Lemma \ref{lemma:RST}(3), inequality $(5)$ is true because $m_i\le \dG{}{i}$ for all $i\in A^l_t$, and the last inequality follows from Lemma \ref{lemma:RST}(4).

        Finally, in expectation,
        \begin{align*}
            &\EE\left[\frac{1}{3}\sum_{\{i,k\}\in \M{t}} \norm{\left(\frac{\W{t}{i}}{\sqrt{\dG{}{i}}} - \frac{\W{t}{k}}{\sqrt{\dG{}{k}}}\right)}_2^2 + \sum_{j\in S_{t}} \dG{}{j} \norm{\frac{\W{t}{j}}{\sqrt{\dG{}{j}}}}_2^2\right]
            \\
            &\ge \frac{n}{3000\alpha \log{n}}\sum_{i\in A_t} \dG{}{i} \EE[u_i^2] - 3m\cdot\EE[z]\\&=
            \frac{1}{3000\alpha \log{n}}\sum_{i\in A_t} \dG{}{i} \norm{\frac{\W{t}{i}}{\sqrt{\dG{}{i}}}}_2^2 - 3m\cdot\EE[z] \\&=
            \frac{1}{3000\alpha \log{n}}\sum_{i\in A_t} \norm{\W{t}{i}}_2^2 - 3m\cdot\EE[z] \\&=
            \frac{1}{3000\alpha \log{n}}\psi(t) - 3m\cdot\EE[z]
        \end{align*}
        where the first equality follows from Lemma \ref{lemma:projection_pairs}(1).

        We note that $z \le \frac{4n}{\alpha \log n} \le n$. Indeed, for every $i \in \A{t}$
        \[
        u_i = \left\langle \frac{\W{t}{i}}{\sqrt{\dG{}{i}}}, r \right\rangle \le
        \norm{\frac{\W{t}{i}}{\sqrt{\dG{}{i}}}}_2 \le 
        \norm{\W{t}{i}}_2 \le 1,
        \]
        where the first inequality holds due to the Cauchy-Schwartz inequality, since $r$ is a unit vector, and the last inequality holds since all eigenvalues of $\W{t}$ are in $[0,1]$: By Lemma~\ref{lemma:normalized-laplacian-eigenvalues} we get that $\Dminushalf \F{t} \Dminushalf = I - \mathcal{N}(\F{t})$ has eigenvalues in $[-1,1]$. Since $\Pmat{t}$ is a projection matrix, it has eigenvalues in $[0,1]$. Therefore, $\Pmat{t} \Dminushalf \F{t} \Dminushalf \Pmat{t}$ has eigenvalues in $[-1,1]$. Finally, $\W{t}= (\Pmat{t} \Dminushalf \F{t} \Dminushalf \Pmat{t})^\delta$ is positive semi-definite since $\delta$ is a power of $2$, so its eigenvalues are in $[0,1]$.

        By  Lemma~\ref{lemma:projection_pairs}, $z$ is non-zero with probability at most $\frac{1}{n^{\alpha/8}}$, so 
        $3m \EE[z] \le \frac{3n^3}{n^{\alpha/8}}  = \frac{3}{n^{(\alpha-24)/8}} \le \frac{3}{n^{\alpha/16}}$ (since $\alpha > 48)$, completing the proof.
    \end{proof}

    The following two corollaries follow by Lemmas \ref{lemma:our_potential_step_1} and \ref{lemma:our_potential_step_2} (see corollaries \ref{cor:krv_potential_reduction} and \ref{cor:krv_total_potential} for similar arguments).

\begin{corollary}
    \label{cor:our_potential_reduction}
        For each round $t$, $\EE[\psi(t+1)] \le
        \left(1-\frac{1}{3000 \alpha \log n}\right)\psi(t) + \frac{3}{n^{\alpha/16}}$, where the expectation is over the unit vector $r\in \RR^n$ of iteration $t+1$ and conditioned on $\psi(t)$.

\end{corollary}

\begin{corollary} [Decrease in Potential]
    \label{cor:our_total_potential}
        With high probability over the choices of $r$, $\psi(T) \le \frac{1}{n}$.
    \end{corollary} 

    The following lemma uses the low potential to derive the near-expansion of $A_T$ in $F_T$.

    \begin{lemma}[Variation of Cheeger's inequality]
    \label{lemma:our_F_expander}
    Let $H=(V,\bar{E})$ be a graph on $n$ vertices, such that $\F{T}$ is its  weighted adjacency matrix. Assume that $\psi(T) \le \frac{1}{n}$. Then, $A_T$ is a near $\frac{1}{5}$-expander in $H$.
    \end{lemma}
    \begin{proof}
        Recall (Lemma \ref{lemma:our_basic_properties}) that $\F{T}$ is symmetric and $\dG$-stochastic. Let $k = \vol(A_T)$.
        Let $S\subseteq A_T$ be a cut, and denote $\dG{S}\in\RR^n$ to be the vector where 
        $\dG{S}{u} = \left\{\begin{array}{cl}
            \dG{}{u} & \mbox{if $u\in S$,} \\
            0 &\mbox{otherwise.}
            \end{array}\right.$ 
            Additionally, denote $\ell = \vol(S) \le \frac{1}{2}k$. Note that $\norm{\sqrt{\dG{S}}}_2^2 = \ell$. 

        Denote by $\bar{\lambda} \ge 0$ the largest singular value of $\X{T} \defeq \Pmat{T} \Dminushalf \F{T} \Dminushalf \Pmat{T}$ (square root of the largest eigenvalue of $(\Pmat{T} \Dminushalf \F{T} \Dminushalf \Pmat{T})^2$).
        Because $\tr(\X{T}^{2 \delta}) = \psi(T) \le \frac{1}{n}$, we have in particular that the largest eigenvalue of $\X{T}^{2 \delta}$ is at most $\frac{1}{n}$, so we have $\bar{\lambda} \le \frac{1}{n^{\frac{1}{2\delta}}}$.
        We choose $\delta  = \Theta(\log n)$ such that $\frac{1}{n^{\frac{1}{2\delta}}}\le \frac{1}{20}$, so $\bar{\lambda}\le\frac{1}{20}$.

        In order to prove near-expansion we need to lower bound $|\bar{E}_{\F{T}}(S,V\setminus {S})|$. We do so by upper bounding $|\bar{E}_{\F{T}}(S,S)| =  \1_S' \F{T} \1_S$. Note that because $S\subseteq A_T$, $\1_S' \F{T} \1_S = \1_S' (\I{T} \F{T} \I{T}) \1_S$. Observe the following relation between $\X{T}$ and $\I{T} \F{T} \I{T}$: 
        \begin{align*}
            D^{\frac{1}{2}}\X{T}D^{\frac{1}{2}} &= D^{\frac{1}{2}}(\Pmat{T} \Dminushalf \F{T} \Dminushalf \Pmat{T}) D^{\frac{1}{2}} 
            \\
            &=
            D^{\frac{1}{2}}(\I{T} - \frac{1}{k}\sqrt{\dG{T}}\sqrt{\dG{T}'}) \Dminushalf \F{T} \Dminushalf 
            (\I{T} - \frac{1}{k}\sqrt{\dG{T}}\sqrt{\dG{T}'}) D^{\frac{1}{2}} 
            \\
            &= (\I{T} - \frac{1}{k}\dG{T}\1_T') \F{T} (\I{T} - \frac{1}{k}\1_T\dG{T}') 
            \\
            &=
            \I{T} \F{T} \I{T} - \frac{1}{k}\dG{T}\1_T' \F{T} \I{T} - \frac{1}{k} \I{T} \F{T} \1_T\dG{T}' + 
            \frac{1}{k^2} \dG{T}\1_T' \F{T} \1_T \dG{T}'.
        \end{align*}

        Rearranging the terms, we get
        \begin{align*}
            \I{T} \F{T} \I{T} = 
            D^{\frac{1}{2}} \X{T} D^{\frac{1}{2}} 
            +\frac{1}{k}\dG{T}\1_T' \F{T} \I{T}
            +\frac{1}{k} \I{T} \F{T} \1_T\dG{T}'
            -\frac{1}{k^2} \dG{T}\1_T' \F{T} \1_T \dG{T}'  \ .
        \end{align*}

        Therefore
         \begin{align*}
            |\bar{E}_{\F{T}}(S,S)| = & \1_S' \F{T} \1_S =
            \1_S' \left(
            D^{\frac{1}{2}}\X{T} D^{\frac{1}{2}} +
            \frac{1}{k}\dG{T}\1_T' \F{T} \I{T} + 
            \frac{1}{k} \I{T} \F{T} \1_T\dG{T}' - 
            \frac{1}{k^2} \dG{T}\1_T' \F{T} \1_T \dG{T}'
            \right) \1_S .
        \end{align*}

        We analyze the summands separately. The first summand can be bounded using $\bar{\lambda}$, the largest singular value of $\X{T}$:
         \begin{align*}
            \1_S' D^{\frac{1}{2}}\X{T}D^{\frac{1}{2}} \1_S = \sqrt{\dG{S}'} X \sqrt{\dG{S}} =
            \left\langle \sqrt{\dG{S}}, X \sqrt{\dG{S}} \right\rangle \le
            \norm{\sqrt{\dG{S}}}_2 \norm{\X{T}\sqrt{\dG{S}}}_2 \le \norm{\sqrt{\dG{S}}}_2^2 \bar{\lambda} \le \frac{\ell}{20},
        \end{align*}
          where the first inequality is the Cauchy-Schwartz inequality. Observe that the second and third summands are equal:
        \begin{align*}
            \frac{1}{k} \1_S' \dG{T}\1_T' \F{T} \I{T} \1_S =
            \frac{\ell}{k} \1_T' \F{T} \1_S = 
            \frac{\ell}{k} \1_S' \F{T} \1_T = 
            \frac{1}{k} \1_S' \I{T} \F{T} \1_T \dG{T}' \1_S,
        \end{align*}

        where the second equality follows by transposing and since $\F{T}$ is symmetric. We now bound the sum of the second, third and fourth summands:
        \begin{align*}
            &\1_S' \left(
            \frac{1}{k}\dG{T}\1_T' \F{T} \I{T} + 
            \frac{1}{k} \I{T} \F{T} \1_T\dG{T}' - 
            \frac{1}{k^2} \dG{T}\1_T' \F{T} \1_T \dG{T}'\right) \1_S =
            \frac{2\ell}{k} \1_T' \F{T} \1_S - \frac{\ell^2}{k^2}\1_T' \F{T} \1_T \\&\le
            \left( \frac{2\ell}{k} - \frac{\ell^2}{k^2} \right)\1_T' \F{T} \1_S \le
            \left( \frac{2\ell}{k} - \frac{\ell^2}{k^2} \right)\1' \F{T} \1_S = 
            \left( \frac{2\ell}{k} - \frac{\ell^2}{k^2} \right)\dG' \1_S = 
            \frac{\ell}{k}\left( 2 - \frac{\ell}{k} \right) \ell,
        \end{align*}

        where the first inequality follows since $S \subseteq \A{t}$. Note that $\frac{\ell}{k}\in [0,\frac{1}{2}]$. The last inequality is true because for $\frac{\ell}{k}$ in this range, $\left(\frac{2\ell}{k}-\frac{\ell^2}{k^2}\right)\ge 0$. Moreover, because $\frac{\ell}{k}\in \left[0,\frac{1}{2}\right]$, we have $\frac{\ell}{k}\left(2-\frac{\ell}{k}\right) \le \frac{3}{4}$. Therefore, $|\bar{E}_{\F{T}}(S,S)| \le \frac{1}{20}\ell + \frac{3}{4}\ell = \frac{4}{5}\ell$, and
        \begin{align*}
            |\bar{E}(S,V\setminus {S})| &= \sum_{u\in S}{\sum_{v\in V\setminus {S}}{\F{T}{u,v}}} = \sum_{u\in S}{\sum_{v\in V}{\F{T}{u,v}}} - \sum_{u\in S}{\sum_{v\in S}{\F{T}{u,v}}}
            \\
            &= \sum_{u\in S}{\dG{}{u}} - \sum_{u\in S}{\sum_{v\in S}{\F{T}{u,v}}} \ge \ell - \frac{4}{5}\ell = \frac{\ell}{5} \ .
        \end{align*}
     So $\Phi_G(S, V\setminus {S}) = \frac{|\bar{E}(S,V\setminus {S})|}{\vol(S)} \ge \frac{1}{5}$, and this is true for all cuts $S\subseteq A$ with $\frac{\vol(S)}{\vol(A)}\le\frac{1}{2}$.

\end{proof}

    \begin{corollary}\label{cor:our_G_expander}
        If we reach round $T$, then with high probability, $A_T$ is a near $\Omega(\phi)$-expander in $G$.
    \end{corollary}
    \begin{proof}
        Assume we reach round $T$. By Corollary \ref{cor:our_total_potential} and Lemma \ref{lemma:our_F_expander}, with high probability, $A_T$ is a near $\Omega(1)$-expander in $\F{T}$. By Corollary~\ref{lemma:our_F_t_embeddable_in_G_t}, $\F{T}$ is embeddable in $G_T$ with congestion $O(\frac{1}{\delta})$. Note that $G_T$ is a union of $T$ $\dGG$-matchings $\{\M{t}\}_{t=1}^T$, each having $\dG{\M{t}} = \dGG = \dG{\F{T}}$. Therefore, $\dG{G_T} = T\cdot \dG{\F{T}}$. 
        So by Lemma \ref{lemma:near_expansion_and_embedding}, $A_T$ is a near $\Omega(\frac{\delta}{T})$-expander in $G_T$. By Lemma~\ref{lemma:our_G_t_embeddable_in_G}, $G_T$ is embeddable in $G$ with congestion $cT$. Together with  the fact that $\dGG = \frac{1}{T} \cdot \dG{G_T}$, we get by Lemma \ref{lemma:near_expansion_and_embedding} again, that $A$ is a near  $\Omega(\frac{\delta}{cT})$-expander in $G$. Recall that $c=O\left(\frac{1}{\phi\log n}\right)$, $\delta = \Theta(\log n)$, and $T=O(\log^2 n)$. Therefore, $A$ is an near $\Omega(\phi)$-expander in $G$.
    \end{proof}

    \subsection{Proof of Theorem \ref{theorem:our}}
    \label{section:our_theorem_proof}

We are now ready to prove Theorem~\ref{theorem:our}. 
\begin{proof} [Proof of Theorem \ref{theorem:our}]
    Recall that $S_t$ denotes the cut returned by Lemma \ref{lemma:unit_flow} at iteration $t$, so that $A_{t+1} = A_{t}\setminus S_{t}$. 
    Observe first that in any round $t$, we have $\Phi_{G}(A_t, R_t)\le\frac{7}{c}=O(\phi\log n)$. 
	    This is because $R_t=\bigcup_{0\le t' < t}{S_{t'}}$ and by Lemma \ref{lemma:unit_flow}, for each $t'$, $\Phi_{G[A_{t'}]}(S_{t'}, V\setminus {S_{t'}})\le\frac{7}{c}=O(\phi\log n)$. 

    Assume Algorithm \ref{algo:cut_matching} terminates because $\vol(R_t)>\frac{m\cdot c\cdot \phi}{70}=\Omega(\frac{m}{\log n})$. We also have, by Lemma \ref{lemma:unit_flow}, that $\vol(A_t)= \Omega(m)= \Omega(\frac{m}{\log n})$. Then $(A_t,R_t)$ is a balanced cut where $\Phi_{G}(A_t, R_t)=O(\phi\log n)$. We end in Case (2) of Theorem \ref{theorem:our}. 

	    Otherwise, the algorithm reached round $T$ and we apply Corollary \ref{cor:our_G_expander}. If $R=\emptyset$, then we obtain the first case of Theorem \ref{theorem:our} because the whole vertex set $V$ is, with high probability, a near $\Omega(\phi)$-expander, which means that $G$ is an $\Omega(\phi)$-expander. Otherwise, we write $c = \frac{c_1}{\phi\log n}$ for some constant $c_1$, and let $c_0 \defeq \frac{7}{c_1}$. We have $\Phi_{G}(A_T, R_T)\le\frac{7}{c}=\frac{7}{c_1} \phi \log n = c_0 \phi\log n$.
    Additionally, $\vol(R_T)\le\frac{m\cdot c\cdot \phi}{70}=\frac{m\cdot c_1}{70\log n}=\frac{m}{10c_0\log n}$, and, with high probability, $A_T$ is a near $\Omega(\phi)$-expander in $G$, which means we obtain the third case of Theorem \ref{theorem:our}.

    Finally, the running time of Algorithm  \ref{algo:cut_matching} is $O(m\log^5 n + \frac{m\log^2 n}{\phi})$:
    The time of iteration $t$ is the sum of the running times of the following steps:
    \begin{enumerate}
        \item Sample a random unit vector $r\in \RR^n$.
        \item Compute the projections vector $u = \Dminushalf \W{t} \cdot r$. This takes $O(t\cdot\delta\cdot m)=O(m\cdot t\cdot\log n)$ time since $\W{t}$ is a multiplication of $O(t \cdot \delta)$ matrices, where each matrix either has $O(m)$ non-zero entries or is a projection matrix $\Pmat{t}$.
        \item Computing $A^l_t$ and $A^r_t$ in time $O(n\log n)$ (Lemma \ref{lemma:RST}).
        \item Computing the cut $S_t$ and the flow $f$ on $G - S_t$ in time $O(\frac{m}{\phi})$ (Lemma \ref{lemma:unit_flow}).
        \item Moving $S_t$ from $A_t$ to $R_{t+1}$ in time $O(m)$.
        \item Constructing $M_{t}$ in time $O(m\log n)$ (using dynamic trees \cite{ST83}).
    \end{enumerate}
    This gives a total running time of $O(mt\log n + \frac{m}{\phi})$ for iteration $t$. As $t$ ranges from $1$ to $T=\Theta(\log^2 n)$, the total time will be $O(m\log^5 n + \frac{m\log^2 n}{\phi})$.
    This completes the proof of Theorem \ref{theorem:our}.

\end{proof}

\bibliography{refs}{}

\newcommand{\etalchar}[1]{$^{#1}$}
\begin{thebibliography}{HKGW22}

\bibitem[AALG17]{alev2017graph}
Vedat~Levi Alev, Nima Anari, Lap~Chi Lau, and Shayan~Oveis Gharan.
\newblock Graph clustering using effective resistance.
\newblock {\em arXiv preprint arXiv:1711.06530}, 2017.

\bibitem[AOTT22]{pmlr-v162-orecchia22a}
Konstantinos Ameranis, Lorenzo Orecchia, Kunal Talwar, and Charalampos Tsourakakis.
\newblock Practical almost-linear-time approximation algorithms for hybrid and overlapping graph clustering.
\newblock In {\em Proceedings of the 39th International Conference on Machine Learning (ICML)}, pages 17071--17093, 2022.

\bibitem[ARV09]{arora2009expander}
Sanjeev Arora, Satish Rao, and Umesh Vazirani.
\newblock Expander flows, geometric embeddings and graph partitioning.
\newblock {\em Journal of the ACM (JACM)}, 56(2):1--37, 2009.

\bibitem[Bha13]{bhatia2013matrix}
Rajendra Bhatia.
\newblock {\em Matrix analysis}, volume 169.
\newblock Springer Science \& Business Media, 2013.

\bibitem[CGL{\etalchar{+}}20]{chuzhoy2020deterministic}
Julia Chuzhoy, Yu~Gao, Jason Li, Danupon Nanongkai, Richard Peng, and Thatchaphol Saranurak.
\newblock A deterministic algorithm for balanced cut with applications to dynamic connectivity, flows, and beyond.
\newblock In {\em Proceedings of the 61st Annual Symposium on Foundations of Computer Science (FOCS)}, pages 1158--1167, 2020.

\bibitem[CGP{\etalchar{+}}20]{chu2020graph}
Timothy Chu, Yu~Gao, Richard Peng, Sushant Sachdeva, Saurabh Sawlani, and Junxing Wang.
\newblock Graph sparsification, spectral sketches, and faster resistance computation via short cycle decompositions.
\newblock In {\em Proceedings of the 61st Annual Symposium on Foundations of Computer Science (FOCS)}, pages 18--85, 2020.

\bibitem[Che70]{cheeger1970lower}
Jeff Cheeger.
\newblock A lower bound for the smallest eigenvalue of the laplacian.
\newblock {\em Problems in analysis}, 625(195-199):110, 1970.

\bibitem[CKL{\etalchar{+}}22]{chen2022maximum}
Li~Chen, Rasmus Kyng, Yang~P. Liu, Richard Peng, Maximilian~Probst Gutenberg, and Sushant Sachdeva.
\newblock Maximum flow and minimum-cost flow in almost-linear time.
\newblock {\em arXiv preprint arXiv:2203.00671}, 2022.

\bibitem[CKP{\etalchar{+}}17]{cohen2017almost}
Michael~B. Cohen, Jonathan Kelner, John Peebles, Richard Peng, Anup~B. Rao, Aaron Sidford, and Adrian Vladu.
\newblock Almost-linear-time algorithms for markov chains and new spectral primitives for directed graphs.
\newblock In {\em Proceedings of the 49th Annual Symposium on Theory of Computing (STOC)}, pages 410--419, 2017.

\bibitem[CS19]{chang2019improved}
Yi-Jun Chang and Thatchaphol Saranurak.
\newblock Improved distributed expander decomposition and nearly optimal triangle enumeration.
\newblock In {\em Proceedings of the 2019 ACM Symposium on Principles of Distributed Computing (PODS)}, pages 66--73, 2019.

\bibitem[CS20]{chang2020deterministic}
Yi-Jun Chang and Thatchaphol Saranurak.
\newblock Deterministic distributed expander decomposition and routing with applications in distributed derandomization.
\newblock In {\em Proceedings of the 61st Annual Symposium on Foundations of Computer Science (FOCS)}, pages 377--388, 2020.

\bibitem[GRST21]{goranci2021expander}
Gramoz Goranci, Harald R{\"a}cke, Thatchaphol Saranurak, and Zihan Tan.
\newblock The expander hierarchy and its applications to dynamic graph algorithms.
\newblock In {\em Proceedings of the 2021 ACM-SIAM Symposium on Discrete Algorithms (SODA)}, pages 2212--2228. SIAM, 2021.

\bibitem[GT88]{gt87}
Andrew~V. Goldberg and Robert~E. Tarjan.
\newblock A new approach to the maximum-flow problem.
\newblock {\em Journal of the ACM (JACM)}, 35(4):921--940, 1988.

\bibitem[HKGW22]{hua2022maintaining}
Yiding Hua, Rasmus Kyng, Maximilian~Probst Gutenberg, and Zihang Wu.
\newblock Maintaining expander decompositions via sparse cuts.
\newblock {\em arXiv preprint arXiv:2204.02519}, 2022.

\bibitem[HRW17]{henzinger2017flow}
Monika Henzinger, Satish Rao, and Di~Wang.
\newblock Local flow partitioning for faster edge connectivity.
\newblock In {\em Proceedings of the 28th Annual ACM-SIAM Symposium on Discrete Algorithms (SODA)}, pages 1919--–1938, 2017.

\bibitem[KLOS14]{kelner2014almost}
Jonathan~A. Kelner, Yin~Tat Lee, Lorenzo Orecchia, and Aaron Sidford.
\newblock An almost-linear-time algorithm for approximate max flow in undirected graphs, and its multicommodity generalizations.
\newblock In {\em Proceedings of the 25th Annual ACM-SIAM Symposium on Discrete algorithms (SODA)}, pages 217--226, 2014.

\bibitem[KRV09]{khandekar2009graph}
Rohit Khandekar, Satish Rao, and Umesh Vazirani.
\newblock Graph partitioning using single commodity flows.
\newblock {\em Journal of the ACM}, 56(4):1--15, 2009.

\bibitem[LNPS23]{LNPSsoda13}
Jason Li, Danupon Nanongkai, Debmalya Panigrahi, and Thatchaphol Saranurak.
\newblock Near-linear time approximations for cut problems via fair cuts.
\newblock In {\em Proceedings of the 2023 Annual ACM-SIAM Symposium on Discrete Algorithms (SODA)}, pages 240--275, 2023.

\bibitem[MS90]{matula1990sparsest}
David~W. Matula and Farhad Shahrokhi.
\newblock Sparsest cuts and bottlenecks in graphs.
\newblock {\em Discrete Applied Mathematics}, 27(1-2):113--123, 1990.

\bibitem[Ore11]{orecchia2011fast}
Lorenzo Orecchia.
\newblock {\em Fast approximation algorithms for graph partitioning using spectral and semidefinite-programming techniques}.
\newblock PhD thesis, Berkeley, 2011.

\bibitem[OSV12]{orecchia2012approximating}
Lorenzo Orecchia, Sushant Sachdeva, and Nisheeth~K. Vishnoi.
\newblock Approximating the exponential, the lanczos method and an $\tilde{O} (m)$-time spectral algorithm for balanced separator.
\newblock In {\em Proceedings of the 44th Annual Symposium on Theory of Computing (STOC)}, pages 1141--1160, 2012.

\bibitem[OSVV08]{orecchia2008partitioning}
Lorenzo Orecchia, Leonard~J. Schulman, Umesh~V. Vazirani, and Nisheeth~K. Vishnoi.
\newblock On partitioning graphs via single commodity flows.
\newblock In {\em Proceedings of the 40th Annual Symposium on Theory of Computing (STOC)}, pages 461--470, 2008.

\bibitem[RST14]{racke2014computing}
Harald R{\"a}cke, Chintan Shah, and Hanjo T{\"a}ubig.
\newblock Computing cut-based hierarchical decompositions in almost linear time.
\newblock In {\em Proceedings of the 25th Annual ACM-SIAM Symposium on Discrete Algorithms (SODA)}, pages 227--238, 2014.

\bibitem[She09]{sherman2009breaking}
Jonah Sherman.
\newblock Breaking the multicommodity flow barrier for ${O} (\sqrt{ \log n})$-approximations to sparsest cut.
\newblock In {\em Proceedings of the 50th Annual Symposium on Foundations of Computer Science (FOCS)}, pages 363--372. IEEE, 2009.

\bibitem[{\v{S}}S06]{vsima2006np}
Ji{\v{r}}{\'\i} {\v{S}}{\'\i}ma and Satu~Elisa Schaeffer.
\newblock On the {NP}-completeness of some graph cluster measures.
\newblock In {\em International Conference on Current Trends in Theory and Practice of Computer Science (SOFTSEM)}, pages 530--537. Springer, 2006.

\bibitem[ST83]{ST83}
Daniel~D. Sleator and Robert~Endre Tarjan.
\newblock A data structure for dynamic trees.
\newblock {\em Journal of Computer and System Sciences}, 26(3):362--391, 1983.

\bibitem[ST04]{spielman2004nearly}
Daniel~A. Spielman and Shang-Hua Teng.
\newblock Nearly-linear time algorithms for graph partitioning, graph sparsification, and solving linear systems.
\newblock In {\em Proceedings of the 36th Annual Symposium on Theory of Computing (FOCS)}, pages 81--90, 2004.

\bibitem[SW19]{saranurak2019expander}
Thatchaphol Saranurak and Di~Wang.
\newblock Expander decomposition and pruning: Faster, stronger, and simpler.
\newblock In {\em Proceedings of the 30th Annual ACM-SIAM Symposium on Discrete Algorithms (SODA)}, pages 2616--2635, 2019.

\end{thebibliography}
\bibliographystyle{alpha}

\newpage

\appendix

 \section{KRV's Cut-Matching Game for Conductance}
	\label{section:krv}
 	\label{appendix:krv}
    In this section we show how to modify the cut-matching framework of \cite{khandekar2009graph}
    to bound the conductance of the graph 
 (rather than its expansion). It expands on the overview in subsection \ref{section:krv-short}. We prove the following theorem.
    \begin{theorem}[Theorem \ref{theorem:krv}]
        Given a graph $G$ and a parameter $\phi > 0$, there exists a randomized algorithm, whose running time is dominated by computing a polylogarithmic number of max flow problems, that either 

        \begin{enumerate}
            \item Certifies that $\Phi(G)= \Omega(\phi)$ with high probability; or
            \item Finds a cut $(S, \bar{S})$ in $G$ whose conductance is $\Phi_G(S,\bar{S}) = O(\phi\log^2 n)$.
        \end{enumerate}    
    \end{theorem}

    The algorithm is based on the cut-matching game, defined in Section \ref{section:cut-matching-conductance}.
    Section \ref{section:krv-short} presented a general strategy for the cut player, which achieves 
     \emph{quality} of
    $r(n) \coloneqq \Phi(G_T) =\Omega(1/\log^2 n)$. It also showed a strategy for the matching player, tailored for the graph $G$, that either finds a sparse cut in $G$ or finds a $d$-matching that can be embedded in $G$ with low congestion.

    \subsection{The Algorithm  }\label{appendix:krv-algorithm}

KRV's algorithm is shown in Algorithm \ref{algo:krv_cut_matching}. 
It has
 $T=\Theta(\log^2 n)$ rounds.
   We maintain a sparse representation of a $\dGG$-stochastic flow matrix $\F{t}\in \RR^{n\times n}$. For the purpose of analysis, we maintain a series of graphs $G_t$ ($t$ is the index of the round). The update of the matrix $F_t$, as detailed in Section \ref{section:krv-short}, is summarized in Algorithm \ref{algo:krv_update}. We show that $\F{t}$ is embeddable in $G_t$ with congestion $O(1)$, and that $G_t$ is embeddable in $G$ with congestion $ct$, where $c=\Theta\left(\frac{1}{\phi\log^2 n}\right)$ is an integer.
   At the beginning, $\F{0} = \D = \diag(\dG)$, and $G_0$ is the empty graph on $V = [n]$. 
    In each iteration, the cut player computes a cut according to $F_t$ and the matching player either discovers a low conductance cut or returns a $d$-matching which is embeddable in $G$ with constant congestion (and the cut player updates $F_t$ accordingly). In the former case, we stop and return the cut (Lemma \ref{lemma:krv_3.7_small_flow} shows that it is indeed a sparse cut). If we did not stop after $T=\Theta(\log^2 n)$ rounds, $\F{T}$ will have constant conductance with high probability, which implies that $\Phi(G)=\Omega(\phi)$, so we can terminate with Case (1) of Theorem \ref{theorem:krv}. 
\begin{algorithm}[hbt!]
        \caption{KRV Cut-Matching for Conductance \cite{khandekar2009graph}}
        \label{algo:krv_cut_matching}
        \begin{algorithmic}[1]
            \Function{KRV-Cut-Matching}{$G, \phi$}
                \State $T\gets\Theta(\log^2 n)$.
                \State Set $F\gets D$.
                \For{$t = 1, 2, \ldots, T$} 
                    \State $F \gets $ \Call{KRV-Update-F}{$G, \phi, F$}. \Comment{See Algorithm \ref{algo:krv_update} and Section \ref{section:krv-short}.}
                    \If {the update returned a cut $(S, \bar{S})$}
                        \State \Return{$(S, \bar{S})$}. \Comment{Case (2) of Theorem \ref{theorem:krv}.}
                    \EndIf
                \EndFor
                \State Certify that $\Phi(G)=\Omega(\phi)$. \Comment{Case (1) of Theorem \ref{theorem:krv}.}
            \EndFunction
        \end{algorithmic}
    \end{algorithm}
    \begin{algorithm}[hbt!]
        \caption{KRV Round Update \cite{khandekar2009graph}. See Section \ref{section:krv-short} for details.}
        \label{algo:krv_update}
        \begin{algorithmic}[1]
            \Function{KRV-Update-F}{$G, \phi, F$}
                \State $c\gets\Theta\left(\frac{1}{\phi \log^2 n}\right)$. \Comment{$c$ is an integer.}
                \smallskip
                \State \underline{\textbf{Cut Player:}}
                \smallskip
                \State $r\gets$ Random unit vector of $\RR^{n}$ orthogonal to $\sqrt{\dGG}$.
                \State $u \gets D^{-1} \F\Dminushalf \cdot r$.
                \State Sort the entries of $u$ as $u_{i_1} \le \cdots \le u_{i_n}$.
                \State $Q \gets (i_1, i_1, \ldots, i_1, i_2, i_2, \ldots, i_2, \ldots, i_n, \ldots, i_n)$. \Comment{Each $i_j$ appears $\dGG{i_j}$ times.}
                \State $L \gets (Q_1, \ldots, Q_m)$, $R \gets (Q_{m+1}, \ldots, Q_{2m})$.
                \smallskip
                \State \underline{\textbf{Matching Player:}}
                \smallskip
                \State $G' \gets (V', E')$, where $V' \gets V\cup \{s, t\}$, $E' \gets E \cup (\{s\} \times L) \cup (R \times \{t\})$.
                \State Let $m_v$, $\bar{m_v}$ be the number of times $v$ appears in $L$, $R$, respectively.
                \State Set the capacity of $e\in E$ to $c$, of $(s, v) \in \{s\}\times L$ to $m_v$ and of $(v, t)\in R\times \{t\}$ to $\bar{m_v}$.

                \State Compute a maximum flow $g$ from $s$ to $t$ in $G'$.
                \If{$|g| < m$} \Comment{$|g|$ is the value of the flow $g$.}
                    \State Find a minimum $(s,t)$-cut $(S,\bar{S})$ in $G'$.
                    \State \Return{$(S\cap G, \bar{S}\cap G)$}.
                \Else
                    \State Decompose $g$ into a set of paths $\{u_j \to v_j\}_j$, where $u_j \in L$ and $v_j \in R$.
                    \State $M \gets \{\{u_j, v_j\}\}_{j=1}^m$. 

                    \\\Comment{$M$ is a symmetric $n\times n$ matrix, $M(u,v)$ is the number of paths between $u$ and $v$.}
                    \smallskip
                    \State \underline{\textbf{Update of $F$:}}
                    \smallskip
                    \State $F_{\text{new}} \gets \frac{1}{2}\left(I + \M\cdot D^{-1}\right)\F$.
                    \State \Return{$F_{\text{new}}$}.
                \EndIf
            \EndFunction
        \end{algorithmic} 

    \end{algorithm}

    \begin{lemma}[Similar to Lemma 3.7 of~\cite{khandekar2009graph}]
    \label{lemma:krv_3.7_small_flow}
        If the auxiliary maximum flow problem in $G'$ has value smaller than $m$ then $\Phi(G) < \frac{1}{c}$.
    \end{lemma}

    \begin{proof}
        Consider a minimum $(s,t)$-cut, $(S,\bar{S})$, in $G' = G \cup \{s, t\}$. Note that because $m > |E_{G'}(S,\bar{S})|$, $(S,\bar{S})$ is nontrivial in $G$. I.e., it is not $(\{s\}, G'\setminus \{s\})$ or $(G'\setminus \{t\}, \{t\})$. We get 
        \[m > |E_{G'}(S,\bar{S})| = \vol(L \cap \bar{S}) + \vol(R \cap S) + c\cdot |E_G(S,\bar{S})| \ ,\]
        where the first inequality follows since the maximum flow has value smaller than $m$.
        By rewriting the terms, it follows $|E_G(S,\bar{S})| < \frac{1}{c} \cdot (m - \vol(L \cap \bar{S}) - \vol(R \cap S))$. Therefore, 
\begin{align*}
            \Phi_G(S, \bar{S}) = \frac{|E_G(S,\bar{S})|}{\min (\vol(S),\vol(\bar{S}))} \le \frac{|E_G(S,\bar{S})|}{\min (m - \vol(L \cap \bar{S}), m - \vol(R \cap S))} < \frac{1}{c} \ ,
        \end{align*}

        where the first inequality follows since $\vol(S) + \vol(L \cap \bar{S}) \ge \vol(L) = m$ and $\vol(\bar{S}) + \vol(R\cap S) \ge \vol(R) = m$.
    \end{proof}

    The rest of this section is organized as follows.  Subsection \ref{section:krv_F_embeddable_G} shows that $\F{t}$ is embeddable in $G_t$ with congestion $1$ and that $G_t$ is embeddable in $G$ with congestion $c\cdot t$. Subsection \ref{section:krv_F_t_expander} shows that if we reach round $T$, then with high probability, $\Phi(G)=\Omega(\phi)$. Finally, in Subsection \ref{section:krv_theorem_proof} we prove Theorem \ref{theorem:krv}.

    \subsection{$\F{t}$ is embeddable in $G$}
    \label{section:krv_F_embeddable_G}
\begin{lemma}
    \label{lemma:krv_F_t_d_stochastic}
        For all rounds $t$, $\F{t}$ is a $\dG$-stochastic matrix.
    \end{lemma}
    \begin{proof}
        By induction on $t$. $\F{0} = D = \diag(\dG)$ is clearly $\dG$-stochastic. 
        After step $t$, 
        \begin{align*}
            \F{{t+1}}\1_n &= \frac{1}{2}\left(I + \M{t}\cdot D^{-1}\right)\F{t}\cdot\1_n = \frac{1}{2}\left(I + \M{t}\cdot D^{-1}\right)\dG 
            \\
            &= \frac{1}{2}\left(\dG + \M{t}\cdot \1_n\right) = \frac{1}{2}\left(\dG + \dG\right) = \dG \ .
        \end{align*}
        Similarly,
        \begin{align*}
            \1'_n \cdot \F{{t+1}} &= \1'_n\frac{1}{2}\left(I + \M{t}\cdot D^{-1}\right)\F{t} = \frac{1}{2}\left(\1'_n + \1'_n\cdot \M{t}\cdot D^{-1}\right)\F{t} 
            \\
            &= \frac{1}{2}\left(\1'_n + \dG'\cdot D^{-1}\right)\F{t} = \frac{1}{2}\left(\1'_n + \1'_n\right)\F{t} = \1'_n \F{t} = \dG' \ .
        \end{align*}
    \end{proof}
\begin{lemma}
    \label{lemma:krv_F_t_embeddable_in_G_t}
        For all rounds $t$, $\F{t}$ is embeddable in $G_t$ with congestion $1$.
    \end{lemma}
    \begin{proof}

        We proceed by induction. Initially, $\F{0} = I$ is embeddable in $G_0=\emptyset$ with congestion $0$. Next, we assume that $\F{t}$ is embeddable in $G_t$ with congestion $1$, and show that $\F{{t+1}} = \frac{1}{2}(I + \M{t}\cdot D^{-1}) \F{t}$ is embeddable in $G_{t+1} = G_t \cup \M{t}$ with congestion $1$.

        Let $P:E\times V \to \RR_{\ge 0}$ be a routing of $\F{t}$ in $G_t$ with congestion $1$ (where $P((u,v), w)$ indicates how much of $w$'s commodity goes through the edge $(u,v)\in E$). Recall that $\F{{t+1}}$ is the flow matrix obtained by performing a weighted average on the rows of $\F{t}$ described by $\M{t}$. Explicitly, for every $v\in V, a \in V$, we have $\F{t+1}{v, a} = \Matrix{\frac{1}{2}(I + \M{t}\cdot D^{-1}) \F{t}}{v,a} = \frac{1}{2}\F{t}{v,a} + \frac{1}{2}\sum_{u : \{u,v\}\in \M{t}}{\frac{1}{\dG{}{u}}\F{t}{u,a}}$.

	    We define a routing $P'$ of $\F{{t+1}}$ in $G_{t+1}$ using the following operations:

        \begin{algorithm}[H]

            \begin{algorithmic}[]
                \State $P' \leftarrow \frac{1}{2}\cdot P$.
                \For{$\{a,b\}\in \M{t}$}
                    \State $P'((a,b),a) \leftarrow P'((a, b), a) + \frac{1}{2}$.
                    \State $P'((b,a),b) \leftarrow P'((b, a), b) + \frac{1}{2}$.
                \EndFor
                \For{$\{a,b\}\in \M{t}$}
                    \For{$(u,v)\in G_t$}\Comment{The edge $\{u,v\}\in G_t$ is scanned in both directions}
                        \State $P'((u,v),a) \leftarrow P'((u,v),a) + \frac{1}{2\dG{}{b}}P((u,v),b)$.
                        \State $P'((u,v),b) \leftarrow P'((u,v),b) + \frac{1}{2\dG{}{a}}P((u,v),a)$.
                    \EndFor
                \EndFor
            \end{algorithmic}
        \end{algorithm}

        Intuitively, we think of $P'$ in  stages. In the first stage, we scale $P$ by $\frac{1}{2}$. This routes $\frac{1}{2}\F{t}{v,a}$ units of flow (of $v$'s commodity) from $v$ to $a$ for every $v$ and $a$. After this stage, each vertex $v\in V$ currently sends $\frac{\dG{}{v}}{2}$ units of its commodity. In the next stage we wish to route an additional $\frac{1}{2}\sum_{u : \{u,v\}\in \M{t}}{\frac{1}{\dG{}{u}}\F{t}{u,a}}$ units from $v$ to $a$. To this end, we first move $\frac{1}{2}$ units from $v$'s commodity to each $u$ with $\{u,v\}\in \M{t}$ (this flow is sent through the edge $\{u,v\}\in G_{t+1}$). Now, each vertex $u\in V$ ``mixes'' the commodities it got from its neighbors and routes the $\frac{\dG{}{u}}{2}$ ``new'' units according to $P$, as if they were of its own commodity. 
Thus, $\frac{\F{t}{u,a}}{2}$ is the total flow sent from $u$ to $a$ when it routes the $\frac{\dG{}{u}}{2}$ ``new'' units and the ``share'' of $v$'s commodity from this flow is $\frac{1}{\dG{}{u}}$.
It follows that out of the $\frac{1}{2}$ unit $v$ sent to a $u$, exactly $\frac{1}{\dG{}{u}}\frac{\F{t}{u,a}}{2}$ units go to $a$. 

	    We now argue that the congestion of $P'$ is $1$.\footnote{A stronger claim holds: The congestion of every directed edge is $\frac{1}{2}$.} The edges of $\M{{t}}$ have congestion $1$ (because we route through them $\frac{1}{2}$ unit of flow in each direction), and the congestion on an edge $\{u,v\}\in G_t$ consists of at most $\frac{1}{2}$ from the routing of $\frac{1}{2}P$, plus an additional $\frac{1}{2\dG{}{b}}P(\{u,v\},b)+\frac{1}{2\dG{}{a}}P(\{u,v\},a)$ for each $\{a,b\}\in \M{t}$. In total, the congestion on $\{u,v\}$ is at most
	    \begin{align*}
	        \frac{1}{2} &+ \sum_{\{a,b\}\in \M{t}}{\left(\frac{1}{2\dG{}{b}}P(\{u,v\},b)+\frac{1}{2\dG{}{a}}P(\{u,v\},a)\right)} \numeq{1} \frac{1}{2} + \sum_{a\in V}{\dG{}{a}\cdot\frac{1}{2\dG{}{a}}P(\{u,v\},a)} 
	        \\
	        &= \frac{1}{2} + \frac{1}{2}\sum_{a\in V}{P(\{u,v\},a)} \numle{2} \frac{1}{2} + \frac{1}{2}\cdot 1 = 1 \ ,
	    \end{align*}
	    where Equality $(1)$ follows since every vertex $a\in V$ appears $d(u)$ times in the matching $\M{t}$ and Inequality $(2)$ holds since $\sum_{a\in V}{P(\{u,v\},a)}$ is the congestion of $P$ on the edge $(u,v)$, which is at most $1$ by the induction hypothesis. Note that in fact, in our construction, the congestion on an edge $\{u,v\}\in G_t$  
         did not change. Only the mix of commodities that flow on the edge was modified.

    \end{proof}
\begin{lemma}
    \label{lemma:krv_G_t_embeddable_in_G}
        For all rounds $t$, $G_t$ is embeddable in $G$ with congestion $c\cdot t$.
    \end{lemma}
    \begin{proof}
        For every $t$, by the definition of the flow problem at round $t$, $\M{t}$ is embeddable in $G$ with congestion $c$. Summing these routings gives a routing of $G_t = \bigcup_{i=1}^t{\M{i}}$ in $G$ with congestion $c\cdot t$.
    \end{proof}

    \subsection{$\F{T}$ is an expander}
    \label{section:krv_F_t_expander}
    In this section we
     prove that after $T=\Theta(\log^2 n)$ rounds, with high probability, $\Phi(\F{T}) = \Omega(1)$, which implies $\Phi(G) = \Omega(\phi)$.
    Consider the potential:
    \[\psi(t) = \sum_{i\in V}{\sum_{j\in V}{\frac{1}{\dG{}{i}\cdot \dG{}{j}}\left(\F{t}{i,j}-\frac{\dG{}{i} \dG{}{j}}{2m}\right)^2}}\]

    This potential represents the distance between the normalized adjacency matrix $\Dminushalf \F{t} \Dminushalf$ and the matrix $\sqrt{\dGG} \sqrt{\dGG}'/2m$. 
    Note that if the graph is regular (\ie, $d(i)$ are all equal), then $\sqrt{\dGG} \sqrt{\dGG}'/2m$ is the uniform matrix $\frac{\1\cdot\1'}{n}$.

    Note that 
    \begin{align}
    \label{eq:krv_psi_0}
        \psi(0) &= \sum_{i\in V}{\sum_{j\in V}{\frac{1}{\dG{}{i}\cdot \dG{}{j}}\left(\D{}{i,j}-\frac{\dG{}{i} \dG{}{j}}{2m}\right)^2}} 
        \notag \\
        &= \sum_{i\in V}{\sum_{j\in V\setminus\{i\}}{\frac{1}{\dG{}{i}\cdot \dG{}{j}}\left(-\frac{\dG{}{i} \dG{}{j}}{2m}\right)^2}} + \sum_{i\in V}{\frac{1}{\dG{}{i}^2}\left(\dG{}{i}-\frac{\dG{}{i}^2}{2m}\right)^2} 
        \notag \\
        &= \sum_{i\in V}{\sum_{j\in V\setminus\{i\}}{\frac{\dG{}{i} \dG{}{j}}{4m^2}}} + \sum_{i\in V}{\left(1-\frac{\dG{}{i}}{2m}\right)^2} = \sum_{i\in V}{\frac{\dG{}{i}}{2m}\left(1 - \frac{\dG{}{i}}{2m}\right)} + \sum_{i\in V}{\left(1-\frac{\dG{}{i}}{2m}\right)^2}
        \notag \\
        &= \sum_{i\in V}{\left(1 - \frac{\dG{}{i}}{2m}\right)} = n-1    
    \end{align}
\begin{lemma}
    \label{lemma:krv_potential_step_1}
        For each round $t$, $\psi(t) - \psi(t+1) \ge \frac{1}{2}\sum_{\{i,k\} \in \M{t}}{\sum_{j\in V}{\left(\frac{\F{}{i,j}}{\dG{}{i}\sqrt{\dG{}{j}}} - \frac{\F{}{k,j}}{\dG{}{k}\sqrt{\dG{}{j}}}\right)^2}}$.
    \end{lemma}
\begin{proof}
        Rewrite the potential as 
        \begin{equation} \label{eq:psi}
        \psi(t) = \sum_{i\in V}{\sum_{\ell = 1}^{\dG{}{i}}{\sum_{j\in V}{\left(\frac{\F{t}{i,j}}{\dG{}{i}\sqrt{\dG{}{j}}} - \frac{\sqrt{\dG{}{j}}}{2m}\right)^2}}} = \sum_{(i,k)\in \M{t}}{\sum_{j\in V}{\left(\frac{\F{t}{i,j}}{\dG{}{i}\sqrt{\dG{}{j}}} - \frac{\sqrt{\dG{}{j}}}{2m}\right)^2}} \ .
        \end{equation}
        Note that each edge $\{i,k\}\in \M{t}$ appears in the sum twice, once as $(i,k)$ and once as $(k,i)$, and therefore the $i$'th row of $\F{t}$ appears in $\dG{}{i}$ summands. Thus, we think of the $i$'th row of $\F{t}$ as the sum of $\dG{}{i}$ equal ``fractional rows'', each corresponding to a different edge $(i, k) \in \M{t}$ incident to $i$. 

        At a high level, in order to bound the potential decrease, we view the decrease in two stages. In the first stage we average the fractional rows according to $\M{t}$: For each edge $\{i, k\}\in \M{t}$ we average the fractional row of $i$ corresponding to $(i, k)$ with a fractional row of $k$ corresponding to $(k, i)$.
        Note that after this step, the fractional rows corresponding to a vertex $i$ might be different, as each of them was averaged with a fractional row corresponding to different vertex. The potential after this averaging is denoted by $\psi'(t)$. In the second stage, we average the fractional parts of each row, and show that we get $\psi(t+1)$. We show that both stages reduce the potential.
        Formally, define 
        \begin{align}
            \psi'(t) &= \sum_{(i,k) \in \M{t}}{\sum_{j\in V}{\left(\frac{\frac{\F{t}{i,j}}{\dG{}{i}\sqrt{\dG{}{j}}} + \frac{\F{t}{k,j}}{\dG{}{k}\sqrt{\dG{}{j}}}}{2} - \frac{\sqrt{\dG{}{j}}}{2m}\right)^2}} 
            \notag \\
            &= 2\sum_{\{i,k\} \in \M{t}}{\sum_{j\in V}{\left(\frac{\frac{\F{t}{i,j}}{\dG{}{i}\sqrt{\dG{}{j}}} + \frac{\F{t}{k,j}}{\dG{}{k}\sqrt{\dG{}{j}}}}{2} - \frac{\sqrt{\dG{}{j}}}{2m}\right)^2}}\ . \label{eq:psi'}
        \end{align}
        Using (\ref{eq:psi}) and (\ref{eq:psi'}), we can write $\psi(t)-\psi'(t)$ as follows: 
        \begin{align*}
            \psi(t)-\psi'(t) = \sum_{\{i,k\} \in \M{t}} \sum_{j\in V}
            \Bigg[\Bigg. & \left(\frac{\F{t}{i,j}}{\dG{}{i}\sqrt{\dG{}{j}}} - \frac{\sqrt{\dG{}{j}}}{2m}\right)^2 
             + \left(\frac{\F{t}{k,j}}{\dG{}{k}\sqrt{\dG{}{j}}} - \frac{\sqrt{\dG{}{j}}}{2m}\right)^2 
            \\
            & - 2\left(\frac{\frac{\F{t}{i,j}}{\dG{}{i}\sqrt{\dG{}{j}}} + \frac{\F{t}{k,j}}{\dG{}{k}\sqrt{\dG{}{j}}}}{2} - \frac{\sqrt{\dG{}{j}}}{2m}\right)^2 \Bigg.\Bigg] \ .
        \end{align*}
        Note that for two $n$-dimensional vectors $w,v$, $\norm{w}_2^2+\norm{v}_2^2-2\norm{\frac{w+v}{2}}_2^2=\frac{1}{2}\norm{w-v}_2^2$. So we get that
        \begin{align*}
            \psi(t)-\psi'(t) = \frac{1}{2}\sum_{\{i,k\} \in \M{t}}{\sum_{j\in V}{\left(\frac{\F{t}{i,j}}{\dG{}{i}\sqrt{\dG{}{j}}} - \frac{\F{t}{k,j}}{\dG{}{k}\sqrt{\dG{}{j}}}\right)^2}} \ .
        \end{align*}

        It follows that to finish the proof, we only have to show that $\psi(t+1)\le\psi'(t)$.
For this define the vectors $a_{i,k}\in \RR^n$ for $(i,k)\in \M{t}$ as follows: $\MatrixSimple{a_{i,k}}{j} = \frac{\frac{\F{t}{i,j}}{\dG{}{i}\sqrt{\dG{}{j}}} + \frac{\F{t}{k,j}}{\dG{}{k}\sqrt{\dG{}{j}}}}{2} - \frac{\sqrt{\dG{}{j}}}{2m}$.
Using these vectors we can write
       \begin{align} \label{eq:psi-alternate}
            \psi'(t) &= \sum_{(i,k) \in \M{t}}{\sum_{j\in V}{\left(\frac{\frac{\F{t}{i,j}}{\dG{}{i}\sqrt{\dG{}{j}}} + \frac{\F{t}{k,j}}{\dG{}{k}\sqrt{\dG{}{j}}}}{2} - \frac{\sqrt{\dG{}{j}}}{2m}\right)^2}} = \sum_{i\in V}{\sum_{k : (i,k) \in \M{t}}{\norm{a_{i,k}}_2^2}} \ .
        \end{align}
        By rewriting the potential as before we get that 
        \begin{align*}
            \psi(t+1) = \sum_{i\in V}{\sum_{j\in V}{\dG{}{i}\left(\frac{\F{t+1}{i,j}}{\dG{}{i}\sqrt{\dG{}{j}}} - \frac{\sqrt{\dG{}{j}}}{2m}\right)^2}} \ .
        \end{align*}
To present  $\psi(t+1)$ also in terms of the vectors 
$a_{i,k}$ we recall that $\F{t+1}{i,j} = \frac{1}{2}\F{t}{i,j} + \sum_{(i, k) \in \M{t}}{\frac{1}{2\dG{}{k}}\F{t}{k,j}}$. Therefore,
for all $i\in V$, $\sum_{k : (i,k)\in \M{t}}{\MatrixSimple{a_{i,k}}{j}} = \frac{\F{t+1}{i,j}}{\sqrt{\dG{}{j}}} - \frac{\dG{}{i} \sqrt{\dG{}{j}}}{2m}$, so $\sum_{k : (i,k)\in \M{t}}{\frac{\MatrixSimple{a_{i,k}}{j}}{\dG{}{i}}} = \frac{\F{t+1}{i,j}}{\dG{}{i} \sqrt{\dG{}{j}}} - \frac{\sqrt{\dG{}{j}}}{2m}$. This means that 
      \begin{align} \label{eq:psit+1}
            \psi(t+1) &= \sum_{i\in V}{\sum_{j\in V}{\dG{}{i}\left(\frac{\F{t+1}{i,j}}{\dG{}{i}\sqrt{\dG{}{j}}} - \frac{\sqrt{\dG{}{j}}}{2m}\right)^2}} = \sum_{i\in V}{\dG{}{i}\norm{\sum_{k : (i,k)\in \M{t}}{\frac{a_{i,k}}{\dG{}{i}}}}_2^2} \ .
        \end{align}
Comparing (\ref{eq:psi-alternate})
and (\ref{eq:psit+1}) we get that
$\psi(t+1) \le \psi'(t)$ follows from Lemma \ref{lemma:cs_mean} in Appendix \ref{appendix:matrix_inequalities}.
    \end{proof}

    We provide an alternative proof of Lemma \ref{lemma:krv_potential_step_1}, using matrix formulations of the potential and updates. This serves as a warmup for the proofs in Appendix \ref{section:osvv}.
    \begin{proof} [Alternative proof of Lemma \ref{lemma:krv_potential_step_1}]
    Let $\bar{F_t} = \Dminushalf \F{t} \Dminushalf$ and let $P = I - \frac{1}{2m}\sqrt{\dG}\sqrt{\dG'}$ be the projection matrix on the subspace orthogonal to $\sqrt{\dG}$. We first rewrite the potential $\psi(t)$ of KRV in matrix form as
    \begin{align*}
        \psi(t) &= \norm{D^{-\frac{1}{2}} (\F{t} - \frac{1}{2m}\dG \dG') D^{-\frac{1}{2}}}^2_F = \norm{D^{-\frac{1}{2}} \F{t} D^{-\frac{1}{2}}-\frac{1}{2m}\sqrt{\dG}\sqrt{\dG'}}^2_F  = 
        \norm{D^{-\frac{1}{2}} \F{t} D^{-\frac{1}{2}}P}^2_F \\ &=
        \norm{\bar{\F{t}} P}^2_F = 
        \tr \left((\bar{\F{t}}  P)(\bar{\F{t}} P)' \right) 
        = \tr( \bar{\F{t}} P^2 \bar{\F{t}'}) = \tr(  P \bar{\F{t}'} \bar{\F{t}}),
    \end{align*}

     where the third equality holds by Lemma~\ref{lemma:normalized_commutes_with_P} and the last equality is due to Fact~\ref{fact:book_trace_identities} (and since $P^2=P$ as a projection matrix).

     Define $\N{t} \defeq \frac{1}{2}D + \frac{1}{2}\M{t}$, so $\F{{t+1}} = \N{t} D^{-1} \F{t} $. Note that $\N{t} D^{-1}$ is the lazy  uniform random walk defined by $\M{t}$. I.e.,\ with probability $1/2$ do not move and with probability $1/2$ we move on an edge of $M_t$ picked uniformly among the edges incident to the current vertex.
     Denote $\bar{\N{t}} = \Dminushalf \N{t} \Dminushalf$. Observe that $\N{t},\M{t},\F{t}$  are $\dG$-stochastic matrices (moreover, $\N{t},\M{t}$ are symmetric).
    Additionally, 
    \begin{align*}
        \psi(t+1) &= \tr(  P \bar{F}_{t+1}' \bar{F}_{t+1}) = 
        \tr(  P (\Dminushalf \F{t+1} \Dminushalf)' (\Dminushalf \F{t+1} \Dminushalf)) 
        \\ &=
        \tr(  P (\Dminushalf \N{t} \D^{-1} F_t \Dminushalf)' (\Dminushalf \N{t} \D^{-1} F_t \Dminushalf)) 
        \\ &=
        \tr(  P (\bar{\N{t}} \bar{F_t} )' (\bar{\N{t}} \bar{F_t} )) = 
        \tr(  P  \bar{F_t'} \bar{\N{t}}^2 \bar{F_t} ) \numeq{6}
        \tr(  \bar{\N{t}}^2 \bar{F_t} P  \bar{F_t'}) =
        \tr(  \bar{\N{t}}^2 P \bar{F_t}  \bar{F_t'}),
    \end{align*}

    where equality $(6)$ holds by Fact~\ref{fact:book_trace_identities} and the last equality holds by Lemma~\ref{lemma:normalized_commutes_with_P}. 

    The following claim shows the spectral relation between $\bar{\N{t}}$ and the normalized Laplacian $\mathcal{N}(\M{t})$.
\begin{claim}\label{claim:krv-matrices-N_t-relations}
    The following holds:
    \begin{enumerate}
        \item $\bar{\N{t}} = I - \frac{1}{2} \mathcal{N}(\M{t})$.
        \item $I - \bar{\N{t}}^2 \succeq \frac{1}{2} \mathcal{N}(\M{t})$. That is, $I - \bar{\N{t}}^2 - \frac{1}{2} \mathcal{N}(\M{t})$ is \emph{PSD} (positive semi definite).

    \end{enumerate}
    \end{claim}
\begin{proof}
    \begin{enumerate}
        \item Indeed,
        \[
        \normalized{\left(\frac{1}{2}D + \frac{1}{2}\M{t}\right)} = 
            \frac{1}{2}I + \frac{1}{2}\normalized{\M{t}} = 
            I - \frac{1}{2}(I - \normalized{\M{t}}).
        \]
        \item Following from the previous property, $I-\bar{\N{t}}^2$ and  $\mathcal{N}(\M{t})$ have the same eigenvectors with the following relation: if $\mathcal{N}(\M{t})v = \lambda \cdot v$, then $(I-\bar{\N{t}}^2)v = (1- (1-\frac{1}{2}\lambda)^2)\cdot v$. By Lemma~\ref{lemma:normalized-laplacian-eigenvalues}, all eigenvalues of $\mathcal{N}(\M{t})$ are in $[0,2]$ and therefore $1-(1-\frac{1}{2}\lambda)^2 \ge \frac{1}{2}\lambda$.
    \end{enumerate}
    \end{proof}

    Using Claim~\ref{claim:krv-matrices-N_t-relations}, we can bound the potential reduction as follows
    \begin{align*}
        \psi(t) - \psi(t+1) &= \tr((I - \bar{\N{t}}^2) P \bar{F_t}  \bar{F_t'} ) \numeq{2} 
        \tr((I - \bar{\N{t}}^2) P \bar{F_t} P \bar{F_t'} ) \\ &\numeq{3}
        \tr( P \bar{F_t'} (I - \bar{\N{t}}^2) P \bar{F_t} ) \numeq{4} 
        \tr( (P \bar{F_t})' (I - \bar{\N{t}}^2) (P \bar{F_t}) ) \\ &\numge{5}
        \frac{1}{2} \tr\left( (P \bar{F_t})' \mathcal{N}\left(\M{t}\right) (P \bar{F_t}) \right) \\ &=
        \frac{1}{2} \tr\left( (P \bar{F_t})' \normalized{\mathcal{L}\left(\M{t}\right)} (P \bar{F_t}) \right) 
        \\ &=
        \frac{1}{2}  \tr\left( \left(\D^{-1} F_t  \Dminushalf - \frac{1}{2m}\1 \sqrt{\dG}\right)' \mathcal{L}\left(\M{t}\right) 
        \left( \D^{-1} F_t  \Dminushalf - \frac{1}{2m}\1 \sqrt{\dG} \right) \right) \\&=
        \frac{1}{2}\sum_{\{i,k\} \in \M{t}}{\sum_{j\in V}{\left(\frac{\F{}{i,j}}{\dG{}{i}\sqrt{\dG{}{j}}} - \frac{\F{}{k,j}}{\dG{}{k}\sqrt{\dG{}{j}}}\right)^2}},
    \end{align*}
    where the Equalities~$(2)$ and~$(4)$ hold by Lemma~\ref{lemma:normalized_commutes_with_P} (and the fact that $P^2=P$), Equality~$(3)$ holds by Fact~\ref{fact:book_trace_identities}(1), Inequality~$(5)$ holds by Claim~\ref{claim:krv-matrices-N_t-relations}(2)\footnote{For any matrix $A$ and any PSD matrix $B$ it holds that $\tr(A'BA) = \sum_{i=0}^n{A(,i)' B A(,i)}\ge 0$.} and the last equality holds by Lemma~\ref{lemma:matching_laplacian}. This completes the proof of Lemma \ref{lemma:krv_potential_step_1}.
    \end{proof}
\begin{lemma}
    \label{lemma:krv_potential_step_2}
        For each round $t$, $$\EE\left[\frac{1}{2}\sum_{\{i,k\} \in \M{t}}{\sum_{j\in V}{\left(\frac{\F{t}{i,j}}{\dG{}{i}\sqrt{\dG{}{j}}} - \frac{\F{t}{k,j}}{\dG{}{k}\sqrt{\dG{}{j}}}\right)^2}}\right] \ge \frac{1}{2\alpha \log n}\psi(t) - \frac{1}{n^{\alpha/16}},$$ for all $\alpha> 48$, where the expectation is over the random unit vector $r\in \RR^n$ orthogonal to $\sqrt{\dG}$ and conditioned on $\psi(t)$.
    \end{lemma}

 \begin{proof}
        Recall that $u_i = \frac{1}{\dG{}{i}} \langle D^{-\frac{1}{2}}\F{t}{i}, r \rangle$. Notice that $\Matrix{\frac{1}{\dG{}{i}}D^{-\frac{1}{2}}\F{t}{i}}{j} = \frac{\F{t}{i,j}}{\dG{}{i}\sqrt{\dG{}{j}}}$.
        Apply Lemma \ref{lemma:projection_krv} from Appendix \ref{appendix:projection} to the set of vectors $\left\{\frac{1}{\dG{}{i}}D^{-\frac{1}{2}}\F{t}{i} \mid i\in V\right\}\cup\left\{\frac{\sqrt{\dG}}{2m}\right\}$, 
        and $x=\sqrt{\dG}$. The conditions of the lemma are indeed satisfied (for $c = 1$) since
        \begin{align*}
            \forall i\in V: \left\langle \frac{1}{\dG{}{i}}D^{-\frac{1}{2}}\F{t}{i}, \sqrt{\dG} \right\rangle &= \sum_{j\in V}{\frac{\F{t}{i,j}}{\dG{}{i}\sqrt{\dG{}{j}}}\cdot\sqrt{\dG{}{j}}} = \sum_{j\in V}{\frac{\F{t}{i,j}}{\dG{}{i}}} = 1
            \\
            \left\langle\frac{\sqrt{\dG}}{2m}, \sqrt{\dG}\right\rangle &= \frac{1}{2m}\sum_{j\in V}{\dG{}{j}} = 1
        \end{align*}
        (we used the fact that $\F{t}$ is $\dG$-stochastic, which follows from Lemma \ref{lemma:krv_F_t_d_stochastic}). Therefore, by Lemma \ref{lemma:projection_krv}(2),  with probability of at least $1-\frac{1}{n^{\alpha/8}}$:

        \begin{align*}
            \forall i, k\in V: \sum_{j\in V}{\left(\frac{\F{t}{i,j}}{\dG{}{i}\sqrt{\dG{}{j}}} - \frac{\F{t}{k,j}}{\dG{}{k}\sqrt{\dG{}{j}}}\right)^2} &= \norm{\frac{1}{\dG{}{i}}D^{-\frac{1}{2}}\F{t}{i} - \frac{1}{\dG{}{k}}D^{-\frac{1}{2}}\F{t}{k}}_2^2 
            \\
            &\ge \frac{n-1}{\alpha \log n}\cdot(u_i - u_k)^2
        \end{align*}
        We need to replace the high probability inequalities with an inequality of expectations. Similarly to  KRV we use the following argument for this: Introduce a random variable $z$ that is non-zero only when this inequality fails to hold, such that  
        \begin{align*}
		        \forall i, k\in V: \norm{\frac{1}{\dG{}{i}}D^{-\frac{1}{2}}\F{t}{i} - \frac{1}{\dG{}{k}}D^{-\frac{1}{2}}\F{t}{k}}_2^2 \ge \frac{n-1}{\alpha \log n}\cdot(u_i - u_k)^2 - z
        \end{align*}
        holds with probability $1$. \emph{I.e.}, we define
        \begin{align*}
            \mathcal{B} = \{0\} &\cup \left\{\frac{n-1}{\alpha\log n}(u_i - u_k)^2-\norm{\frac{1}{\dG{}{i}}D^{-\frac{1}{2}}\F{t}{i} - \frac{1}{\dG{}{k}}D^{-\frac{1}{2}}\F{t}{k}}_2^2 : (i, k)\in V\times V\right\}
        \end{align*}
        and $z = \max(\mathcal{B})$. We get that
        \begin{align*}
            \frac{1}{2}\sum_{\{i,k\} \in \M{t}}{\sum_{j\in V}{\left(\frac{\F{t}{i,j}}{\dG{}{i}\sqrt{\dG{}{j}}} - \frac{\F{t}{k,j}}{\dG{}{k}\sqrt{\dG{}{j}}}\right)^2}} \ge \frac{n-1}{2\alpha\log n}\sum_{\{i,k\} \in \M{t}}{\left(u_i - u_k\right)^2} - m\cdot z
        \end{align*}
        Any $\{i,k\} \in \M{t}$ satisfies $u_i \le \eta \le u_k$ or $u_i \ge \eta \ge u_k$. Therefore, $(u_i-u_k)^2 \ge (u_i-\eta)^2+(u_k-\eta)^2$. We get:
        \begin{align*}
            &\frac{1}{2}\sum_{\{i,k\} \in \M{t}}{\sum_{j\in V}{\left(\frac{\F{t}{i,j}}{\dG{}{i}\sqrt{\dG{}{j}}} - \frac{\F{t}{k,j}}{\dG{}{k}\sqrt{\dG{}{j}}}\right)^2}} \ge \frac{n-1}{2\alpha\log n}\sum_{i \in V}{\dG{}{i}\left(u_i - \eta\right)^2} - m\cdot z
            \\
            &= \frac{n-1}{2\alpha\log n}\left(\sum_{i \in V}{\dG{}{i} \cdot u_i^2} -2\eta\sum_{i \in V}{\dG{}{i} \cdot u_i} + 2m\eta^2 \right)  - m\cdot z 
            \\
            &= \frac{n-1}{2\alpha\log n}\left(\sum_{i \in V}{\dG{}{i} \cdot u_i^2} + 2m\eta^2 \right)  - m\cdot z
            \ge \frac{n-1}{2\alpha\log n}\sum_{i \in V}{\dG{}{i} \cdot u_i^2}  - m\cdot z\ .
        \end{align*}
  The last equality holds because $\F{t}$ is $\dG$-stochastic, indeed:
        \begin{align*}
            \sum_{i \in V}{\dG{}{i} \cdot u_i} = \sum_{i \in V}{\dG{}{i} \cdot\left\langle \frac{1}{\dG{}{i}}D^{-\frac{1}{2}}\F{t}{i}, r\right\rangle} = \left\langle D^{-\frac{1}{2}}\sum_{i \in V}{\F{t}{i}}, r\right\rangle = \left\langle D^{-\frac{1}{2}}\dG, r\right\rangle = \left\langle \sqrt{\dG}, r\right\rangle = 0,
        \end{align*}
        since $r\perp \sqrt{\dG}$. 
        Furthermore, since
    $\left\langle \frac{\sqrt{\dG}}{2m}, r\right\rangle = 0$
        we have that
        \begin{align*}
            \frac{1}{2}\sum_{\{i,k\} \in \M{t}}{\sum_{j\in V}{\left(\frac{\F{t}{i,j}}{\dG{}{i}\sqrt{\dG{}{j}}} - \frac{\F{t}{k,j}}{\dG{}{k}\sqrt{\dG{}{j}}}\right)^2}} &\ge \frac{n-1}{2\alpha\log n}\sum_{i \in V}{\dG{}{i} \cdot \left(u_i - \left\langle\frac{\sqrt{\dG}}{2m}, r \right\rangle\right)^2}  - m\cdot z\ .
        \end{align*}
        Therefore, in expectation:
        \begin{align*}
        \EE\left[\frac{1}{2}\sum_{\{i,k\} \in \M{t}}{\sum_{j\in V}{\left(\frac{\F{t}{i,j}}{\dG{}{i}\sqrt{\dG{}{j}}} - \frac{\F{t}{k,j}}{\dG{}{k}\sqrt{\dG{}{j}}}\right)^2}}\right] \ge &\frac{n-1}{2\alpha\log n}\sum_{i\in V}{\dG{}{i} \cdot\EE\left[\left(u_i - \left\langle\frac{\sqrt{\dG}}{2m}, r \right\rangle\right)^2\right]}\\& - m\cdot\EE[z]\ .
        \end{align*}
        By Lemma \ref{lemma:projection_krv}(1),

        \begin{align*}
            &\EE\left[\frac{1}{2}\sum_{\{i,k\} \in \M{t}}{\sum_{j\in V}{\left(\frac{\F{t}{i,j}}{\dG{}{i}\sqrt{\dG{}{j}}} - \frac{\F{t}{k,j}}{\dG{}{k}\sqrt{\dG{}{j}}}\right)^2}}\right] \\ 
            & \;\;\; \ge  \frac{1}{2\alpha\log n}\sum_{i\in V}{\dG{}{i} \cdot \norm{\frac{1}{\dG{}{i}}D^{-\frac{1}{2}}\F{t}{i} - \frac{\sqrt{\dG}}{2m}}_2^2} - m\cdot\EE[z]
            \\
            & \;\;\; = \frac{1}{2\alpha\log n}\sum_{i\in V}{\sum_{j\in V}{\dG{}{i} \cdot \left(\frac{\F{t}{i,j}}{\dG{}{i}\sqrt{\dG{}{j}}} - \frac{\sqrt{\dG{}{j}}}{2m}\right)^2}} - m\cdot\EE[z]
            \\
            & \;\;\; = \frac{1}{2\alpha\log n}\sum_{i\in V}{\sum_{j\in V}{\frac{1}{\dG{}{i}\cdot \dG{}{j}} \cdot \left(\F{t}{i,j} - \frac{\dG{}{i} \dG{}{j}}{2m}\right)^2}} - m\cdot\EE[z]= \frac{1}{2\alpha\log n}\psi(t) - m\cdot\EE[z]
        \end{align*}
        Since $r$ is a unit vector and $\F{t}$ is $\dG$-stochastic, we get by Cauchy Schwartz inequality
        \[
        u_i = \left\langle \frac{1}{\dG{}{i}}D^{-\frac{1}{2}}\F{t}{i}, r\right\rangle \le
        \norm{\frac{1}{\dG{}{i}}D^{-\frac{1}{2}}\F{t}{i}}_2 \le
        \norm{\frac{1}{\dG{}{i}}\F{t}{i}}_2 \le 
        \norm{\frac{1}{\dG{}{i}}\F{t}{i}}_1
        =1.
        \]
        Therefore $z \le \frac{n}{\alpha \log n} \le n$ and non-zero with probability at most $\frac{1}{n^{\alpha/8}}$ (Lemma~\ref{lemma:projection_krv}). Thus $m \EE[z] \le \frac{n^3}{n^{\alpha/8}} =  n^{-(\alpha-24)/8} \le n^{-\alpha/16}$ ($\alpha > 48$),  completing the proof. 

    \end{proof}
\begin{corollary}
    \label{cor:krv_potential_reduction}
        For each round $t$, $\EE[\psi(t+1)]\le\left(1-\frac{1}{2\alpha\log n}\right)\psi(t) + \frac{1}{n^{\alpha/16}}$, for every $\alpha > 48$, where the expectation is over the unit vector $r\in \RR^n$  and conditioned on $\psi(t)$.

    \end{corollary}
    \begin{proof}
        By Lemmas \ref{lemma:krv_potential_step_1} and \ref{lemma:krv_potential_step_2}, 
        \begin{align*}
            \psi(t)-\EE[\psi(t+1)\mid \psi(t)]&\ge\EE\left[\frac{1}{2}\sum_{\{i,k\} \in \M{t}}{\sum_{j\in V}{\left(\frac{\F{t}{i,j}}{\dG{}{i}\sqrt{\dG{}{j}}} - \frac{\F{t}{k,j}}{\dG{}{k}\sqrt{\dG{}{j}}}\right)^2}}\right]
            \\
            &=\frac{1}{2\alpha\log n}\psi(t) - \frac{1}{n^{\alpha/16}}
        \end{align*}
        Rearranging the above inequality, we get $\EE[\psi(t+1)] \le \left(1-\frac{1}{2\alpha\log n}\right)\psi(t) + \frac{1}{n^{\alpha/16}}$.
    \end{proof}

    \begin{corollary} [Decrease in Potential]
    \label{cor:krv_total_potential}
        With high probability over the choices of $r$, 
        \begin{align*}
            \psi(T) \le \frac{1}{n^{5}}
        \end{align*}
    \end{corollary} 
    \begin{proof}
        By induction on the steps, using the law of total expectation and Corollary \ref{cor:krv_potential_reduction}, we get that: 
        \begin{align*}
            \EE[\psi(T)] \le \left(1-\frac{1}{2\alpha\log n}\right)^T \psi(0) + \frac{T}{n^{\alpha/16}} \le \left(1-\frac{1}{2\alpha\log n}\right)^T\cdot n + \frac{T}{n^{\alpha/16}},
        \end{align*}

        where the second inequality holds due to Equation (\ref{eq:krv_psi_0}).
        We can therefore choose $\alpha > 48$ 
        and $T = \Theta(\log^2 n)$ such that $\EE[\psi(T)] \le \frac{1}{n^{\beta}}$ (for any constant $\beta$).
        An application of Markov's inequality gives the result:
        Indeed, say we want $\psi(T)\le\frac{1}{n^\ell}$
        to hold with probability $1-\frac{1}{n^p}$ for some constants
        $\ell$ and $p$. 
        Then, we choose $\beta = \ell + p$ in the equation above, such that we have $\EE[\psi(T)]\le\frac{1}{n^{\ell + p}}$. Then, by Markov's inequality, because the potential is non-negative, we get that
        \begin{align*}
            \PP[\psi(T)\ge n^{-\ell}] \le \frac{\EE[\psi(T)]}{n^{-\ell}} \le \frac{n^{-\ell -p}}{n^{-\ell}} = \frac{1}{n^p}
        \end{align*}
        The result follows by setting $\ell = 5$.
    \end{proof}

    \begin{lemma}
    \label{lemma:krv_F_expander}
        If $\psi(T)\le\frac{1}{16m^2}$, then $\F{T}$ is a $\frac{1}{4}$-expander.

    \end{lemma}
    \begin{proof}
        Let $S\subseteq V, \bar{S}=V\setminus S$ be a cut, and assume that $\vol_{\F{T}}(S)\le \vol_{\F{T}}(\bar{S})$. By Lemma \ref{lemma:krv_F_t_d_stochastic}, $\F{T}$ is $\dG$-stochastic. Therefore, $\vol_{\F{T}}(S) = \vol_G(S)$ and  $\vol_{\F{T}}(\bar{S}) = \vol_G(\bar{S})$. 
        Because $\psi(T) = \sum_{i\in V}{\sum_{j\in V}{\frac{1}{\dG{}{i}\cdot \dG{}{j}}\left(\F{T}{i,j}-\frac{\dG{}{i} \dG{}{j}}{2m}\right)^2}} \le \frac{1}{16m^2}$, in particular for all $i,j\in V$,
        $$\frac{1}{\dG{}{i}\cdot \dG{}{j}}\left(\F{T}{i,j}-\frac{\dG{}{i} \dG{}{j}}{2m}\right)^2\le\frac{1}{16m^2}.$$

        Note that $\F{T}{i,j} \ge \frac{\dG{}{i} \dG{}{j}}{4m}$ since otherwise,
        $$\frac{1}{\dG{}{i}\cdot \dG{}{j}}\left(\F{T}{i,j}-\frac{\dG{}{i} \dG{}{j}}{2m}\right)^2
        >
        \frac{1}{\dG{}{i}\cdot \dG{}{j}}\left(\frac{\dG{}{i} \dG{}{j}}{4m}\right)^2
        \ge
        \frac{1}{16m^2},$$

        a contradiction. We get that
        \begin{align*}
            |E_{\F{T}}(S, \bar{S})| &\ge \sum_{i\in S}{\sum_{j\in\bar{S}}{\F{T}{i,j}}} \ge \sum_{i\in S}{\sum_{j\in\bar{S}}{\frac{\dG{}{i} \dG{}{j}}{4m}}} = \frac{\vol_G(S)\vol_G(\bar{S})}{4m} \\&= \frac{\vol_{\F{T}}(S)\vol_{\F{T}}(\bar{S})}{4m} \ge \frac{\vol_{\F{T}}(S)}{4} \ ,
        \end{align*}
        where the first inequality is because we only sum edges of $\F{T}$ going out of $S$, and do not sum incoming edges. 
        The last inequality is because $\vol_{\F{T}}(S)\le\vol_{\F{T}}(\bar{S})\Rightarrow\vol_{\F{T}}(\bar{S})\ge m$.
        In total we get that
        \begin{align*}
            \Phi_{\F{T}}(S,\bar{S}) = \frac{|E_{\F{T}}(S, \bar{S})|}{\vol_{\F{T}}(S)} \ge \frac{1}{4}\ .
        \end{align*}
    \end{proof}

    We are now ready to show that after $T$ rounds $\Phi(G) = \Omega(\phi)$, so that in this case the algorithm can certify that case (1) of Theorem \ref{theorem:krv} holds.
    \begin{corollary}
    \label{cor:krv_G_expander}
        If we reach round $T$, then with high probability, $G$ is a $\Omega(\phi)$-expander.
    \end{corollary}
    \begin{proof}
        Assume we reach round $T$. By Corollary \ref{cor:krv_total_potential}, with high probability, $\psi(T)\le \frac{1}{n^5} \le \frac{1}{16m^2}$. 
        This implies by Lemma \ref{lemma:krv_F_expander}, that with high probability $\F{T}$ is a $\frac{1}{4}$-expander. By Lemma \ref{lemma:krv_F_t_embeddable_in_G_t}, $\F{T}$ is embeddable in $G_T$ with congestion $1$. Note that $G_T$ is a union of $T$ $\dG$-matchings $\{\M{t}\}_{t=1}^T$, each having $\dG{\M{t}} = \dGG = \dG{\F{T}}$. Therefore, $\dG{G_T} = T\cdot \dG{\F{T}}$. So by Corollary \ref{cor:expansion_and_embedding}, $G_T$ is a $\frac{1}{4T}$-expander. By Lemma \ref{lemma:krv_G_t_embeddable_in_G}, $G_T$ is embeddable in $G$ with congestion $cT$. Together with  the fact that $\dGG = \frac{1}{T} \cdot \dG{G_T}$, we get by Corollary \ref{cor:expansion_and_embedding} again, that $G$ is a $\frac{1}{4cT}$-expander. Recall that $c=O\left(\frac{1}{\phi\log^2 n}\right)$ and $T=O(\log^2 n)$. Therefore, $G$ is an $\Omega(\phi)$-expander.
    \end{proof}

    \subsection{Proof of Theorem \ref{theorem:krv}}
    \label{section:krv_theorem_proof}

    We are now ready to prove Theorem~\ref{theorem:krv}.
\begin{proof} [Proof of Theorem \ref{theorem:krv}]
    If one of the iterations of Algorithm \ref{algo:krv_cut_matching} returns a cut $(S, \bar{S})$, then by Lemma \ref{lemma:krv_3.7_small_flow} we have $\Phi_G(S, \bar{S}) < \frac{1}{c} = O(\phi \log^2 n)$ and we end in Case (2) of Theorem \ref{theorem:krv}. If the algorithm completes $T$ iterations, then by Corollary \ref{cor:krv_G_expander} we have $\Phi(G) = \Omega(\phi)$ with high probability, so we end in Case (1) of Theorem \ref{theorem:krv}.

    Finally, the running time of each iteration is the sum of the running times of the following steps:
    \begin{enumerate}
        \item Sample a random unit vector $r\in \RR^n$ orthogonal to $\sqrt{\dG}$ in $O(n)$ time.
        \item Compute the projections vector $u = D^{-1} \F{t} \Dminushalf \cdot r$. This takes $O(t\cdot m)$ time as $\F{t} = \left(\frac{I + \M{t-1}\cdot D^{-1}}{2}\right)\cdot \left(\frac{I + \M{t-2}\cdot D^{-1}}{2}\right)\cdot\cdots\cdot\left(\frac{I + \M{0}\cdot D^{-1}}{2}\right)\cdot D$, and each $\dG$-matching $\M{i}$ has $O(m)$ non-zero entries.
        \item Compute the max flow on the auxiliary flow problem in time $T_\text{flow}(n, m)$, where $T_\text{flow}(n,m)$ is the time for computing exact max flow on a graph with $n$  vertices and $m$ edges.
        \item Decompose the flow into paths. This can be done in $O(m\log n)$ time using dynamic trees \cite{ST83}.
    \end{enumerate}
    In total, iteration $t$ (for $t\in \{1,\ldots,T\}$) takes $O\left(m\cdot t + T_\text{flow}(n, m)\right)$ time. Recall that $T=\Theta(\log^2 n)$, so the total running time is $O\left(m\log^4 n + \log^2 n \cdot T_\text{flow}(n, m)\right)$.

    \end{proof}

    \section{OSVV's Cut-Matching Game for Conductance}
    \label{section:osvv}

    In this section we modify the framework by \cite{orecchia2008partitioning} in order to bound the expansion of the graph. Our description here expands the overview in subsection \ref{section:osvv-short}. We prove the following theorem.
    \begin{theorem}[Theorem \ref{theorem:osvv}]
        Given a graph $G$ and a parameter $\phi > 0$, there exists a randomized algorithm, whose running time is dominated by computing a polylogarithmic number of max flow problems, that either
        \begin{enumerate}
            \item Certifies that $\Phi(G)= \Omega(\phi)$ with high probability; or
            \item Finds a cut $(S, \bar{S})$ in $G$ whose conductance is $\Phi_G(S,\bar{S}) = O(\phi\log n)$.
        \end{enumerate}    

    \end{theorem}

    The algorithm for Theorem \ref{theorem:osvv} is the same as Algorithm \ref{algo:krv_cut_matching}, except that we use a stronger cut player in order to update $\F{t}$, which is described in Subsection \ref{section:osvv-short}. The matching player remains the same except for a minor difference: the capacities in the flow problem are now $c = \Theta(\frac{1}{\phi \log n})$. 
    The new update of the flow matrix $F$ is summarized in Algorithm \ref{algo:osvv_update}.

    \begin{algorithm}[hbt!]
        \caption{OSVV Round Update \cite{orecchia2008partitioning}. See Section \ref{section:osvv-short} for details.}
        \label{algo:osvv_update}
        \begin{algorithmic}[1]
            \Function{OSVV-Update-F}{$G, \phi, F$}
                \State $c\gets\Theta\left(\frac{1}{\phi \log n}\right)$. \Comment{$c$ is an integer.}
                \State $P \gets I -  \frac{1}{2m}\sqrt{\dG} \sqrt{\dG'}$.
                \State $\delta \gets \Theta(\log n)$. \Comment{$\delta$ is a power of $2$.}
                \smallskip
                \State \underline{\textbf{Cut Player:}}
		\smallskip
                \State $\W \gets (P\normalized{\F}P)^{\delta}$
                \State $r\gets$ Random unit vector of $\RR^{n}$.
                \State $u \gets \Dminushalf \W \cdot r$. 

                \State Sort the entries of $u$ as $u_{i_1} \le \cdots \le u_{i_n}$.
                \State $Q \gets (i_1, i_1, \ldots, i_1, i_2, i_2, \ldots, i_2, \ldots, i_n, \ldots, i_n)$. \Comment{Each $i_j$ appears $\dGG{i_j}$ times.}
                \State $L \gets (Q_1, \ldots, Q_m)$, $R \gets (Q_{m+1}, \ldots, Q_{2m})$.
                \smallskip
                \State \underline{\textbf{Matching Player:}}
                \smallskip
                \State $G' \gets (V', E')$, where $V' \gets V\cup \{s, t\}$, $E' \gets E \cup (\{s\} \times L) \cup (R \times \{t\})$.
                \State Let $m_v$, $\bar{m_v}$ be the number of times $v$ appears in $L$, $R$, respectively.
                \State Set the capacity of $e\in E$ to $c$, of $(s, v) \in \{s\}\times L$ to $m_v$ and of $(v, t)\in R\times \{t\}$ to $\bar{m_v}$.

                \State Compute a maximum flow $g$ from $s$ to $t$ in $G'$.
                \If{$|g| < m$} \Comment{$|g|$ is the value of the flow $g$.}
                    \State Find a minimum $(s,t)$-cut $(S,\bar{S})$ in $G'$.
                    \State \Return{$(S\cap G, \bar{S}\cap G)$}.
                \Else
                    \State Decompose $g$ into a set of paths $\{u_j \to v_j\}_j$, where $u_j \in L$ and $v_j \in R$.
                    \State $M \gets \{\{u_j, v_j\}\}_{j=1}^m$. 

                    \\\Comment{$M$ is a symmetric $n\times n$ matrix, $M(u,v)$ is the number of paths between $u$ and $v$.}
                    \smallskip
                    \State \underline{\textbf{Update of $F$:}}
		    \smallskip
                    \State $\N \gets \frac{\delta - 1}{\delta}D + \frac{1}{\delta}\M$.
                    \State $F_{\text{new}} \gets \N D^{-1} \F D^{-1} \N$.
		    \State \Return{$F_{\text{new}}$}.
                \EndIf
            \EndFunction
        \end{algorithmic}
    \end{algorithm}

    The rest of this section is organized as follows. Subsection \ref{section:osvv_F_embeddable_G} shows that $\F{t}$ is embeddable in $G_t$ (the graph induced by the union of all the matchings that we have computed so far) with congestion $\frac{4}{\delta}$, where $\delta = \Theta(\log n)$ is a power of $2$ and is set in the proof of Lemma~\ref{lemma:osvv_F_expander}, and that $G_t$ is embeddable in $G$ with congestion $c\cdot t$. Subsection \ref{section:osvv_F_t_expander} shows that if we reach round $T$, then with high probability, $\Phi(G)=\Omega(\phi)$. Finally, in Subsection \ref{section:osvv_theorem_proof} we prove Theorem \ref{theorem:osvv}.

\subsection{$\F{t}$ is embeddable in $G$}
    \label{section:osvv_F_embeddable_G}

    The following lemma states basic properties regarding the matrices $M_t,F_t,N_t$ and $W_t$ (recall their definition from the cut player strategy presented in Subsection \ref{section:osvv-short}).
\begin{lemma}
    \label{lemma:osvv_basic_properties}
        The following holds for all $t$:
        \begin{enumerate}
            \item $\M{t}, \N{t}, \F{t}$ and $\W{t}$ are symmetric.
            \item $\M{t}, \N{t}$ and $\F{t}$ are $\dG$-stochastic.
        \end{enumerate}
    \end{lemma}
\begin{proof}
        \begin{enumerate}
            \item This is clear from the definitions.
            \item For $\M{t}$ this is clear. For $\N{t}$, note that
            \begin{align*}
                \N{t} \1_n = \left(\frac{\delta - 1}{\delta}D + \frac{1}{\delta}\M{t}\right) \1_n = \frac{\delta - 1}{\delta}\dG + \frac{1}{\delta}\dG = \dG \ ,
            \end{align*}
            and $\1_n' \N{t} = \dG'$ follows since $N_t$ is symmetric.

            For $\F{t}$, we use induction  on $t$. $\F{0} = D = \diag(\dG)$ is clearly $\dG$-stochastic. After step $t$, 
            \begin{align*}
                \F{{t+1}}\1_n &= \N{t} D^{-1} \F{t} D^{-1} \N{t} \cdot\1_n = \N{t} D^{-1} \F{t} D^{-1} \cdot \dG \\ &= \N{t} D^{-1} \F{t}\cdot \1_n = \N{t} D^{-1} \cdot \dG = \N{t} \cdot \1_n = \dG \ ,
            \end{align*}
            and $\1_n' \F{{t+1}} = \dG'$ follows since $F_t$ is symmetric.
        \end{enumerate}
    \end{proof}

The proofs of the following two lemmas are 
the same as the proofs of 
Corollary~\ref{lemma:our_F_t_embeddable_in_G_t} and Lemma~\ref{lemma:our_G_t_embeddable_in_G} in our analysis of the spectral cut player in Section~\ref{section:our_F_embeddable_G}. 

\begin{lemma}
\label{cor:osvv_F_t_embeddable_in_G_t}
    For all rounds $t$, $\F{t}$ is embeddable in $G_t$ with congestion $\frac{4}{\delta}$.
\end{lemma}

\begin{lemma}
        \label{lemma:osvv_G_t_embeddable_in_G}
            For all rounds $t$, $G_t$ is embeddable in $G$ with congestion $ct$.
        \end{lemma}

    \subsection{$\F{T}$ is an expander}
    \label{section:osvv_F_t_expander}
    In this section we  prove that after $T=\Theta(\log^2 n)$ rounds, with high probability, $\Phi(\F{T}) = \Omega(1)$, which implies that $\Phi(G) = \Omega(\phi)$. Define the following potential function, analogous to KRV:

\begin{align*}
    \psi(t) &= \norm{(\normalized{\F{t}})^\delta - \frac{1}{2m}\sqrt{\dG} \sqrt{\dG'}}^2_F.
    \end{align*}

We use the second identity in the following lemma to rewrite the potential. Let $P = I - \frac{1}{2m}\sqrt{\dG}\sqrt{\dG'}$ be the projection matrix on the subspace orthogonal to $\sqrt{\dG}$.
\begin{lemma}
    \label{lemma:normalized_commutes_with_P}

    Let $A$ be a $\dG$-stochastic matrix, then: 

    \begin{enumerate}
        \item $\normalized{A} \sqrt{\dG} = \sqrt{\dG}$, and $\sqrt{\dG'}\normalized{A}  = \sqrt{\dG'}$.
        \item $\normalized{A} P = P \normalized{A}$.
    \end{enumerate}
    \end{lemma}
\begin{proof}
        \begin{enumerate}
            \item Indeed,
        \[
        \normalized{A} \sqrt{\dG} = \Dminushalf A \1_n = \Dminushalf \dG = \sqrt{\dG},
        \]
        \[
        \sqrt{\dG'} \normalized{A}  = \1_n'  A \Dminushalf =  \dG' \Dminushalf = \sqrt{\dG'}
        \]

            \item This follows from the previous property:
            \begin{align*}
                \normalized{A} P &= \normalized{A}\cdot (I - \frac{1}{2m}\sqrt{\dG} \sqrt{\dG'}) = \normalized{A} - \frac{1}{2m}\cdot \normalized{A}\sqrt{\dG}\sqrt{\dG'} 
                \\
                &= \normalized{A} - \frac{1}{2m}\sqrt{\dG}\sqrt{\dG'} = \normalized{A} - \frac{1}{2m}\sqrt{\dG}\sqrt{\dG'}\normalized{A}
                \\
                &=  (I - \frac{1}{2m}\sqrt{\dG} \sqrt{\dG'})\cdot\normalized{A} = P \normalized{A}
            \end{align*}

        \end{enumerate}
    \end{proof}

    Recall that $\W{t} = (P\normalized{\F{t}}P)^{\delta}$. Since $\F{t}$ is $\dG$-stochastic (Lemma \ref{lemma:osvv_basic_properties}(1)) and  $P^2 = P$, it follows by Lemma~\ref{lemma:normalized_commutes_with_P} that  
    $$\W{t} = P(\normalized{\F{t}})^{\delta}= (\normalized{\F{t}})^{\delta} P = 
    P(\normalized{\F{t}})^{\delta}P.$$ 
    Therefore, we can rewrite the potential function as
    \begin{align*}
        \psi(t) &= \norm{(\normalized{\F{t}})^\delta - \frac{1}{2m}\sqrt{\dG} \sqrt{\dG'}}^2_F
        = \norm{(\normalized{\F{t}})^\delta P}^2_F \\ &=
        \tr(P (\normalized{\F{t}})^{2\delta} P) \numeq{1} \tr( (P\normalized{\F{t}}P)^{2\delta}) = \tr(W^2_t),
    \end{align*}
    where Equation~$(1)$ follows from Lemma~\ref{lemma:normalized_commutes_with_P}$(2)$. In particular,

    \begin{align*}
        \psi(0) = \norm{I - \frac{1}{2m}\sqrt{\dG} \sqrt{\dG'}}^2_F
        = \norm{P}^2_F = \tr(P^2)=\tr(P)=n-1.
    \end{align*}

    Additionally, denote $\X{t} = \normalized{\N{t}}, \Y{t} = \normalized{\F{t}}$.  
    Recall the definition of a normalized Laplacian: $\mathcal{N}(A) \defeq \normalized{\mathcal{L}(A)} = I - \normalized{A}$, where $A$ is a symmetric $\dG$-stochastic matrix. 

    The following lemma shows the relation between $\X{t}$ and $\mathcal{N}(\M{t})$. 

\begin{lemma}
    \label{lemma:osvv_X_as_normalized_laplacian}
        The following relations hold:
        \begin{enumerate}
            \item $\X{t} = I - \frac{1}{\delta} \mathcal{N}(\M{t})$.
            \item $I-\X{t}^{4\delta} \succeq \frac{1}{3} \cdot \mathcal{N}(\M{t})$. That is, for any vector $v\in \RR^n$, $v'(I-\X{t}^{4\delta})v \ge v'(\frac{1}{3} \cdot \mathcal{N}(\M{t}))v$.
            \item For any matrix $A\in \RR^{n\times n}$, $\tr(A'(I-\X{t}^{4\delta})A)\ge \frac{1}{3}\tr(A' \mathcal{N}(\M{t})A)$.
        \end{enumerate}
    \end{lemma}
\begin{proof}
        \begin{enumerate}
            \item Indeed, 
            \[
            \X{t} = \normalized{\left(\frac{\delta-1}{\delta}D + \frac{1}{\delta}\M{t}\right)} = 
            \frac{\delta-1}{\delta}I + \frac{1}{\delta}\normalized{\M{t}} = 
            I - \frac{1}{\delta}(I - \normalized{\M{t}}).
            \]
            \item Observe that $\mathcal{N}(\M{t})$ and $I-\X{t}^{4\delta}$ have the same eigenvectors: consider an eigenvector $v\in \RR^n$ of $\mathcal{N}(\M{t})$, with an eigenvalue of $\lambda$. Then, by part (1) 
            \begin{align*}
                \X{t} v = \left(I-\frac{1}{\delta}\mathcal{N}(\M{t})\right)v = v-\frac{\lambda}{\delta}v = \left(1-\frac{\lambda}{\delta}\right)v \ .
            \end{align*}

            Therefore, $(I-\X{t}^{4\delta}) v = \left(1-\left(1-\frac{\lambda}{\delta}\right)^{4\delta}\right)v$. So we
            can see that an eigenvector $v\in \RR^n$ of $\mathcal{N}(\M{t})$, with eigenvalue $\lambda$, is an eigenvector of $I-\X{t}^{4\delta}$ with eigenvalue $1-(1-\frac{\lambda}{\delta})^{4\delta}$. A known property of normalized Laplacians (see Lemma~\ref{lemma:normalized-laplacian-eigenvalues}) 
            is that all of their eigenvalues are in the interval $[0,2]$, so $\lambda \in [0,2]$. For any such $\lambda$ it holds that $1-(1-\frac{\lambda}{\delta})^{4\delta}\ge1-\frac{1}{e^{4\lambda}}$. 
            Indeed, for $\lambda=0$ an equality is achieved, and for $\lambda\in (0,2]$, note that $\delta = \Theta(\log n)$, so $\frac{\delta}{\lambda} \ge 1$ and we get
            \[
            1-\left(1-\frac{\lambda}{\delta}\right)^{4\delta} = 1-\left(\left(1-\frac{\lambda}{\delta}\right)^\frac{\delta}{\lambda}\right)^{4\lambda} \ge
            1- \frac{1}{e^{4\lambda},}
            \]
            where the last inequality holds since $(1-1/x)^x < 1/e$  for all $x\ge 1$.
            Simple calculus shows that $1-\frac{1}{e^{4x}} \ge \frac{1}{3}x$ for any $x\in [0,2]$. 

            Therefore, we get that any eigenvector $v\in \RR^n$ of $\mathcal{N}(\M{t})$, with eigenvalue $\lambda$, is also an eigenvector of $I-\X{t}^{4\delta}$ with eigenvalue at least $\frac{1}{3}\lambda$. The result follows since both matrices are symmetric and therefore have a spanning basis of eigenvectors.

            \item This follows immediately from (2) since  $I-\X{t}^{4\delta} - \frac{1}{3} \cdot \mathcal{N}(\M{t}) \succeq 0$ is PSD. Indeed, for any matrix $A$ and any PSD matrix $B$ it holds that $\tr(A'BA) = \sum_{i=0}^n{A(,i)' B A(,i)}\ge 0$.

        \end{enumerate}
    \end{proof}

    We now bound the 
    potential decrease in round $t$.
    \begin{lemma}
    \label{lemma:osvv_potential_step_1}
        For each round $t$, $\psi(t) - \psi(t+1) \ge \frac{1}{3}\sum_{\{i,k\} \in \M{t}}{\sum_{j\in V}{\left(\frac{\W{}{i,j}}{\sqrt{\dG{}{i}}} - \frac{\W{}{k,j}}{\sqrt{\dG{}{k}}}\right)^2}}$.
    \end{lemma}
    \begin{proof}
        Recall that $\X{t} = \normalized{\N{t}}, \Y{t} = \normalized{\F{t}}$. We have
        \begin{align*}
            \psi(t+1) &= \tr(P (\normalized{\F{{t+1}}})^{2\delta} P) = 
            \tr(P (\normalized{(\N{t} D^{-1} \F{t} D^{-1} \N{t})})^{2\delta} P) \\ &=
            \tr(P ((\normalized{\N{t}}) (\normalized{\F{t}}) (\normalized{\N{t}}))^{2\delta} P)
            \\
            & = \tr(P ( \X{t} \Y{t} \X{t})^{2\delta} P) \numeq{5} \tr(  (\X{t} (P \Y{t} P) \X{t})^{2\delta}) \\ &\le
            \tr ( \X{t}^{2\delta} (P \Y{t} P)^{2\delta} \X{t}^{2\delta}) = \tr(\X{t}^{2\delta} W^2_t \X{t}^{2\delta}) \\&=
            \tr(\X{t}^{4\delta} W^2_t) \ .
        \end{align*}

     Equality  $(5)$ follows from the fact that $P^2 = P$, Lemma \ref{lemma:normalized_commutes_with_P}(2) and from the fact that $\N{t}$ and $\F{t}$ are $\dG$-stochastic matrices (Lemma \ref{lemma:osvv_basic_properties}). 
     The inequality follows from Theorem \ref{theorem:symmetric_rearrangement} in Appendix \ref{appendix:matrix_inequalities}. The last equality follows from Fact \ref{fact:book_trace_identities}(1) in Appendix \ref{appendix:matrix_inequalities}.
        We are now ready to bound the decrease in potential: 
        \begin{align*}
            \psi(t)- \psi(t+1) &\ge \tr((I-\X{t}^{4\delta}) W^2_t) \numeq{1} \tr(\W{t} (I-\X{t}^{4\delta}) \W{t}) 
            \\
            &\ge \frac{1}{3} \tr(\W{t} \mathcal{N}(\M{t}) \W{t}) = \frac{1}{3} \tr(\W{t} \normalized{\mathcal{L}(\M{t})} \W{t}) 
            \\
            &\numeq{3} \frac{1}{3} \tr((\Dminushalf \W{t})' \mathcal{L}(\M{t}) (\Dminushalf \W{t})) 
            \\
            &\numeq{4} \frac{1}{3} \sum_{\{i,k\}\in \M{t}} \norm{\Matrix{\Dminushalf \W{t}}{i} - \Matrix{\Dminushalf \W{t}}{k}}^2_2 
            \\
            &= \frac{1}{3} \sum_{\{i,k\}\in \M{t}} \sum_{j\in V}{\left(\frac{\W{}{i,j}}{\sqrt{\dG{}{i}}} - \frac{\W{}{k,j}}{\sqrt{\dG{}{k}}}\right)^2}~,
        \end{align*}

        where Equality $(1)$ follows from Fact \ref{fact:book_trace_identities}(1) in Appendix \ref{appendix:matrix_inequalities}. The first inequality follows from Lemma \ref{lemma:osvv_X_as_normalized_laplacian}(3). Equality $(3)$ is true because $\W{t}$ and $\Dminushalf$ are symmetric. Equality $(4)$ follows from lemma \ref{lemma:matching_laplacian} in Appendix \ref{appendix:matrix_inequalities}.
    \end{proof}

    The proof of the following lemma is very similar to Lemma \ref{lemma:krv_potential_step_2}, except for the slightly simpler setting and the use of Lemma \ref{lemma:projection_pairs} instead of Lemma \ref{lemma:projection_krv} from Appendix \ref{appendix:projection}. For completeness, we give the full proof.
    \begin{lemma}
    \label{lemma:osvv_potential_step_2}
        For each round $t$, $\EE\left[\frac{1}{3}\sum_{\{i,k\} \in \M{t}}{\sum_{j\in V}{\left(\frac{\W{}{i,j}}{\sqrt{\dG{}{i}}} - \frac{\W{}{k,j}}{\sqrt{\dG{}{k}}}\right)^2}}\right]
        \ge
        \frac{1}{3\alpha \log n}\psi(t) - \frac{1}{n^{\alpha/16}}$, for every $\alpha > 48$, where the expectation is over the unit vector $r\in \RR^n$ and conditioned on $\psi(t)$.
    \end{lemma}
\begin{proof}
        Recall that $u_i = \frac{1}{\sqrt{\dG{}{i}}} \langle \W{t}{i}, r \rangle$. Notice that $\Matrix{\frac{1}{\sqrt{\dG{}{i}}} \W{t}{i}}{j} = \frac{\W{t}{i,j}}{\sqrt{\dG{}{i}}}$. 
        Apply Lemma \ref{lemma:projection_pairs} from Appendix \ref{appendix:projection} to the set of vectors $\left\{\frac{1}{\sqrt{\dG{}{i}}}\W{t}{i} \mid i\in V\right\}$. 
        By Lemma \ref{lemma:projection_pairs}(2), we have with high probability:
        \begin{align*}
            \forall i, k\in V: \sum_{j\in V}{\left(\frac{\W{t}{i,j}}{\sqrt{\dG{}{i}}} - \frac{\W{t}{k,j}}{\sqrt{\dG{}{k}}}\right)^2} = \norm{\frac{1}{\sqrt{\dG{}{i}}}\W{t}{i} - \frac{1}{\sqrt{\dG{}{k}}}\W{t}{k}}_2^2 \ge \frac{n}{\alpha\log n}\cdot(u_i-u_k)^2
        \end{align*}
        for every constant $\alpha \ge 16$.
        Like in Section \ref{section:krv_F_t_expander}, we introduce a random variable $z$ that is non-zero only when this inequality fails to hold, such that  
        \begin{align*}
		        \forall i, k\in V: \norm{\frac{1}{\sqrt{\dG{}{i}}}\W{t}{i} - \frac{1}{\sqrt{\dG{}{k}}}\W{t}{k}}_2^2 \ge \frac{n}{\alpha \log n}\cdot(u_i - u_k)^2 - z
        \end{align*}
        holds with probability $1$. \emph{I.e.}, we define
        \begin{align*}
            \mathcal{B} = \{0\} &\cup \left\{\frac{n}{\alpha\log n}(u_i - u_k)^2-\norm{\frac{1}{\sqrt{\dG{}{i}}}\W{t}{i} - \frac{1}{\sqrt{\dG{}{k}}}\W{t}{k}}_2^2 : (i, k)\in V\times V\right\}
        \end{align*}
        and $z = \max(\mathcal{B})$. We get that
        \begin{align*}
            \frac{1}{3}\sum_{\{i,k\} \in \M{t}}{\sum_{j\in V}{\left(\frac{\W{t}{i,j}}{\sqrt{\dG{}{i}}} - \frac{\W{t}{k,j}}{\sqrt{\dG{}{k}}}\right)^2}} \ge \frac{n}{3 \cdot \alpha\log n}\sum_{\{i,k\} \in \M{t}}{\left(u_i - u_k\right)^2} - m\cdot z
        \end{align*}
        Any $\{i,k\}\in \M{t}$ satisfies $u_i \le \eta \le u_k$ or $u_i \ge \eta \ge u_k$. Therefore, $(u_i-u_k)^2 \ge (u_i-\eta)^2+(u_k-\eta)^2$. We get:
        \begin{align*}
            &\frac{1}{3}\sum_{\{i,k\} \in \M{t}}{\sum_{j\in V}{\left(\frac{\W{t}{i,j}}{\sqrt{\dG{}{i}}} - \frac{\W{t}{k,j}}{\sqrt{\dG{}{k}}}\right)^2}} \ge \frac{n}{3\alpha\log n}\sum_{i \in V}{\dG{}{i}\left(u_i - \eta\right)^2} - m\cdot z
            \\
            &= \frac{n}{3\alpha\log n}\left(\sum_{i \in V}{\dG{}{i} \cdot u_i^2} -2\eta\sum_{i \in V}{\dG{}{i} \cdot u_i} + 2m\eta^2 \right) - m\cdot z 
            \\
            &= \frac{n}{3\alpha\log n}\left(\sum_{i \in V}{\dG{}{i} \cdot u_i^2} + 2m\eta^2 \right) - m\cdot z \ge \frac{n}{3\alpha\log n}\sum_{i \in V}{\dG{}{i} \cdot u_i^2} - m\cdot z \ .
        \end{align*}
        The last equality holds because $\W{t}\sqrt{\dG} = \left( P \Dminushalf \F{t} \Dminushalf P \right)^\delta \sqrt{\dG} = 0$ (since $P \sqrt{\dG} = 0$), and:
        \begin{align*}
            \sum_{i \in V}{\dG{}{i} \cdot u_i} = \sum_{i \in V}{\dG{}{i} \cdot\left\langle \frac{1}{\sqrt{\dG{}{i}}}\W{t}{i}, r\right\rangle} = \left\langle \left(\sqrt{\dG}'\W{t}\right)', r\right\rangle = \left\langle \W{t}\sqrt{\dG}, r\right\rangle = \left\langle 0, r\right\rangle = 0
        \end{align*}
        Therefore, in expectation:
        \begin{align*}
            \EE\left[\frac{1}{3}\sum_{\{i,k\} \in \M{t}}{\sum_{j\in V}{\left(\frac{\W{t}{i,j}}{\sqrt{\dG{}{i}}} - \frac{\W{t}{k,j}}{\sqrt{\dG{}{k}}}\right)^2}}\right] &\ge \frac{n}{3\alpha\log n}\sum_{i\in V}{\dG{}{i} \cdot\EE\left[u_i^2\right]} - m\cdot \EE[z] \ .
        \end{align*}
        By Lemma \ref{lemma:projection_pairs}(1),
        \begin{align*}
            &\EE\left[\frac{1}{3}\sum_{\{i,k\} \in \M{t}}{\sum_{j\in V}{\left(\frac{\W{t}{i,j}}{\sqrt{\dG{}{i}}} - \frac{\W{t}{k,j}}{\sqrt{\dG{}{k}}}\right)^2}}\right] \ge \frac{1}{3\alpha\log n}\sum_{i\in V}{\dG{}{i} \cdot \norm{\frac{1}{\sqrt{\dG{}{i}}}\W{t}{i}}_2^2} - m\cdot \EE[z] 
            \\
            &= \frac{1}{3\alpha\log n}\sum_{i\in V}{\sum_{j\in V}{\dG{}{i} \cdot \left(\frac{\W{t}{i,j}}{\sqrt{\dG{}{i}}}\right)^2}} - m\cdot \EE[z] = \frac{1}{3\alpha\log n}\sum_{i\in V}{\sum_{j\in V}{\W{t}{i,j}^2}} - m\cdot \EE[z] 
            \\
            &= \frac{1}{3\alpha\log n}\psi(t) - m\cdot \EE[z]
        \end{align*}

        We note that $z \le  \frac{4n}{\alpha \log n}\le n$. Indeed, for every $i \in \A{t}$
        \[
        u_i = \left\langle \frac{\W{t}{i}}{\sqrt{\dG{}{i}}}, r \right\rangle \le
        \norm{\frac{\W{t}{i}}{\sqrt{\dG{}{i}}}}_2 \le 
        \norm{\W{t}{i}}_2 \le 1,
        \]
        where the first inequality holds due to Cauchy-Schwartz since $r$ is a unit vector and the last inequality holds since all eigenvalues of $\W{t}$ are in $[0,1]$:\footnote{The $i$'th row is obtained by multiplying the matrix by a unit vector and since all eigenvalues are in $[0,1]$, $\norm{\W{t}{i}}_2 = \norm{(e_i' \W{t})'}_2 = \norm{\W{t} e_i}_2\le \norm{e_i}_2 = 1$.} 
        By Lemma~\ref{lemma:normalized-laplacian-eigenvalues} we get that $\Dminushalf \F{t} \Dminushalf = I - \mathcal{N}(\F{t})$ has eigenvalues in $[-1,1]$. Since $\Pmat$ is a projection matrix, it has eigenvalues in $[0,1]$. Therefore, $\Pmat \Dminushalf \F{t} \Dminushalf \Pmat$ has eigenvalues in $[-1,1]$. Finally, $\W{t}= (\Pmat \Dminushalf \F{t} \Dminushalf \Pmat)^\delta$ is PSD since $\delta$ is a power of $2$, so its eigenvalues are in $[0,1]$.

        By  Lemma~\ref{lemma:projection_pairs}, $z$ is non-zero with probability at most $\frac{1}{n^{\alpha/8}}$. Because $z \le n$, we get  
        $m \EE[z] \le \frac{mn}{n^{\alpha/8}} \le \frac{1}{n^{(\alpha-24)/8}} \le \frac{1}{n^{\alpha/16}}$, where the last inequality holds since $\alpha > 48$. 
    \end{proof}

    The following two corollaries follow by Lemmas \ref{lemma:osvv_potential_step_1} and \ref{lemma:osvv_potential_step_2}, similarly to Corollaries~\ref{cor:krv_potential_reduction} and \ref{cor:krv_total_potential}.

\begin{corollary}
    \label{cor:osvv_potential_reduction}
        For each round $t$, $\EE[\psi(t+1)]\le\left(1-\frac{1}{3\alpha \log n}\right)\psi(t)+\frac{1}{n^{\alpha/16}}$, where the expectation is over the unit vector $r\in \RR^n$ and conditioned on $\psi(t)$.
    \end{corollary}

\begin{corollary} [Decrease in Potential]
    \label{cor:osvv_total_potential}
        With high probability over the choices of $r$, 
        \begin{align*}
            \psi(T) \le \frac{1}{n}
        \end{align*}
    \end{corollary} 

    Differently from Lemma \ref{lemma:krv_F_expander}, in the following lemma we have to use the spectral properties of $\F{T}$ and Cheeger's inequality (Lemma \ref{lemma:normalized_laplacian_expansion}).
\begin{lemma}
    \label{lemma:osvv_F_expander}
        If $\psi(T)\le\frac{1}{n}$, then $\F{T}$ is a $\Omega(1)$-expander.
    \end{lemma}
    \begin{proof}
        Because $\W{T} = (P\normalized{\F{T}}P)^{\delta}=(\normalized{\F{T}})^{\delta}P$, $\W{T}$ has the same eigenvalues as $(\normalized{\F{T}})^{\delta}$, apart from the one corresponding to the eigenvector $\sqrt{\dG}$, whose eigenvalue in $\W{T}$ is $0$. Because $\psi(T) = \tr(\W{T}^2)\le\frac{1}{n}$, all the eigenvalues of $\W{T}^2$ are at most $\frac{1}{n}$. In particular, this means that the second-largest eigenvalue of $\normalized{\F{T}}$ is at most $\frac{1}{n^{\frac{1}{2\delta}}}$. We pick $\delta = \Theta(\log n)$, such that this is at most $\frac{1}{2}$. This means that the second-smallest eigenvalue of $\mathcal{N}(\F{T}) = I-\normalized{\F{T}}$ is at least $1-\frac{1}{2} = \frac{1}{2}$. It follows by Lemma \ref{lemma:normalized_laplacian_expansion} that $\Phi(\F{T}) = \Omega(1)$.

    \end{proof}

    The following corollary follows from Corollary~\ref{cor:osvv_total_potential} and Lemma \ref{lemma:osvv_F_expander} the same way as Corollary~\ref{cor:krv_G_expander} followed from Corollary~\ref{cor:krv_total_potential} and Lemma~\ref{lemma:krv_F_expander}.

    \begin{corollary}
    \label{cor:osvv_G_expander}
        If we reach round $T$, then with high probability, $G$ is a $\Omega(\phi)$-expander.
    \end{corollary}
    \begin{proof}
        Assume we reach round $T$. By Corollary \ref{cor:osvv_total_potential} and Lemma \ref{lemma:osvv_F_expander}, with high probability, $\F{T}$ is a $\Omega(1)$-expander. By Lemma \ref{cor:osvv_F_t_embeddable_in_G_t}, $\F{T}$ is embeddable in $G_T$ with congestion $O(\frac{1}{\delta})$. Note that $G_T$ is a union of $T$ $\dG$-matchings $\{M_{t}\}_{t=1}^T$, each having $\dG{\M{t}} = \dGG = \dG{\F{T}}$. Therefore, $\dG{G_T} = T\cdot \dG{\F{T}}$. 
        So by Corollary \ref{cor:expansion_and_embedding}, $G_T$ is a $\Omega(\frac{\delta}{T})$-expander. By Lemma \ref{lemma:osvv_G_t_embeddable_in_G}, $G_T$ is embeddable in $G$ with congestion $cT$. Together with  the fact that $\dGG = \frac{1}{T} \cdot \dG{G_T}$, we get by Corollary \ref{cor:expansion_and_embedding} again, that $G$ is a $\Omega(\frac{\delta}{cT})$-expander. Recall that $c=O\left(\frac{1}{\phi\log n}\right)$, $\delta = \Theta(\log n)$, and $T=O(\log^2 n)$. Therefore, $G$ is an $\Omega(\phi)$-expander.
    \end{proof}

    \subsection{Proof of Theorem \ref{theorem:osvv}}
    \label{section:osvv_theorem_proof}

    We are now ready to prove Theorem~\ref{theorem:osvv}. The proof is very similar to the proof of Theorem \ref{theorem:krv} in Section \ref{section:krv_theorem_proof}. We include the full details here for completeness.
\begin{proof} [Proof of Theorem \ref{theorem:osvv}]
    If one of the iterations of Algorithm \ref{algo:krv_cut_matching} returns a cut $(S, \bar{S})$, then by Lemma \ref{lemma:krv_3.7_small_flow} we have $\Phi_G(S, \bar{S}) < \frac{1}{c} = O(\phi \log n)$ and we end in Case (2) of Theorem \ref{theorem:osvv}. If the algorithm completes $T$ iterations, then by Corollary \ref{cor:osvv_G_expander} we have $\Phi(G) = \Omega(\phi)$ with high probability, so we end in Case (1) of Theorem \ref{theorem:osvv}.

    Finally, the running time of each iteration is the sum of the running times of the following steps:
    \begin{enumerate}
        \item Sample a random unit vector $r\in \RR^n$.
        \item Compute the projections vector $u = \Dminushalf \W{t} \cdot r$. This takes $O(t\cdot\delta\cdot m)$ time since $\W{t}$ is a multiplication of $O(t \cdot \delta)$ matrices, where each matrix either has $O(m)$ non-zero entries or is a projection matrix $\Pmat$.
        \item Compute the max flow on the auxiliary flow problem in time $T_\text{flow}(n, m)$, where $T_\text{flow}(n,m)$ is the time for computing exact max flow on a graph with $n$  vertices and $m$ edges.
        \item Decompose the flow into paths. This can be done in $O(m\log n)$ time using dynamic trees \cite{ST83}.
    \end{enumerate}
    In total, iteration $t$ (for $t\in \{1,\ldots,T\}$) takes $O\left(m\cdot\delta\cdot t + T_\text{flow}(n, m)\right)$ time. Recall that $T=\Theta(\log^2 n)$ and $\delta = \Theta(\log n)$, so the total running time is $O\left(m\log^5 n + \log^2 n \cdot T_\text{flow}(n, m)\right)$. 

    \end{proof}

    \section{Proof of Theorem \ref{theorem:our_expander_decomposition}}
	\label{appendix:unit_flow_trimming}
	We use Theorem 2.1 from~\cite{saranurak2019expander}:
	\begin{theorem} [Trimming; Theorem 2.1 of \cite{saranurak2019expander}]
	\label{theorem:SW19_2.1}
	    Given a graph $G=(V,E)$ and $A\subseteq V, \bar{A}=V\setminus A$ such that $A$ is a near $\phi$-expander in $G$, $|E(A,\bar{A})|\le \phi\vol(A)/10$, there exists an algorithm (``the trimming step''), that finds $A'\subseteq A$ in time $O\left(\frac{|E(A,\bar{A})|\log n}{\phi^2}\right)$ such that $\Phi_{G\{A'\}}\ge \phi/6$. Moreover, $\vol(A')\ge \vol(A) - \frac{4}{\phi}|E(A,\bar{A})|$, and $|E(A',\bar{A'})|\le 2|E(A,\bar{A})|$.
	\end{theorem}

	Using this theorem and Theorem \ref{theorem:our_cut_matching}, we can prove Theorem \ref{theorem:our_expander_decomposition} in the same way~SW proved 
 Theorem \ref{theorem:SW-main} (which is
 Theorem 1.2 in their paper). Recall their Expander Decomposition algorithm [Algorithm 1 of~\cite{saranurak2019expander}] given as Algorithm \ref{algo:expander_decomposition} here.
\begin{algorithm}[hbt!]
        \caption{Expander Decomposition~\cite{saranurak2019expander}}
        \label{algo:expander_decomposition}
        \begin{algorithmic}[1]
            \Function{Decomp}{$G$, $\phi$}
                \State Call Cut-Matching($G$, $\phi$) \Comment{See Algorithm \ref{algo:cut_matching}}
                \If {we certify that $\Phi_G\ge \phi$}  
                    \Return $G$
                \ElsIf {we find a relatively balanced cut $(A,R)$}
                    \State \Return Decomp($G\{A\}$, $\phi$) $\cup$ Decomp($G\{R\}$, $\phi$)
                \Else {~we find a very unbalanced cut $(A,R)$}
                    \State $A' \gets $Trimming($G$, $A$, $\phi$) \Comment{Algorithm from Theorem \ref{theorem:SW19_2.1}.}
                    \State \Return $A' \cup$ Decomp($G\{V\setminus A'\}$, $\phi$)
                \EndIf
            \EndFunction
        \end{algorithmic}
    \end{algorithm}

	\begin{proof} [Proof of Theorem \ref{theorem:our_expander_decomposition}]

	    First, note that the conditions of Theorem \ref{theorem:SW19_2.1} hold when we call Trimming($G$, $A$, $\phi$). Indeed, by Case (3) of Theorem \ref{theorem:our_cut_matching}, we have $\vol(\bar{A})\le \frac{m}{10c_0\log n}$ and $\Phi_G(A, \bar{A})\le c_0\phi\log n$. This means that $|E(A,\bar{A})|\le \phi m/10$. 
	    Therefore, the trimming step in $O\left(\frac{m \log n}{\phi}\right)$ time gives a $\phi/6$ expander $G\{A'\}$ where $\vol(A') \ge \Omega(m)$ (the total volume is $2m$, and the volume of $\bar{A}$ is $O\left(\frac{m}{\log n}\right)$).

	    Theorem \ref{theorem:SW19_2.1} also indicates that the conductance of $(A', V \setminus A')$ is at most twice the conductance of $(A, V \setminus A)$, so $(A', V \setminus A')$ has conductance $O(\phi \log n)$.
	    Since Algorithm \ref{algo:expander_decomposition} only stops working on a component when it certifies that the induced sub-graph has conductance at least $\phi/6$, the leaves of the recursion tree give an expander decomposition with expansion $\Omega(\phi)$.

	    Note that whenever we run on a smaller sub-graph, we always add self-loops such that the degree of a node remains the same as its degree in the original graph, so we get the stronger expansion guarantee with respect to the volume in the original graph.
	    To bound the running time, note that if we get Case (2) of Theorem \ref{theorem:our_cut_matching} then both sides of the cut have volume at most $\left(1-\Omega\left(\frac{1}{\log n}\right)\right)\cdot 2m$. If we get Case (3), then the volume of $A'$ is $\Omega(m)$, so the volume left is reduced by a constant factor (\ie, $\vol(V\setminus A')\le \left(1-\Omega(1)\right)\cdot 2m$). 

     In any case, the volume of the largest component drops by a factor of at least $1-\Omega\left(\frac{1}{\log n}\right)$ across each level of the recursion, so the recursion depth is $O(\log^2 n)$. 

	    Since the components on one level of the recursion are all disjoint, the total running time on all components of one level of the recursion is 
        \[
            O\left(\frac{m \log n}{\phi}\right) + O\left(m\log^5 n + \frac{m\log^2 n}{\phi}\right) = O\left(m\log^5 n + \frac{m\log^2 n}{\phi}\right).
        \]
        The first term comes from the applications of trimming step (Theorem \ref{theorem:SW19_2.1}), and the second comes from the applications of the cut-matching step (Theorem \ref{theorem:our_cut_matching}). Hence, the total running time is $O\left(m\log^7 n + \frac{m\log^4 n}{\phi}\right)$.

	    To bound the number of edges between expander clusters, observe that in both Case (2) and Case (3) of Theorem \ref{theorem:our_cut_matching}, we always cut a component along a cut of conductance $O(\phi \log n)$, and in Case (3) this is true after the trimming step as well. Thus, we can charge the edges on the cut to the edges in the smaller side of the cut, so each edge is charged $O(\phi \log n)$. An edge can be on the smaller side of a cut at most $\log n$ times, so we can charge each edge at most $O(\phi\log^2 n)$ to pay for all the edges we leave between the final clusters. This bounds the total number of edges between the expanders to be at most $O(m\phi\log^2 n)$.
	\end{proof}
    \begin{remark}
        In \cite[Section 8]{LNPSsoda13} it is shown that one can implement Theorem \ref{theorem:SW19_2.1} with a running time of $\tilde{O}(m)$. As mentioned in Remark \ref{remark:fair_cut_global}, Theorem \ref{theorem:our_cut_matching} can also be implemented in $\tilde{O}(m)$ time. It then follows using a similar proof that Theorem \ref{theorem:our_expander_decomposition} can be implemented in $\tilde{O}(m)$ time. 
    \end{remark}

    \section{Algebraic Tools}
	\label{appendix:matrix_inequalities}
\begin{fact}[Exercise \uppercase\expandafter{\romannumeral 9\relax}.3.3~\cite{bhatia2013matrix}]\label{fact:book_trace_identities}
	Let $X,Y,A \in \RR^{n \times n}, m \in \NN$, then 
	\begin{enumerate}
	    \item $\tr(XY) = \tr(YX)$.
	    \item $\tr(A^{2m}) \le \tr((A \cdot A')^m)$.
	\end{enumerate}
	\end{fact}
\begin{proof} 
    \begin{enumerate}
	        \item Follows since 
            \[
                \tr(XY) = \sum_{i=1}^n {\sum_{j=1}^n{\X{}{i,j}\Y{}{j,i}}}
    	        = \sum_{j=1}^n {\sum_{i=1}^n{\Y{}{j,i}\X{}{i,j}}}
    	        = \tr(YX).
            \]
	        \item Using Schur decomposition (Chapter I.2 of~\cite{bhatia2013matrix}), we decompose $A = V U V^{*}$, where $V \in \CC^{n\times n}$ is a unitary matrix, 
            $U\in \CC^{n\times n}$ is an upper triangular matrix, and $U^\star$ is the \emph{conjugate transpose} of $U$ (that is, $U^{\star}_{i,j} = \overline{U_{j,i}}$). Observe that
	        \[\tr(A^{2m}) = \tr((V U V^{*})^{2m}) = \tr(V U^{2m} V^{*}) \numeq{3} \tr(V^{*} V U^{2m} ) = \tr(U^{2m}) = \sum_{i=1}^n{\U{}{i,i}^{2m}}, \]

            where Equality $(3)$ follows from the first part of the lemma.

	        Let $S = U U^*$, then $\Smat{}{i,i} = \sum_{j=1}^n{|\U{}{i,j}|^2} \ge |\U{}{i,i}|^2$, for every $i$. Moreover, $S$ is PSD Hermitian  
            matrix\footnote{A matrix $S \in \CC^{n\times n}$ is \emph{Positive Semi-Definite} (or PSD) if for every $0\neq v \in \CC^n$ it holds that $v^\star S v \ge 0$. A matrix $S \in \CC^{n\times n}$ is \emph{Hermitian} if $S^\star = S$.} so $S$ can be decomposed to $S = R D R^*$, where $R$ is a unitary matrix and $D$ is diagonal with non-negative real entries (this is essentially the spectral theorem, see Chapter I.2 of~\cite{bhatia2013matrix}).
            Since $R$ is unitary, it holds that $\sum_{j=1}^n|\R{}{i,j}|^2 = 1$, for every $i$. Therefore, we get that
	        \begin{align*}
	            \Matrix{S^m}{i,i} &= \Matrix{R D^m R^*}{i,i} = \sum_{j=1}^n{|\R{}{i,j}|^2 \D{}{j,j}^m} \numge{1}
    	        \left( \sum_{j=1}^n{|\R{}{i,j}|^2 \D{}{j,j}} \right)^m \\&=
    	        (\Smat{}{i,i})^m \ge |\U{}{i,i}|^{2m},
	        \end{align*}
            where Inequality $(1)$ holds since $f(x) = x^m$ is convex for $x\ge 0$.
	        We conclude that 
	        \begin{align*}
	             \tr((A \cdot A')^m) &= \tr((V U U^* V^*)^m) = \tr(V (U U^*)^m V^*) \numeq{3} \tr((U U^*)^m)\\
	             &= \tr(S^m) = \sum_{i=1}^n \Matrix{S^m}{i,i} \ge \sum_{i=1}^n (\U{}{i,i})^{2m} 
	             = \tr(A^{2m}),
	        \end{align*}
	    \end{enumerate}
     where Equality $(3)$ follows by the first part of the lemma.
	\end{proof}
\begin{lemma}[In the proof of Theorem \uppercase\expandafter{\romannumeral 9\relax}.3.5~\cite{bhatia2013matrix}]\label{lemma:bhatia_inequality}
	Let $X,Y \in \RR^{n \times n}$ be symmetric matrices. Then for any positive integer $k$,  $\tr((XY)^{2^k}) \le \tr(X^{2^k}Y^{2^k})$.
	\end{lemma}
	\begin{proof}
	   \begin{align*}
	       \tr((XY)^{2^k}) \numle{1} \tr(((XY)(XY)')^{2^{k-1}}) = 
	       \tr((XY^2 X)^{2^{k-1}}) \numeq{2} \tr((X^2 Y^2)^{2^{k-1}}).
	   \end{align*}
    Inequality (1) follows from Fact~\ref{fact:book_trace_identities}(2) and Equality (2) follows from Fact~\ref{fact:book_trace_identities}(1). 
	   The result follows by induction since $X^2$ and $Y^2$ are symmetric.
	\end{proof}
\begin{theorem} [Symmetric Rearrangement; Theorem A.2~\cite{orecchia2008partitioning}]
	\label{theorem:symmetric_rearrangement}
	    Let $X, Y\in \RR^{n\times n}$ be symmetric matrices. Then for any positive integer $k$,
	    \begin{align*}
	        \tr\left((XYX)^{2^k}\right) \le \tr\left(X^{2^k}Y^{2^k}X^{2^k}  \right)\ .
	    \end{align*}
	\end{theorem}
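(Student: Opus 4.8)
The plan is to reduce this ``sandwiched'' inequality to the two-matrix inequality of Lemma~\ref{book-inequality} by repeatedly invoking the cyclicity of the trace (Fact~\ref{book-trace-identities}(1)). The first step is the purely algebraic identity
\[
  (XYX)^m = X\,(YX^2)^{m-1}\,YX \qquad \text{for every positive integer } m,
\]
which I would prove by a one-line induction on $m$: the base case $m=1$ is immediate, and the inductive step multiplies both sides on the right by $XYX$, using $X\cdot X = X^2$ to absorb the adjacent factors. Setting $m = 2^k$ and applying $\tr(AB)=\tr(BA)$ twice then yields
\[
  \tr\!\big((XYX)^{2^k}\big) = \tr\!\big(X\,(YX^2)^{2^k-1}\,YX\big) = \tr\!\big((YX^2)^{2^k-1}\,YX^2\big) = \tr\!\big((YX^2)^{2^k}\big).
\]

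Next, observe that although $YX^2$ itself need not be symmetric, the two factors $Y$ and $X^2$ \emph{are} symmetric (the latter being the square of a symmetric matrix). Hence Lemma~\ref{book-inequality}, applied to the pair $(Y, X^2)$, gives
\[
  \tr\!\big((YX^2)^{2^k}\big) \le \tr\!\big(Y^{2^k}(X^2)^{2^k}\big) = \tr\!\big(Y^{2^k}X^{2^{k+1}}\big).
\]
Finally, one more application of trace cyclicity rewrites the right-hand side as
\[
  \tr\!\big(Y^{2^k}X^{2^{k+1}}\big) = \tr\!\big(Y^{2^k}X^{2^k}\cdot X^{2^k}\big) = \tr\!\big(X^{2^k}Y^{2^k}X^{2^k}\big),
\]
and chaining the three displays proves the claim.

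I do not expect a genuine obstacle here: the argument is entirely formal, and the only points requiring care are getting the exponents in the algebraic identity right and making sure the matrices fed into Lemma~\ref{book-inequality} are the symmetric ones $Y$ and $X^2$ (rather than their non-symmetric product $YX^2$). The single ``idea'' is to spot that the sandwich $XYX$ can be unfolded cyclically into the power $(YX^2)^{2^k}$, after which Lemma~\ref{book-inequality} does all the work.
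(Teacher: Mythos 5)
Your proof is correct and follows essentially the same route as the paper: both rewrite $\tr((XYX)^{2^k})$ via trace cyclicity as $\tr((X^2Y)^{2^k})$ (you write it as $\tr((YX^2)^{2^k})$, which is the same by cyclicity), then apply Lemma~\ref{book-inequality} to the symmetric pair $\{X^2, Y\}$, and finish with one more cyclic rearrangement. Your version is a bit more explicit about the algebraic identity behind the cyclic regrouping, but there is no substantive difference in the argument.
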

	\begin{proof}
	    \begin{align*}
	        \tr\left((XYX)^{2^k}\right) = \tr((X^2Y)^{2^k}) \le 
	        \tr(X^{2^{k+1}}Y^{2^k}) = \tr(X^{2^k}Y^{2^k}X^{2^k})
	    \end{align*}
     where the equalities follow from Fact~\ref{fact:book_trace_identities}(1) and the inequality follows from Lemma~\ref{lemma:bhatia_inequality}.
	\end{proof}

	The following lemma is standard.
	\begin{lemma}
	\label{lemma:laplacian_mult}
	    For every weighted graph $G = (V,E,w)$ and $v \in \mathbb{R}^{n}$, where $n=|V|$, it holds that $v^t \mathcal{L}(G) v = \sum_{\{s,t\}\in E}{w(\{s,t\})\cdot (v_s - v_t)^2}$.

	\end{lemma}

    \begin{lemma}
    \label{lemma:matching_laplacian}
	    Let $M$ be a $\dGG$-matching on a graph with vertices $V=[n]$. For every $A \in \RR^{n\times n}$, it holds that $\tr(A' \mathcal{L}(M) A) =\sum_{\{i,j\}\in M}{\norm{\A{}{i}-\A{}{j}}_2^2}$, where $\A{}{i}\in \RR^n$ is row $i$ of $A$. Note that some pairs may appear in the sum multiple times.

	\end{lemma}
\begin{proof}
	    The $i$'th column of $A$ is $A'(i)\in\RR^n$.
        By Lemma~\ref{lemma:laplacian_mult},
	    \begin{align*}
	        \tr(A' \mathcal{L}(M) A) &= \sum_{k=1}^{n}{\left(A'(k)\right)' \mathcal{L}(M) A'(k)} 
    	    = \sum_{k=1}^{n}{\sum_{\{i,j\}\in M}{(\A{}{i,k}-\A{}{j,k})^2}}\\
    	    &=  \sum_{\{i,j\}\in M}{\sum_{k=1}^{n}{(\A{}{i,k}-\A{}{j,k})^2}}
    	    = \sum_{\{i,j\}\in M}{\norm{\A{}{i}-\A{}{j}}_2^2}
        \end{align*}

	\end{proof}
\begin{lemma}\label{lemma:normalized-laplacian-eigenvalues}
	Let $A \in \RR^{n \times n}$ be a symmetric $\dGG$-stochastic matrix with non-negative entries, 
    then the eigenvalues of the normalized Laplacian $\mathcal{N}(A) =I - \Dminushalf A \Dminushalf$ are in the range $[0,2]$. 
	\end{lemma}
\begin{proof}
	    Since $\mathcal{L}(A) \succeq 0$, we get that $\mathcal{N}(A) = \Dminushalf \mathcal{L}(A) \Dminushalf \succeq 0$.
        Therefore, it is left to show that the eigenvalues of $\Dminushalf A \Dminushalf$ are at least $-1$. Let $X = D + A$. Note that $X \succeq 0$. Indeed, similarly to Lemma~\ref{lemma:laplacian_mult}, for any $v\in \RR^n$,
	    \[v'Xv = \sum_{i=1}^n\sum_{j=i}^n \A{}{i,j} (v_i+v_j)^2 \ge 0.\]

        Hence, $I + \Dminushalf A \Dminushalf = \Dminushalf X \Dminushalf \succeq 0$ which means that the eigenvalues of $\Dminushalf A \Dminushalf$ are at least $-1$.
	\end{proof}
\begin{lemma}
    \label{lemma:cs_mean}
        Let $\{v_i\}_{i=1}^k$ be a set of $k$ vectors in $\RR^n$. Then,
        \begin{align*}
            k\norm{\frac{\sum_{i=1}^k{v_i}}{k}}_2^2 \le \sum_{i=1}^k{\norm{v_i}_2^2}
        \end{align*}
    \end{lemma}
    \begin{proof}
        By the Cauchy-Schwartz inequality, for all $j\in [n]$, 
        \begin{align*}
        k\left({\frac{\sum_{i=1}^k{v_{i,j}}}{k}}\right)^2 = \frac{1}{k}\left({\sum_{i=1}^k{v_{i,j}}}\right)^2
        \le \frac{1}{k}\cdot k\cdot \sum_{i=1}^k{v_{i,j}^2} = \sum_{i=1}^k{v_{i,j}^2},
        \end{align*}
        Therefore,
        \begin{align*}
        k\norm{\frac{\sum_{i=1}^k{v_i}}{k}}_2^2 = k\sum_{j=1}^n{\left({\frac{\sum_{i=1}^k{v_{i,j}}}{k}}\right)^2}=
        \sum_{i=1}^n
        \frac{1}{k}\left({\sum_{i=1}^k{v_{i,j}}}\right)^2\le \sum_{j=1}^n{\sum_{i=1}^k{v_{i,j}^2}} = \sum_{i=1}^k{\sum_{j=1}^n{v_{i,j}^2}} = \sum_{i=1}^k{\norm{v_i}_2^2}
        \end{align*}
    \end{proof}

    \section{Projection Lemmas}
	\label{appendix:projection}
	The following fact is quite standard in the analysis of algorithms in the cut-matching framework.

\begin{lemma} [Gaussian Behavior of Projections; Lemma 3.5 of~\cite{khandekar2009graph}]
	\label{lemma:projection_original}
	    Let $v\in \RR^n$ and let $r\in \mathbb{S}^{n-1}\subseteq \RR^n$ be a random unit vector. Let $u = \langle v,r\rangle$ be the projection of $v$ onto $r$. Then:
	    \begin{enumerate}
	        \item $\EE[u^2]=\frac{1}{n}\norm{v}_2^2$.
	        \item 
	        $ u^2 \le \frac{\alpha\log n}{n}\norm{v}_2^2
            $ holds with probability of at least $1-n^{-\alpha/4}$, for every constant $\alpha > 0 $ and large enough $n$.\footnote{The exact condition on $n$ is $\frac{n}{\log n} \ge 16\alpha$.}
	    \end{enumerate}
	\end{lemma}

\begin{lemma} 
	\label{lemma:projection_simple}
	    Let $\{v_i\}_{i=1}^k$ be a set of $k\le n$ vectors in $\RR^n$. For $i\in [k]$, let $u_i=\langle v_i,r\rangle$ be the projection of $v_i$ onto a random unit vector $r\in \mathbb{S}^{n-1}\subseteq \RR^n$. Then:

	    \begin{enumerate}
	        \item $\EE[u_i^2]=\frac{1}{n}\norm{v_i}_2^2$ for all $i$.
	        \item 
	        \[
    		        u_i^2 \le \frac{\alpha\log n}{n}\norm{v_i}_2^2
            \]
    	    holds for all $i$ with probability of at least $1-n^{-\alpha/8}$, for every constant $\alpha\ge8$ and large enough~$n$.
	    \end{enumerate}
	\end{lemma}

 \begin{proof}
     Apply Lemma~\ref{lemma:projection_original} on the set of vectors $\{v_i\}_{i=1}^k$. Condition $(1)$ follows immediately. Moreover, by Lemma~\ref{lemma:projection_original} and union bound, Condition $(2)$ holds for all $i$ with probability at least $1-k\cdot n^{-\alpha/8} \ge 1- n^{1-\alpha/4}$. Since $\alpha \ge 8$, it holds that $1- n^{1-\alpha/4} \ge 1- n^{-\alpha/8}$. 
 \end{proof}

\begin{lemma}
	\label{lemma:projection_pairs}
	    Let $\{v_i\}_{i=1}^k$ be a set of $k\le n$ vectors in $\RR^n$. For $i\in [k]$, let $u_i=\langle v_i,r\rangle$ be the projection of $v_i$ onto a random unit vector $r\in \mathbb{S}^{n-1}\subseteq \RR^n$. Then:

	    \begin{enumerate}
	        \item $\EE[u_i^2]=\frac{1}{n}\norm{v_i}_2^2$ for all $i$, and $\EE[(u_i - u_j)^2]=\frac{1}{n}\norm{v_i - v_j}_2^2$ for all pairs $(i,j)$.
	        \item 
	        \begin{align*}
	            u_i^2 &\le \frac{\alpha\log n}{n}\norm{v_i}_2^2
    		    \\
    		    (u_i - u_j)^2 &\le \frac{\alpha\log n}{n}\norm{v_i - v_j}_2^2
	        \end{align*}
    	    holds for all indices $i$ and pairs $(i,j)$ with probability of at least $1-n^{-\alpha/8}$, for every constant $\alpha\ge16$ and large enough $n$.
	    \end{enumerate}
	\end{lemma}
	\begin{proof}
	    Similarly to the proof of Lemma~\ref{lemma:projection_simple}, apply Lemma \ref{lemma:projection_original} on the set of vectors
	    \begin{align*}
	        \{v_i : i\in [k]\}\cup \{v_i - v_j : i,j\in[k]\}
	    \end{align*}
	    By the union bound all the inequalities in Condition $(2)$ are satisfied simultaneously with probability at least $1-\frac{k\cdot (k+1)}{2}\cdot n^{-\alpha/4}\ge1-k^2\cdot n^{-\alpha/4}\ge1- n^{2-\alpha/4}$. Since $\alpha \ge 16$, it holds that $1- n^{2-\alpha/4} \ge 1- n^{-\alpha/8}$.
	\end{proof}

\begin{lemma}
	\label{lemma:projection_krv}
	    Let $\{v_i\}_{i=1}^k$ be a set of $k\le n+1$ vectors in $\RR^n$. Let $x\in \RR^n$ be a vector, and let $c\in \RR$. Assume that $\langle v_i, x \rangle = c$ for all $i\in [k]$. For $i\in [k]$, let $u_i=\langle v_i,r\rangle$ be the projection of $v_i$ onto a random unit vector $r\in \mathbb{S}^{n-1}\subseteq \RR^n$ orthogonal to $x$. Then:

	    \begin{enumerate}
	        \item $\EE[(u_i - u_j)^2]=\frac{1}{n-1}\norm{v_i-v_j}_2^2$ for all pairs $(i,j)$.
	        \item 
	        \[
    		        (u_i - u_j)^2 \le \frac{\alpha\log n}{n-1}\norm{v_i-v_j}_2^2
            \]
    	    holds for all pairs $(i,j)$ with probability of at least $1-n^{\alpha/8}$, for every constant $\alpha\ge 16$ and large enough $n$.

	    \end{enumerate}
	\end{lemma}
	\begin{proof}
	    Note that the vectors $\{v_i - v_j : i,j\in[k]\}$ are all in the $n-1$ dimensional space orthogonal to $x$ (as $\langle v_i - v_j, x\rangle = c - c = 0$). We can therefore use Lemma \ref{lemma:projection_original} which proves the first item. Given that $\alpha\ge 16$,  Lemma \ref{lemma:projection_original} and the union bound imply that all $\binom{k}{2} \le \binom{n+1}{2}\le  n^2$ inequalities in the second item of the lemma are satisfied with probability at least $1- n^{2-\alpha/4} \ge 1- n^{-\alpha/8}$.

	\end{proof}

\section{\unitFlow}
\label{appendix:unit_flow_matching}

	The goal of this section is to prove Lemma \ref{lemma:unit_flow} using a variation of the push-relabel algorithm by \cite{gt87} called \emph{\unitFlow}~  (\cite{henzinger2017flow, saranurak2019expander} used the name \emph{Unit-Flow}). The following non-vanilla definitions regarding flow will be useful for us throughout this appendix. 

	\begin{definition} [Flow Problem]
	\label{def:flow_problem}
	    A \emph{flow problem} $\Pi$ on a graph $G=(V,E)$ is specified by a source function $\Delta:V\to \RR_{\ge 0}$, a sink function $T:V\to\RR_{\ge 0}$, and edge capacities $c:E\to\RR_{\ge 0}$. We use \emph{mass} to refer to the substance being routed. 

	    For a node $v$, $\Delta(v)$ specifies the amount of mass initially placed on $v$ (source mass at $v$), and $T(v)$ specifies the capacity of $v$ as a sink. For an edge $e=\{u,v\}$, $c(\{u,v\})$ bounds how much mass can be routed along the edge (in either direction).  

	\end{definition}
	\begin{definition} [Flow]
	\label{def:flow}
 Since the flow along an edge $\{u,v\}$ is directed we associate two \emph{arcs} $(u,v)$ and $(v,u)$ with each edge $\{u,v\}$.
	    A \emph{flow} in the graph $G$ is a function $f:V\times V\to \RR$ that satisfies $f(u,v)=-f(v,u)$ for all $u,v\in V$ and $f(u,v)=0$ for all $\{u,v\}\notin E$. $f(u,v)>0$ means that mass is routed in the direction from $u$ to $v$ (along the arc $(u,v)$). If $f(u,v)=c(\{u,v\})$, then we say that the arc $(u,v)$ is \emph{saturated} or that \emph{$\{u,v\}$ is saturated in the direction from $u$ to $v$}. 
	    Given $\Delta$, we also extend $f$ to a function on the vertices, where $f(v)=\Delta(v)+\sum_{u\in V}{f(u,v)}$ is the amount of mass ending at $v$ after the routing according to $f$. If $f(v)\ge T(v)$, then we say $v$'s sink is \emph{saturated}.
	\end{definition}

	\begin{definition} [Feasible Flow]
	\label{def:feasible_flow}
	    We say that $f$ is a \emph{feasible} flow for a flow problem $\Pi$ if:
	    \begin{enumerate}
	        \item $\forall e=\{u,v\}\in E, |f(u,v)|\le c(\{u,v\})$ (\ie, $f$ obeys edge capacities).
	        \item $\forall v\in V, \sum_{u\in V}{f(v,u)}\le \Delta(v)$ (\ie, the net amount of mass routed away from a node can be at most the amount of its initial mass). In other words, $f(v) \ge 0$.
	        \item $\forall v\in V, f(v)\le T(v)$.
	    \end{enumerate}
	\end{definition}

	The goal of this section is to prove Lemma \ref{lemma:unit_flow}. We repeat it here for convenience, using the new definitions. 

	\begin{lemma} [Lemma \ref{lemma:unit_flow}]

	    Let $G=(V,E)$ be a graph with $n$ vertices and $m$ edges, let $A^l, A^r \subseteq V$ be multisets such that $|A^r| \ge \frac{1}{2}m, |A^l| \le \frac{1}{8}m$, and let $0<\phi<\frac{1}{\log n}$ be a parameter. For a vertex $v\in V$, denote by $m_v$ the number times $v$ appears in $A^l$, and by $\bar{m}_v$ the number of times $v$ appears in $A^r$. Assume that $m_v + \bar{m}_v\le \dG{}{v}$ for all $v\in V$. We define the flow problem $\Pi(G)$, as the problem in which each vertex $v\in A^l$ is a source of $m_v$ units of mass and each vertex $v\in A^r$ is a sink with capacity of $\bar{m}_v$ units. All edges have the same capacity $c=\Theta\left(\frac{1}{\phi\log n}\right)$, which is an integer. 
	    Then, in time $O(\frac{m}{\phi})$, we can find:
	    \begin{enumerate}
	        \item A feasible flow $f$ for $\Pi(G)$ (\ie,\ a flow that ships out $m_v$ units from every source $v\in A^l$); or
	        \item A cut $S$ where $\Phi_{G}(S, V\setminus S)\le\frac{7}{c}=O(\phi\log n)$, $\vol(V\setminus S) \ge \frac{1}{3}m$ and a feasible flow for the problem $\Pi(G-S)$, where we only consider the sub-graph $G[V\setminus S]$, vertices $v\in A^l\setminus S$ are sources of $m_v$ units, and vertices $v\in A^r\setminus S$ are sinks of $\bar{m}_v$ units. 
	    \end{enumerate}
	\end{lemma}

     The proof of this lemma 
     is by an algorithm similar
     to the one
      described in Section 3 and in Appendix A of~\cite{saranurak2019expander}, which is  a dynamic version of the algorithm in \cite{henzinger2017flow}. The fact that we work directly on the original graph (and not the subdivision graph) allows us to avoid many technicalities regarding the returned cut. Specifically, we don't have to argue how to modify the algorithm such that the returned cut in the subdivision graph will be easy to transform into an appropriate cut in the original graph.

	Since \emph{\unitFlow} is a variation of the push-relabel algorithm \cite{gt87}, we give the following standard push-relabel definitions.
\begin{definition} [Preflow]
	\label{def:preflow}
	    We say that $f$ is a \emph{preflow} for a flow problem $\Pi$ if it satisfies Conditions (1) and (2) in Definition \ref{def:feasible_flow}.
	    For a node $v\in V$, we define the \emph{excess mass} at $v$ as $ex(v)=\max\{f(v)-T(v), 0\}$. We say that there is excess mass at $v$ if $ex(v)>0$.

	    The residual capacity of the arc $(u,v)\in V\times V$ with respect to the preflow $f$ is $r_f(u,v) = c(\{u,v\}) - f(u,v)$.
	\end{definition}

Our algorithm maintains a labeling of the nodes $\ell: V\rightarrow \{0,1,\ldots,h \}$ for a parameter $h$ that we set as $h \defeq 1000c\log m=\Theta(1/\phi)$.
\begin{definition} [valid-state, valid-solution~\cite{saranurak2019expander}]
	\label{def:G-valid}
	    We say that a tuple $(\Delta, f, l)$ is a \emph{valid-state in $G$} if it satisfies:
	    \begin{enumerate}
	        \item If $l(u)>l(v)+1$ and $\{u,v\}$ is an edge, then $(u,v)$ is saturated, \ie, $f(u,v)=c$.
	        \item If $l(u)\ge1$, then $u$'s sink is saturated, \ie, $f(u)\ge T(u)$.
	    \end{enumerate}
	    We say that $(\Delta, f, l)$ is a \emph{valid-solution in $G$} if it additionally satisfies
        \begin{enumerate}
            \setcounter{enumi}{2}
            \item If $l(u)<h$ then there is no excess mass at $u$, \ie, $f(u)\le T(u)$. Together with (2), this means that if $l(u)\in \{1, \ldots, h-1\}$ (\ie,  $l(u)\notin \{0,h\}$), then $f(u) = T(u)$.
        \end{enumerate} 
	\end{definition}

	In this appendix, we only consider flow problems where $c(\{u,v\})=c=\Theta(\frac{1}{\phi\log n})$ for all edges $\{u,v\}\in E$, and $T(v)=\bar{m}_v$ for all vertices $v\in V$. As specified in Lemma \ref{lemma:unit_flow}, $c$ is an integer.

	The algorithm (see Algorithm \ref{algo:unit_flow_matching}) runs in  iterations. In iteration $t$, we use Algorithm \ref{algo:unit_flow} to improve the current flow. At the end of the iteration, Algorithm \ref{algo:unit_flow} specifies a set of nodes $S_t$ to be removed from the graph. The next iteration runs on the \emph{surviving} vertices. At the end, the returned cut $S$ (for Lemma \ref{lemma:unit_flow}) is the union of these sets $S_t$.
\begin{remark}
	    Throughout this appendix we use $S_t$ to denote the cut computed in iteration $t$ of \emph{\unitFlow}. It should not be confused with the $S_t$ in section \ref{section:our-algorithm}, which denotes the cut at iteration $t$ of \emph{Cut-Matching}.
	\end{remark}
\begin{algorithm}[hbt!]
        \caption{\unitFlow~\cite{henzinger2017flow, saranurak2019expander} }
        \label{algo:unit_flow}
        \begin{algorithmic}
            \Function{\unitFlow}{$G, h, (\Delta, f, l)$}\\
                \textbf{Require:} {$(\Delta, f, l)$ is valid-state.}
                \While{$\exists v$ where $l(v) < h$ and $ex(v)>0$}
                    \State let $v$ be such a vertex with smallest $l(v)$.
                    \State Push/Relabel($v$).
                \EndWhile \\
                \State \underline{\textbf{Compute a sparse cut $S$:}}
                \State $S\gets \emptyset$.
                \If{$B_h \neq \emptyset $}
                    \Comment{$B_i \defeq \{v \mid l(v)=i \}$ and $B_{\ge i} \defeq \{v \mid l(v) \ge i \}$.}
                    \State $i\gets h$.
                    \While{$|\{ (u,v) \in E(B_{\ge i}, V\setminus B_{\ge i}) \mid -c\le f(u,v)<c\}| > \frac{5\log (2m)}{h}\cdot \vol(B_{\ge i})$}
                    \State $i \gets i-1$.
                    \EndWhile
                    \Comment{The while loop ends with $i>0$ by Lemma~\ref{lemma:unit_flow_cut}.}
                    \State $S \gets B_{\ge i}$.
                \EndIf

                \State \Return{$(\Delta, f, l), S$}.
            \EndFunction
            \\
            \Function{Push/Relabel}{$v$}
                \If{$\exists$ edge $(v,u)$ such that $r_f(v,u)>0$ and $l(v)=l(u)+1$}
                    \State Push($v,u$).
                \Else
                    \State Relabel($v$).
                \EndIf
            \EndFunction
            \\
            \Function{Push}{$v,u$}\\
                \textbf{Require:} {$r_f(v,u)>0$, and $ex(u)=0$. }

                \Comment{$v$ has the smallest label among $\{v \in V \mid ex(v) > 0\}$.}
                \State $f(v,u)\gets f(v,u) + 1, f(u,v) \gets f(u,v) - 1$.
            \EndFunction
            \\
            \Function{Relabel}{v}\\
                \textbf{Require:} {$ex(v)>0, l(v)<h$, and $\forall u\in V,$ $r_f(v,u)>0\Longrightarrow l(v)\le l(u)$. }
                \State $l(v) \gets l(v) + 1$.
            \EndFunction
        \end{algorithmic}
    \end{algorithm}

    The procedure to route flow in one iteration is called \emph{\unitFlow}, and is described in Algorithm \ref{algo:unit_flow} [Algorithm 3.1 of~\cite{henzinger2017flow}, Algorithm 3 of ~\cite{saranurak2019expander}]. This algorithm gets a valid-state $(\Delta, f, l)$, and returns a valid-solution $(\Delta, f', l')$ and a cut $S$.  It
    maintains a preflow $f$ and labels $l:V\to\{0,...,h\}$ such that $(\Delta, f, l)$ is always valid. The cut returned by \emph{\unitFlow} is a \emph{level cut} $B_{\ge i^\star} \coloneqq \{v \mid l'(v) \ge i^\star \}$ for some $0 < i^\star \le h$.

    At a high level, the goal of the \emph{\unitFlow} algorithm is to dissipate the input mass $\Delta$ to the sinks. However, nodes that reach label $l(v)=h$ are considered \emph{inactive} and we do not try to push their excess mass away. Instead, we remove them from the surviving vertex set
    in the next iteration of Algorithm \ref{algo:unit_flow_matching} [Algorithm at Section 3.4.2 of~\cite{saranurak2019expander}]. In other words, they do not participate in the following iterations of \emph{\unitFlow}. 

  Lemmas \ref{lemma:SW19_A.1} and \ref{lemma:unit_flow_cut} detail the properties of Algorithm \ref{algo:unit_flow}. The proof of Lemma \ref{lemma:SW19_A.1} is by standard arguments for push/relabel algorithms~\cite{gt87}. 

\begin{lemma}[\cite{saranurak2019expander}]
    \label{lemma:SW19_A.1}
        During its execution, \emph{\unitFlow} maintains a preflow $f$ and labels $l$ so that $(\Delta, f, l)$ is valid-state, given that it was valid initially.
        When it terminates, $(\Delta, f, l)$ is a valid-solution.

    \end{lemma}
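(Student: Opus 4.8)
The plan is to prove this by induction on the number of \textsc{Push} and \textsc{Relabel} steps executed so far, strengthening the inductive hypothesis so that it also records that the \textbf{Require} clause of each operation holds at the moment it is invoked; the two halves have to be proved together because each is used to justify the other. The base case is exactly the hypothesis of the lemma (and the all-zero $f$ is trivially a preflow), and I would note along the way that $f$ stays integral throughout, since each push moves one unit and every capacity is an integer — this integrality is essentially the only quantitative ingredient used below.

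First I would discharge the preconditions. When \textsc{Push}$(v,u)$ fires, $v$ is the vertex of smallest label among those with $ex(\cdot)>0$ and $l(\cdot)<h$, and the admissible arc has $l(u)=l(v)-1<l(v)$, so $u$ cannot carry excess, which is the \textbf{Require} $ex(u)=0$. When \textsc{Relabel}$(v)$ fires, $ex(v)>0$ and $l(v)<h$ come from the loop guard, and the residual-label condition ``$r_f(v,u)>0\Rightarrow l(v)\le l(u)$'' is what item (1) of $G$-validity yields once one observes that no admissible arc out of $v$ exists (otherwise a push would have been selected): item (1) gives $l(v)\le l(u)+1$ on every residual arc, and non-admissibility upgrades this to $l(v)\le l(u)$. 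This is the one place where the invariant is genuinely needed to license the next operation, which is why the induction must carry both statements at once.

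It then remains to check that each operation preserves ``preflow $+$ $G$-valid''. For \textsc{Push}$(v,u)$ only the pair $\{u,v\}$ changes: $f(v,u)$ rises by $1$ and stays $\le c$ by integrality together with $r_f(v,u)>0$ beforehand; $f(v)$ drops by $1$ but was $\ge T(v)+1$ since $ex(v)>0$ and $f$ is integral, so $f(v)\ge T(v)\ge 0$ afterwards, while $f(u)$ only rises — this covers preflow items (1),(2) and $G$-valid item (2), and all other vertices and edges are untouched; $G$-valid item (1) on the orientations $v\to u$ and $u\to v$ is vacuous because $l(v)=l(u)+1$ makes the ``steepness'' hypothesis false both ways. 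For \textsc{Relabel}$(v)$, $f$ is unchanged, so ``preflow'' and item (2) away from $v$ are immediate; item (2) at $v$ holds because $ex(v)>0$ forces $f(v)>T(v)$; and item (1) survives because raising $l(v)$ only shrinks label gaps on arcs entering $v$ (creating no new steep arc there), while on arcs leaving $v$ the \textsc{Relabel} precondition just verified guarantees every unsaturated arc $(v,u)$ already has $l(v)\le l(u)$, so after the increment $l(v)\le l(u)+1$ still. The main obstacle — really the only non-routine point — is spotting the mutual dependence between the $G$-valid invariant and the operation preconditions and resolving it by a single simultaneous induction; the remainder is standard push-relabel bookkeeping.
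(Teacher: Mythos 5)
The paper does not actually supply its own proof of this lemma: it is cited verbatim as Lemma A.1 of \cite{saranurak2019expander} and flagged in the surrounding text as ``standard for push/relabel algorithms~\cite{gt87}'', so there is no in-paper argument to compare against. Your proof is a correct rendering of exactly that standard argument: a simultaneous induction over \textsc{Push}/\textsc{Relabel} steps in which $G$-validity is used to discharge the \textbf{Require} clause of \textsc{Relabel} (non-admissibility upgrades the residual bound $l(v)\le l(u)+1$ to $l(v)\le l(u)$) and the loop's smallest-label selection discharges $ex(u)=0$ for \textsc{Push}, after which preservation of the preflow and $G$-valid conditions is straightforward integral bookkeeping. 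The only cosmetic blemish is the parenthetical about ``the all-zero $f$'' in the base case, which is irrelevant since the lemma's hypothesis is simply that the input state is $G$-valid (and indeed later rounds of Algorithm~\ref{algo:unit_flow_matching} start from nonzero $f_t$); you correctly state the base case right before that aside, so this does not affect the argument.
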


    Lemma \ref{lemma:unit_flow_cut} gives the key properties of the cut $S$ that \emph{\unitFlow} returns at each iteration. It is based on Lemma 3.4 in~\cite{saranurak2019expander}. 
    Intuitively, if we find a valid-solution which is not a feasible flow, the cut $S$ should satisfy the following properties:
    \begin{itemize}
        \item All vertices with label $h$ should be in $S$. Since they are not considered active, we ``give up'' on these vertices and do not try to dissipate their excess mass. They should therefore be removed in the following iteration.
        \item The cut should be ``close'' to minimal. This means that most of the edges crossing from $S$ to $V\setminus S$ should be saturated by the preflow. Explicitly, we upper-bound the number of non-saturated edges going out from $S$. This allows us to bound the total number of edges crossing $(S, V\setminus S)$ (both saturated and unsaturated), in the proof of Lemma \ref{lemma:unit_flow}.
        \item Finally, to bound the number of vertices in $S\cap A^r$ (see Lemma \ref{lemma:S_t_bound}), we also require that any vertex $v \in S$ has a label of at least $1$ and hence $v$'s sink is saturated.
    \end{itemize}

    We prove in Lemma~\ref{lemma:unit_flow_cut} that such a set $S$ exists and moreover, it is achieved by a level cut, which is defined as follows.

    \begin{definition}[level cut]
        Let $(\Delta, f', l')$ be a valid-solution and let $0\le i \le h$. The $i$'th level is defined as $B_i = \{v \mid l'(v) = i \}$. The $i$'th level cut is defined as $(B_{\ge i}, V\setminus B_{\ge i})$, where $B_{\ge i} = \{v \mid l'(v) \ge i \}$.
    \end{definition}

    \begin{lemma}
    \label{lemma:unit_flow_cut}
	    Given a graph $G=(V,E)$ where $n=|V|, m=|E|$, a parameter $h$ and a valid-state $(\Delta, f, l)$, \emph{\unitFlow} outputs a valid-solution $(\Delta, f', l')$ and a set $S\subseteq V$ where we have:
	    \begin{enumerate}
	        \item $S=\emptyset$ and $f'$ is a feasible flow for $\Delta$; or:
	        \item $S\neq\emptyset$ is a cut, such that $\{u\in V \mid l'(u) = h\}\subseteq S$, $\{u\in V \mid l'(u) = 0\}\cap S = \emptyset$. Moreover, the number of edges $\{u,v\}$ such that $u\in S, v\in V\setminus S$, and $-c\le f'(u, v) < c$ is at most $\frac{5\log (2m)}{h}\cdot \vol(S)\le \frac{6}{1000c}\cdot \vol(S)$.\footnote{$5\log (2m) \le 6\log m$ holds for sufficiently large $m$.} Note that these are all the edges crossing the cut except those sending exactly $c$ units of mass from $S$ to $V\setminus S$. 
	    \end{enumerate}
     The computation of the set $S$ takes $O(\vol(S))$ time.
    \end{lemma}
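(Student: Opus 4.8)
\emph{Approach.} The plan is to branch on the behaviour of the while loop of \textit{Unit-Flow} (Algorithm~\ref{algo:unit_flow}), and in the non-trivial branch to read the cut off from the final labelling $l'$ as a \emph{level cut}. When the while loop exits, every $v$ with $l'(v)<h$ has $ex(v)=0$; by Lemma~\ref{lemma:SW19_A.1} the returned tuple $(\Delta,f',l')$ is $G$-valid, and since no vertex below level $h$ has excess it is a $G$-valid-solution. If in fact \emph{no} vertex has excess (including those at level $h$), then $f'(v)\le T(v)$ for all $v$, so $f'$ is a feasible flow for $\Delta$ and we return $S=\emptyset$, giving case (1). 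Otherwise all excess sits on vertices of label exactly $h$, so $B_{\ge h}:=\{v:l'(v)=h\}$ is non-empty and we must produce a cut. Throughout, write $V_i=\{v:l'(v)=i\}$ and $B_{\ge i}=\{v:l'(v)\ge i\}$ for $0\le i\le h$.

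\emph{Finding a good level cut.} The structural input is condition (1) of $G$-validity: if $l'(u)\ge l'(v)+2$ then $(u,v)$ is saturated outward, $f'(u,v)=c$. Hence, among the edges leaving $B_{\ge i}$, the only ones not saturated in the direction out of $B_{\ge i}$ run between $V_i$ and $V_{i-1}$; writing $e_i$ for the number of $V_i$--$V_{i-1}$ edges, the number of unsaturated crossing edges of $B_{\ge i}$ is at most $e_i$. Each such edge has an endpoint in $V_{i-1}$, so $\vol(V_{i-1})\ge e_i$, and therefore $\vol(B_{\ge i-1})=\vol(B_{\ge i})+\vol(V_{i-1})\ge \bigl(1+e_i/\vol(B_{\ge i})\bigr)\vol(B_{\ge i})$. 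Now a ball-growing argument: if $e_i>\tfrac{5\log(2m)}{h}\vol(B_{\ge i})$ held for \emph{every} $i\in\{1,\dots,h-1\}$, then chaining the last inequality from $i=h-1$ down to $i=1$ would give $2m=\vol(V)\ge \bigl(1+\tfrac{5\log(2m)}{h}\bigr)^{h-1}\vol(B_{\ge h})\ge \bigl(1+\tfrac{5\log(2m)}{h}\bigr)^{h-1}$, which exceeds $2m$ because $h\gg\log m$ (recall $h=1000c\log m$, $c>1000$) -- a contradiction. So there is $i^*\in\{1,\dots,h-1\}$ with $e_{i^*}\le \tfrac{5\log(2m)}{h}\vol(B_{\ge i^*})$; put $S_0:=B_{\ge i^*}$. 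Then $B_{\ge h}\subseteq S_0$, $S_0\cap\{v:l'(v)=0\}=\emptyset$, $S_0\neq\emptyset$, and $S_0$ has at most $\tfrac{5\log(2m)}{h}\vol(S_0)\le\tfrac{6\log m}{h}\vol(S_0)=\tfrac{6}{1000c}\vol(S_0)$ unsaturated crossing edges, which is the quantitative claim of case (2).

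\emph{Enforcing the closure property, and the main obstacle.} It remains to upgrade $S_0$ to a set $S$ meeting the extra requirement -- a non-isolated $v$ with $\Delta(v)\le 1$ all of whose neighbours lie outside the cut must itself lie outside. Post-process: while the current set $S$ contains such a vertex $v$, remove $v$ from $S$; let $S$ be the result and $U=S_0\setminus S$. A vertex is removed only once all its neighbours are outside, so no edge joins the final $S$ to $U$; thus the crossing edges of $S$ form a sub-collection of those of $S_0$, and in particular the count of unsaturated crossing edges only decreases. The point is that it decreases fast enough relative to the drop in $\vol(S)$: for any non-isolated $v$ with $\Delta(v)\le 1$, flow conservation and the preflow property give $\sum_w f'(v,w)\le\Delta(v)\le 1$; splitting into saturated outgoing edges (contributing $c$ each) and the remaining $b_v$ incident edges (contributing $\ge-c$ each) yields $c\cdot(\#\text{sat.\ out})-c\,b_v\le 1$, hence (as $c>1$ and $d(v)\ge 1$) $b_v\ge 1$ and $\#\text{sat.\ out}\le b_v$, so $d(v)\le 2b_v$. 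Removing such a $v$ (all of whose edges now cross) drops $\vol(S)$ by $d(v)\le 2b_v$ and drops the unsaturated-crossing count by exactly $b_v$, and since $\tfrac{5\log(2m)}{h}d(v)\le\tfrac{10\log(2m)}{h}b_v\le b_v$ (again $h\gg\log m$), the bound ``unsaturated crossing $\le\tfrac{5\log(2m)}{h}\vol(S)$'' survives. Finally, no vertex of $B_{\ge h}$ is ever removed: it lies at level $h\ge i^*+1$, so each of its edges to $V\setminus S_0$ drops at least two levels and is saturated outward, contradicting $\Delta(v)\le 1<c$ with $d(v)\ge 1$ if all its neighbours were outside; hence $B_{\ge h}\subseteq S$ and $S\cap\{v:l'(v)=0\}=\emptyset$ persist and $S\supseteq B_{\ge h}\neq\emptyset$. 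The delicate part of the proof is precisely this last step -- arranging the cut to satisfy the new closure condition \emph{without} spoiling the SW-style edge bound, i.e.\ the interplay of the post-processing, the local conservation inequality $d(v)\le 2b_v$, and the slack afforded by $h\gg\log m$ -- while the ball-growing estimate in the previous paragraph is the standard, routine ingredient.
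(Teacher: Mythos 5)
Your proof follows the same approach as the paper's: identify a level cut $S_{i^\star}$ via the standard ball-growing argument, then post-process by removing every non-isolated $v\in S_{i^\star}$ with $\Delta(v)\le 1$ all of whose neighbours lie outside $S_{i^\star}$ (your iterative removal collapses to this one-shot removal, since any removed $v$ has no neighbours inside $S_{i^\star}$). The arguments that $B_{\ge h}\subseteq S$ and that the closure condition holds for $S$ match the paper's as well.

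The one place where you do genuinely more work is in verifying that the bound
$z_1(S)\le\frac{5\log(2m)}{h}\vol(S)$ survives the removal step. The paper simply asserts ``the condition on the edges going out from $S$ is ensured by Equation~(\ref{eq:istar}),'' but Equation~(\ref{eq:istar}) is a statement about $\tilde S=S_{i^\star}$; passing to the smaller $S$ shrinks both the count of unsaturated crossing edges and the volume, and it is not automatic that the ratio does not increase. Your local estimate --- from the preflow inequality $\sum_w f'(v,w)\le\Delta(v)\le1$ one gets $s_v-b_v\le 1/c<1$, hence $s_v\le b_v$, $d(v)\le 2b_v$, and thus $\tfrac{5\log(2m)}{h}d(v)\le b_v$ since $h\ge 10\log(2m)$ --- shows that each removed vertex decreases the unsaturated-crossing count by at least as much (in the scaled sense) as it decreases the volume, so the ratio bound is preserved. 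This is a correct and welcome tightening of a step the paper leaves implicit; everything else is the same route.
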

    The proof of this lemma follows~\cite{saranurak2019expander}. We give the details for completeness.
    \begin{proof}
    \label{proof:lemma:unit_flow_cut}
        If $B_h=\emptyset$, no vertex has positive excess, and we end up with Case (1) of the lemma. 

        Assume that $B_h \neq \emptyset$. We prove that there is a level cut $S = B_{\ge i^\star}$, for some $0< i \le h$, that satisfies the requirements of Case (2).
        For any $i\ge 1$, an edge $\{u,v\}$ crossing the cut $B_{\ge i}$, with $u\in B_{\ge i}$ and $v\in V\setminus B_{\ge i}$, must be of one of two types:
        \begin{enumerate}
            \item $l'(u) = i, l'(v)=i-1$
            \item $l'(u)-l'(v)>1$. Note that, by Lemma \ref{lemma:SW19_A.1}, all edges of this type must be saturated, \ie, $f'(u,v)=c$.
        \end{enumerate}
      Let $z_1(i)$ denote the number of edges of the first type. We sweep from $i=h$ to $1$, and argue below that there must be some $i^\star\in [1, h]$ such that 
        \begin{equation} \label{eq:istar}
            z_1(i^\star)\le \frac{5\log (2m)}{h}\cdot \vol(S_{i^\star})\ .
        \end{equation}
        We let $S$ be $B_{\ge i^\star}$. Clearly $S$ satisfies all properties of Case (2) of the lemma. 
        It remains to argue that an index $i^\star\in [1,h]$ that satisfies Equation (\ref{eq:istar}) exists.
        Assume towards a contradiction that $z_1(i)\ge \frac{5\log (2m)}{h}\cdot \vol(B_{\ge i})$ for all $i\in [1, h]$. This implies that $\vol(B_{\ge i-1})\ge (1+\frac{5\log (2m)}{h}) \vol(B_{\ge i})$ for all $i\in [1, h]$. Since $\vol(B_{\ge h})\ge 1$ (because $B_{\ge h} = B_h\neq\emptyset$) and since $h\ge 1000 \log m$, we have $\vol(B_{\ge 0})\ge(1+\frac{5\log (2m)}{h})^{h} > 2m$  which is a contradiction.

        We can compute $S$ in $O(\vol(S))$ time by maintaining the sizes of the cuts as we sweep from $i=h$ to $1$, and stopping as soon as we find a level cut satisfying Equation (\ref{eq:istar}). 

    \end{proof}

\begin{algorithm}[hbt!]
        \caption{Algorithm for Lemma \ref{lemma:unit_flow}}
        \label{algo:unit_flow_matching}
        \begin{algorithmic}[1]
            \Function{Route-Flow}{$G=(V,E), A^l, A^r$} \\
            \textbf{Require:} {$|A^l|\le \frac{1}{8}m$, $|A^r|\ge \frac{1}{2}m$.} 
                \State $c \gets \Theta\left(\frac{1}{\phi\log n}\right)$. \Comment{$c$ is an integer.}
                \State $h \gets 1000c\log m=\Theta(1/\phi)$.
                \State $t \gets 1$.
                \State $A_1\gets V$.
                \State $S \gets \emptyset$.
                \State $T(u)\gets \bar{m}_u$, $\forall u\in V$.
                \State $\Delta_1(u)\gets m_u$, $\forall u\in V$.
                \State $f_1(u,v)\gets 0$, $\forall (u,v)\in E$.
                \State $l_1(v) \gets 0$, $\forall v\in V$.
                \While{Not found a feasible flow}
                    \State $(\Delta_t, f'_t, l'_t), S_t \gets$ \unitFlow($G[A_t], h, (\Delta_t, f_t, l_t)$).
                    \If{$S_t = \emptyset$} \Comment{Found a feasible flow.}

                        \State\Return{$(S, f'_t)$}. 
                    \Else
                        \State $A_{t+1}\gets A_t \setminus S_t$.
                        \State $S \gets S \cup S_t$.
                        \State $f_{t+1} \gets f'_t$ restricted to $A_{t+1}\times A_{t+1}$.
                        \State $l_{t+1} \gets l'_t$ restricted to $A_{t+1}$.
                        \State For each $u\in A_{t+1}$,
                        \begin{align*}
                            \Delta_{t+1}(u)\gets \Delta_t(u) + \max\left(0, \sum_{v\in S_t}{f'_t(v,u)}\right)
                        \end{align*}
                        \State $t \gets t+1$.
                    \EndIf
                \EndWhile
            \EndFunction
        \end{algorithmic}
    \end{algorithm}

    The algorithm proving Lemma \ref{lemma:unit_flow} consists of a number of iterations of \emph{\unitFlow} and is shown in Algorithm \ref{algo:unit_flow_matching}. 
    It is based on the dynamic adaptation of \emph{Unit-Flow} presented in Section 3.4.2 of~\cite{saranurak2019expander}. 
    In iteration $t$, we run \emph{\unitFlow} on the current set $A_t$ of surviving vertices.

    \emph{\unitFlow} moves the mass around and possibly also returns a cut $S_t$ (the cut $S$ at iteration $t$). If it did return such a cut, we remove it from the current set of surviving vertices and set $A_{t+1} \defeq A_t \setminus S_t$. This removal might reduce the amount of mass entering a vertex $u\in A_{t+1}$. We compensate for this by increasing the source mass of $u$. That is, we increase $\Delta_{t+1}(u)$ by $\sum_{v\in S_t}{f'_t(v,u)}$ (if it is positive), where $f'_t$ is the preflow at the end of the $t$'th iteration of Algorithm~\ref{algo:unit_flow_matching}.

    \begin{remark}
        Note that when $\Delta_{t+1}(u)$ increases, no new mass was introduced to the system. This is an artificial way to ensure that $(\Delta_{t+1}, f_{t+1}, l_{t+1})$ is a valid state in $G[A_{t+1}]$, as stated in Lemma~\ref{lemma:SW19_3.5}.

        On the other hand, if $\sum_{v\in S_t}{f'_t(v,u)}$ is negative (\ie, $u$ sends mass to $S_t$), then the deletion of $S_t$ causes an increase in $f_{t+1}(u)$ (the mass at $u$, with respect to the edges in $A_{t+1}$) since $u$ keeps the outbound mass. This does add new mass to the system (even though $\Delta_{t+1}(u) = \Delta_t(u)$)
    \end{remark}

\begin{lemma}[\cite{saranurak2019expander}]
    \label{lemma:SW19_3.5}
        For each round $t$, $(\Delta_t, f_t, l_t)$ is a valid-state in $G[A_t]$.
    \end{lemma}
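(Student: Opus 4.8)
The plan is to prove the statement by induction on the iteration count $t$, carrying $G[A_t]$-validity of the triple $(\Delta_t, f_t, l_t)$ (as in Definition \ref{def:G-valid}, which tacitly also requires $f_t$ to be a preflow) as the invariant. The base case $t=1$ is immediate: $f_1\equiv 0$ is a preflow on $G[A_1]=G$ (it obeys every edge capacity and routes no mass away from any vertex), and both structural conditions of a $G$-valid state hold vacuously because all labels $l_1(\cdot)$ equal $0$ — there is no edge $(u,v)$ with $l_1(u)>l_1(v)+1$, and no vertex with positive label whose sink saturation must be checked.

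For the inductive step, assume $(\Delta_t, f_t, l_t)$ is $G[A_t]$-valid. First I invoke Lemma \ref{lemma:SW19_A.1}: running \textit{Unit-Flow} on $G[A_t]$ from a $G[A_t]$-valid state returns a triple $(\Delta_t, f'_t, l'_t)$ that is again $G[A_t]$-valid (in fact a $G[A_t]$-valid solution, since the while-loop has terminated). If the cut $S_t$ returned is empty, then by Lemma \ref{lemma:unit_flow_cut} $f'_t$ is feasible, Algorithm \ref{algo:unit_flow_matching} halts, and round $t+1$ never occurs, so there is nothing further to prove. Otherwise I must verify that the state $(\Delta_{t+1}, f_{t+1}, l_{t+1})$ produced by the algorithm — where $A_{t+1}=A_t\setminus S_t$, $f_{t+1}$ and $l_{t+1}$ are $f'_t$ and $l'_t$ restricted to $A_{t+1}$, and $\Delta_{t+1}(u)=\Delta_t(u)+\max(0,\sum_{v\in S_t}f'_t(v,u))$ — is $G[A_{t+1}]$-valid. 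Condition (1) of Definition \ref{def:G-valid} transfers for free: any edge of $G[A_{t+1}]$ is an edge of $G[A_t]$, the labels and flow values on $A_{t+1}$ agree with $l'_t$ and $f'_t$, so a steep label gap still forces saturation. The preflow edge-capacity constraint likewise transfers by restriction.

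The two places that require genuine work — and where the source-mass compensation is doing its job — are the preflow net-outflow constraint and condition (2) (saturated sinks at positive-label vertices). For both I would use $\sum_{v\in A_t}f'_t(v,u)=\sum_{v\in A_{t+1}}f'_t(v,u)+\sum_{v\in S_t}f'_t(v,u)$ together with the antisymmetry $f'_t(u,v)=-f'_t(v,u)$. The outflow bound then reads $\sum_{v\in A_{t+1}}f_{t+1}(u,v)\le \Delta_t(u)+\sum_{v\in S_t}f'_t(v,u)\le \Delta_{t+1}(u)$, the first inequality because $f'_t$ is a preflow on $G[A_t]$ and the second because $\max(0,x)\ge x$. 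For condition (2), whenever $l_{t+1}(u)=l'_t(u)\ge 1$,
\[
 f_{t+1}(u)=\Delta_{t+1}(u)+\sum_{v\in A_{t+1}}f'_t(v,u)\ \ge\ \Delta_t(u)+\sum_{v\in A_t}f'_t(v,u)\ =\ f'_t(u)\ \ge\ T(u),
\]
using $\max(0,\cdot)\ge(\cdot)$ for the first inequality and $G[A_t]$-validity of $(\Delta_t,f'_t,l'_t)$ for the last. I expect this final chain to be the crux: the added mass must restore exactly the incoming flow that deleting $S_t$ removes, so that $f_{t+1}(u)\ge f'_t(u)$ (the slightly-stronger-than-needed condition flagged in the footnote), while the outer $\max(0,\cdot)$ must keep $\Delta_{t+1}$ nonnegative without over-crediting $u$ and thereby breaking the outflow constraint. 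Everything else is routine bookkeeping.
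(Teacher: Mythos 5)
Your proof is correct. The paper itself does not reprove this lemma — it cites it directly from Saranurak--Wang — but your inductive argument reconstructs exactly the right invariant: you use Lemma~\ref{lemma:SW19_A.1} to get $G[A_t]$-validity of $(\Delta_t, f'_t, l'_t)$ after the round, observe that restriction preserves condition (1) of Definition~\ref{def:G-valid} and the edge-capacity constraint, and then correctly isolate the two places where the $\max(0,\cdot)$ compensation of $\Delta_{t+1}$ does real work: the preflow outflow bound $\sum_{v\in A_{t+1}} f_{t+1}(u,v)\le\Delta_{t+1}(u)$, and the chain $f_{t+1}(u)\ge f'_t(u)\ge T(u)$ for positive-label $u$ needed for condition (2). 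Both inequalities are verified correctly via $\sum_{v\in A_t}f'_t(\cdot,u)=\sum_{v\in A_{t+1}}f'_t(\cdot,u)+\sum_{v\in S_t}f'_t(\cdot,u)$ and antisymmetry.
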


\begin{proof}

    By induction on $t$. The first condition of Definition \ref{def:G-valid} is clear. It suffices to prove that $f_{t}(v) \ge T(v)$ for every $v \in A_t$ that satisfies $l_{t}(v) \ge 1$. Indeed, Algorithm~\ref{algo:unit_flow_matching} ensures the slightly stronger condition, that $f_{t}(v)\ge f'_{t-1}(v)\ge T(v)$. 
\end{proof}

    Note that $\Delta_{t+1}(v)\ge\Delta_t(v)$ for all nodes $v\in A_{t+1}$.
    We define the total increase in $f(\cdot)$ (\ie, the total mass in the system) throughout the execution of the dynamic version of \emph{\unitFlow}. This value is crucial to bound the running time of the algorithm and the sizes of the cuts that it produces.
    \begin{definition} [Total Balance~\cite{saranurak2019expander}]
    \label{def:f_total}
        We denote by ${f}_\text{total}$ the total increase in $f$ that we introduce at various vertices over all rounds. More formally, for each round $t>1$, let us denote the \emph{change in $f$ introduced at vertex $u$ at the end of round $t-1$} as ${\delta f}_t$. That is, for a vertex $u\in A_t$,
        \begin{align*}
            {\delta f}_t(u) = f_t(u) - f'_{t-1}(u) 
            &= 
            \left(\Delta_t(u) +  \sum_{v\in A_{t}}{f_{t}(v,u)}\right) - \left(\Delta_{t-1}(u) + \sum_{v\in A_{t-1}}{f'_{t-1}(v,u)}\right)
            \\ & =\Delta_t(u) - \Delta_{t-1}(u) - \sum_{v\in S_{t-1}}{f'_{t-1}(v,u)} \\ & = 
            \max\left(0, \sum_{v\in S_{t-1}}{f'_{t-1}(v,u)}\right) - \sum_{v\in S_{t-1}}{f'_{t-1}(v,u)} \\
         & =  \max\left(0 , -\sum_{v\in S_{t-1}}{f'_{t-1}(v,u)}\right)\ .        
        \end{align*}
        In particular, $f_t(u) \ge f'_{t-1}(u)$.
        In other words, if at the end of iteration $t-1$,  $u$ sent mass to $S_{t-1}$, then at iteration $t$ this mass remains at $u$ (and consequently $f(u)$ increases) and if, overall, $u$ received mass from $S_{t-1}$ at iteration $t-1$, then at the next iteration $u$ gets this mass as initial mass in $\Delta_t(u)$ (so $f_{t}(u) = f'_{t-1}(u)$). 

        Let ${\delta f}_t(A_t) = \sum_{u\in A_t}{{\delta f}_t(u)}$ be the net flow sent from $A_t$ to $S_{t-1}$ by $f'_{t-1}$ (that is, the sum over net flows of vertices $u\in A_t$ that send more flow to $S_{t-1}$ than they receive).
        We also denote the total change in $f$ at the beginning of round $1$ (\ie, the total initial mass) as:
        \begin{align*}
            \Delta_1(V) \defeq \sum_{v\in V}{\Delta_1(v)} = \sum_{v\in V}{m_v} = |A^l|\ .
        \end{align*}

        Then, ${f}_\text{total}$, the total mass introduced to the network during Algorithm~\ref{algo:unit_flow_matching}, includes the \emph{initial} mass at the beginning of round $1$ and the \emph{increase} in $f$ over all rounds $t>1$:
        \begin{align*}
            {f}_\text{total} \defeq \Delta_1(V)+\sum_{t>1}{{\delta f}_t(A_t)}\ .
        \end{align*}
    \end{definition}

    Let $S=\bigcup_{t\ge 1}{S_t}, A'=V\setminus S$. Suppose the algorithm terminates at round $\kappa$, \ie, $A'=A_\kappa$, $S_\kappa =\emptyset$.
\begin{lemma}
    \label{lemma:f_total_is_sum_of_f}
        ${f}_\text{total}$ is the sum, over all vertices $v\in V$, of their ``final'' $f'_t(v)$ (when they were removed). More formally,
        $f_\text{total} = \sum_{v\in A'}{f'_\kappa(v)} + \sum_{t\ge 1}{\sum_{v\in S_t}{f'_t(v)}}$.
    \end{lemma}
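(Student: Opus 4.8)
The plan is to unfold the definition of $f_\text{total}$ and trace how the "initial mass plus increments" bookkeeping at the left-hand side is exactly redistributed, vertex by vertex, into the "final $f'$ value at removal time" on the right-hand side. Concretely, I would first expand $f_\text{total} = \Delta_1(V) + \sum_{t>1}\delta f_t(A_t)$ using Definition \ref{def:Delta_total}, and recall the identity $f'_t(v) = \Delta_t(v) + \sum_{u} f'_t(u,v)$ from Definition \ref{def:flow} (extended to vertices). The key is that when a vertex $v$ survives to round $t+1$, the quantity $\Delta_{t+1}(v)$ is obtained from $\Delta_t(v)$ plus $\max(0,\sum_{w\in S_t} f'_t(w,v))$, and $f_{t+1}(v)$ is defined so that $f_{t+1}(v) - f'_t(v) = \delta f_{t+1}(v) = \max(0,-\sum_{w\in S_t} f'_t(w,v))$. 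So between consecutive rounds, the change in the "local mass" $f(v)$ at a surviving vertex $v$ equals $\delta f_{t+1}(v)$, the $(t+1)$-st increment attributed to $v$.

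From here I would argue by a telescoping/accounting identity. For each vertex $v$, let $\tau(v)$ be the round in which $v$ is removed (either $v\in S_{\tau(v)}$, or $v\in A'=A_\kappa$ in which case set $\tau(v)=\kappa$). Summing the per-round increments $\delta f_t(v)$ over all rounds $t\ge 1$ that $v$ is present (with $\delta f_1(v):=\Delta_1(v)$ as the initial contribution) telescopes: $\Delta_1(v) + \sum_{2\le t\le \tau(v)} \delta f_t(v)$ accumulates the successive corrections to $f(v)$, and because $f(v)$ at the start of round $1$ before any flow is just $\Delta_1(v)$ (there is no residual flow yet) while at round $\tau(v)$ the "final" value of $f$ at $v$ within its last active iteration is $f'_{\tau(v)}(v)$, the telescoped sum of increments need not literally equal $f'_{\tau(v)}(v)$ — instead one reorganizes the double sum $\sum_t \delta f_t(A_t) = \sum_v \sum_{t} \delta f_t(v)$ by exchanging the order of summation, grouping all contributions to a fixed $v$, and then matching them to $f'_{\tau(v)}(v)$ via the relation $f'_t(v) = \Delta_t(v) + \sum_{u} f'_t(u,v)$ together with the fact that the flow $f'_t$ restricted to $A_{t+1}\times A_{t+1}$ is exactly $f_{t+1}$, so no flow mass "internal" to the surviving set is lost or double-counted across the restriction. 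The only mass that leaves the accounting at round $t$ is the flow $f'_t$ incident to $S_t$, which is precisely why $\delta f_{t+1}$ appears as a compensating source term in $A_{t+1}$.

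The main obstacle is getting the signs and the boundary terms right: the $\max(0,\cdot)$ in the definition of $\delta f$ means one must be careful that whenever $\sum_{w\in S_t} f'_t(w,v) < 0$ (net flow out of $v$ into $S_t$), the increment $\delta f_{t+1}(v)$ adds it back, and whenever it is $\ge 0$ the increment is zero but $\Delta_{t+1}$ already absorbed it — so in both cases $f_{t+1}(v) = \max(f'_t(v), \Delta_t(v) + \sum_w f'_t(u,v)\text{ over surviving }u)$, and one checks this equals $\Delta_{t+1}(v) + \sum_{u\in A_{t+1}} f'_t(u,v)$. Once this per-step invariant is verified, the statement follows by summing over $t$ and $v$ and collecting the terms $\sum_{v\in A'} f'_\kappa(v) + \sum_{t\ge 1}\sum_{v\in S_t} f'_t(v)$, each of which is the last recorded $f'$-value of $v$. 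I would present this as: establish the invariant $f_{t+1}(u) = \Delta_{t+1}(u) + \sum_{v\in A_{t+1}} f'_t(v,u)$ for all $u\in A_{t+1}$ (a one-line check from the update rule and Definition \ref{def:flow}), then do the global telescoping sum, isolating the mass incident to each removed set $S_t$ at the moment of its removal.
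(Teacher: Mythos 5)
Your plan is essentially the same as the paper's: expand $f_\text{total} = \Delta_1(V)+\sum_{t>1}\delta f_t(A_t)$, use the fact that $f_t$ and $f'_t$ are preflows on $G[A_t]$ with the same source function to get per-round conservation of total mass (the paper states this directly as $\sum_{v\in A_t}f_t(v)=\sum_{v\in A_t}\Delta_t(v)=\sum_{v\in A_t}f'_t(v)$, which is exactly what your proposed invariant yields after summing over $u\in A_{t+1}$ and invoking antisymmetry of $f'_t$), and then telescope across rounds, with the boundary terms at $A_t\setminus A_{t+1}=S_t$ producing $\sum_{t\ge1}\sum_{v\in S_t}f'_t(v)$ and the final set producing $\sum_{v\in A'}f'_\kappa(v)$. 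You correctly flag that the naive per-vertex telescope fails (indeed $\Delta_1(v)+\sum_t\delta f_t(v)\ne f'_{\tau(v)}(v)$ in general) and that it is only after summing over vertices that the internal-flow terms cancel; the paper simply front-loads this cancellation as the preflow conservation identity, making the telescoping sum entirely mechanical. Apart from a couple of notational slips ($\Delta_{t+1}(u)$ where $\Delta_{t+1}(v)$ is meant, and the $\max$ identity written with mismatched indices), your proposal is correct.
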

    \begin{proof}
        For all $t\ge 1$, recall that $f_t$ is the preflow at the beginning of the $t$'th iteration of Algorithm~\ref{algo:unit_flow_matching} and $f'_t$ is the preflow (that is a part of a valid-solution) at the end of the iteration.  Therefore, 
        \begin{align*}
            f_\text{total} &= \Delta_1(V)+\sum_{t>1}{{\delta f}_t(A_t)} = \sum_{v\in A_1}{f_1(v)}+\sum_{t>1}{{\delta f}_t(A_t)} 
            \\
            &= \sum_{v\in A_1}{f_1(v)}+\sum_{t>1}{\sum_{v\in A_t}{(f_t(v) - f'_{t-1}(v))}} \numeq{1} \sum_{v\in A_1}{f'_1(v)}+\sum_{t>1}{\sum_{v\in A_t}{(f'_t(v) - f'_{t-1}(v)})} 
            \\
            &= \sum_{v\in A'}{f'_\kappa(v)} + \sum_{t\ge 1}{\sum_{v\in S_t}{f'_t(v)}} \ ,
        \end{align*}
        where Equation~$(1)$ holds since $f_t$ and $f'_t$ are preflows on $G[A_t]$ with source function $\Delta_t$, so $\sum_{v\in A_t}{f_t(v)} = \sum_{v\in A_t}{\Delta_t(v)} = \sum_{v\in A_t}{f'_t(v)}$.

    \end{proof}

    Given a preflow $f$ on a graph $G=(V,E)$, and two disjoint sets $A,B\subseteq V$, denote by $f(A,B)$ the amount of flow that passes from $A$ to $B$. 
	That is, $f(A,B) = \sum_{u\in A}{\sum_{v\in B}{\max\{0,f(u,v)\}}}$.
    
 Since most edges between
 $S_{t-1}$ to $A_t$ are saturated in the direction from $S_{t-1}$ to $A_t$, the following lemma bounds the amount of flow in the opposite direction: from $A_t$ to 
 $S_{t-1}$. From this we derive a bound on 
 $f_{total}$, the total mass introduced throughout the execution.
\begin{lemma} \label{lemma:delta_f_bound}
        For all $1 < t \le \kappa$, it holds that:
        \begin{enumerate}
            \item $f'_{t-1}(A',S_{t-1})\le f'_{t-1}(A_{t}, S_{t-1}) \le \frac{6}{1000}\vol(S_{t-1})$.
            \item ${\delta f}_{t}(A_{t})\le\frac{6}{1000}\vol(S_{t-1})$. In particular, ${f}_\text{total}\le \frac{1}{6}m$.
        \end{enumerate}
    \end{lemma}
    \begin{proof}
        The inequality $f'_{t-1}(A',S_{t-1})\le f'_{t-1}(A_{t}, S_{t-1})$ holds since $A' \subseteq A_t$. By Lemma \ref{lemma:unit_flow_cut}, the number of edges $(v,u)$ where $v\in S_{t-1}, u\in A_{t}$ and $f'_{t-1}(v,u)<0<c$ is at most $\frac{6}{1000c}\cdot\vol(S_{t-1})$. Each of these edges can carry into $S_{t-1}$  at most $c$ units of mass. So together they carry at most $\frac{6}{1000}\vol(S_{t-1})$ units. This proves (1).

        For each such edge, the increase in ${\delta f}_t(u)=f_t(u) - f'_{t-1}(u)$ is at most $-f'_{t-1}(v,u)\le c$. So, ${\delta f}_t(A_t)\le\frac{6}{1000c}\cdot\vol(S_{t-1})\cdot c=\frac{6}{1000}\vol(S_{t-1})$. Initially, $\Delta_1(V)=|A^l| \le \frac{m}{8}$. By the definition of $f_\text{total}$, we get $${f}_\text{total}\le \frac{m}{8}+\sum_{t > 1}{\frac{6}{1000}\vol(S_{t-1})}\le \left(\frac{1}{8}+\frac{12}{1000}\right)m \le \frac{1}{6} m$$.
    \end{proof}
\begin{remark}
        By Lemma \ref{lemma:delta_f_bound}, ${f}_\text{total}\le \frac{1}{6}m$. So we can only have at most $\frac{1}{6}m$ units of mass entering the sinks, across all rounds. 
        In particular we never saturate all the vertices of $A^r$, since the sum of sink capacities of these vertices is $|A^r| \ge \frac{1}{2}m > \frac{1}{6}m$. Since $S=\bigcup_{t \ge 1}{S_t}$ contains only saturated vertices, it cannot be all of $V$, so if $S$ is not empty then it is a nontrivial cut. 
        Moreover, we prove in Lemma~\ref{lem:f(S_t, A_t)} that $\vol(V\setminus S) = \Omega(m)$.
    \end{remark}

\subsection{Bounding $\Phi_{G}(S, V\setminus S)$}

The main claim in Lemma~\ref{lemma:unit_flow} is the bound on the conductance $\Phi_{G}(S, V\setminus S) = \frac{|E(S,V\setminus S)|}{ \min \{\vol(S), \vol(V \setminus S) \}}$ of the cut $S=\bigcup_{t \ge 1}{S_t}$ (see Lemma~\ref{cor:unit_flow_correctness}). We bound the denominator (Lemmas~\ref{lemma:S_t_bound}-\ref{cor:vol_S_bar}) and nominator (Lemmas~\ref{lem:f(S_t, A_t)}-\ref{cor:E(S,A)}) separately. 

The bound we show on the denominator is $\min \{\vol(S), \vol(V \setminus S) \} = \Omega(\vol(S))$ (see Corollary~\ref{cor:vol_S_bar}) and we prove it by showing that  $\vol(V \setminus S) = \Omega(m)$. Intuitively, this is true because all the sinks in $S$ are saturated (the vertices of $S$ have positive labels) so the total sink capacity of nodes in $S$ 
is at most $f_{total}$ (which is at most $\frac{m}{6}$). On the other hand, the total sink capacity of all nodes is $|A^r|\ge \frac{m}{2}$, so $\vol(V\setminus S) \ge \sum_{v\in V\setminus S}{\bar{m_v}}=\sum_{v\in V\setminus S}{T(v)} \ge \frac{m}{2}-\frac{m}{6}$.

The bound we show on the nominator is $|E(S, V\setminus S)|\le \frac{1018}{1000c}\vol(S)$. We establish this, by bounding, over the different iterations $i=1,\ldots, \kappa-1$, the number of saturated edges from $S_i$ to $V\setminus S$ and the number of non-saturated edges from $S_i$ to $V\setminus S$. By Lemma~\ref{lemma:unit_flow_cut}, there are at most $\frac{6}{1000c}\cdot \vol(S_i)$ such non saturated edges. 
To bound the number of saturated edges from $S_i$ to $V\setminus S$, 
we bound $f'_i(S_i,V\setminus S)$ (their sum is bounded by $\frac{1012}{1000}\vol(S)$, see Lemma~\ref{lem:f(S, A)B}) and then divide by $c$, which is the amount of flow carried by each saturated edge from $S_i$
to $V\setminus S$. We get this bound on $f'_i(S_i,V\setminus S)$ by bounding the net flow $f'_i(S_i,V\setminus S) - f(V\setminus S,S_i)$ and $f'_i(V\setminus S,S_i)$ separately.

The following lemma states that the total sink capacity at $S=\bigcup_{t\ge 1}S_t$ is at most the total mass $f_{total}$. Indeed, by the definition of \unitFlow, every vertex in $S$ has a positive label and hence its sink capacity is saturated ($f(v)\ge T(v)$).

\begin{lemma}
    \label{lemma:S_t_bound}
        $\sum_{t\ge 1}{\sum_{v\in S_t}{\bar{m}_v}} \le f_\text{total}$.
    \end{lemma}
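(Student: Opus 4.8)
The plan is to read the bound off directly from the decomposition of $f_\text{total}$ provided by Lemma~\ref{lemma:f_total_is_sum_of_f}, namely $f_\text{total} = \sum_{v\in A'}{f'_\kappa(v)} + \sum_{t\ge 1}{\sum_{v\in S_t}{f'_t(v)}}$, by showing that every vertex of $A^r$ that is ever removed contributes at least $1$ to the double sum, while all other terms are non-negative.

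First I would record that the removed sets $S_1,S_2,\ldots$ are pairwise disjoint: since $A_{t+1}=A_t-S_t$ and $S_{t'}\subseteq A_{t'}\subseteq A_{t+1}$ whenever $t'>t$, no vertex lies in two different $S_t$'s. Hence $|\bigcup_{t\ge 1}{(S_t\cap A^r)}| = \sum_{t\ge 1}{|S_t\cap A^r|}$, and it suffices to lower-bound the double sum above by $\sum_{t\ge 1}{|S_t\cap A^r|}$.

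Next, fix a round $t$ and a vertex $v\in S_t\cap A^r$. By Lemma~\ref{lemma:unit_flow_cut}(2) we have $\{u : l'_t(u)=0\}\cap S_t=\emptyset$, so $l'_t(v)\ge 1$. Moreover Lemma~\ref{lemma:unit_flow_cut} guarantees that $(\Delta_t,f'_t,l'_t)$ is a $G[A_t]$-valid \emph{solution}, so property~(2) of Definition~\ref{def:G-valid} applies: since $l'_t(v)\ge 1$, the sink of $v$ is saturated, i.e.\ $f'_t(v)\ge T(v)$. As $v\in A^r$ we have $T(v)=1$, whence $f'_t(v)\ge 1$. Also $f'_t$ and $f'_\kappa$ are preflows, so by condition~(2) of Definition~\ref{def:feasible_flow} (carried over to preflows in Definition~\ref{def:preflow}) $f'_t(w)\ge 0$ and $f'_\kappa(w)\ge 0$ for every vertex $w$. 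Combining these facts with Lemma~\ref{lemma:f_total_is_sum_of_f} and discarding the non-negative terms not indexed by $A^r$,
\begin{align*}
f_\text{total} = \sum_{v\in A'}{f'_\kappa(v)} + \sum_{t\ge 1}{\sum_{v\in S_t}{f'_t(v)}} \ge \sum_{t\ge 1}{\sum_{v\in S_t\cap A^r}{f'_t(v)}} \ge \sum_{t\ge 1}{|S_t\cap A^r|} = \Bigl|\bigcup_{t\ge 1}{(S_t\cap A^r)}\Bigr|,
\end{align*}
which is exactly the claim.

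There is no serious obstacle here; the only point that requires care is the appeal to saturation of the sinks of removed $A^r$-vertices, which needs the output of \textit{Unit-Flow} to be a $G[A_t]$-valid \emph{solution} (so that property~(2) of Definition~\ref{def:G-valid} is available), not merely a $G[A_t]$-valid state — and this is precisely what Lemma~\ref{lemma:unit_flow_cut} supplies, together with the fact (Lemma~\ref{lemma:unit_flow_cut}(2)) that $S_t$ never contains a label-$0$ vertex.
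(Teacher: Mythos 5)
Your proof is correct and takes essentially the same route as the paper's: both invoke Lemma~\ref{lemma:f_total_is_sum_of_f} for the decomposition of $f_\text{total}$, use Lemma~\ref{lemma:unit_flow_cut} to establish $l'_t(v)\ge 1$ for $v\in S_t$, and then apply property (2) of Definition~\ref{def:G-valid} to conclude $f'_t(v)\ge 1$ for $v\in S_t\cap A^r$. (One tiny inaccuracy in your closing remark: property (2) is already part of a $G$-valid \emph{state}, not only a $G$-valid \emph{solution}, so the solution status is not strictly needed for that step — though the paper phrases it the same way.)
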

    \begin{proof}
        By the definition of $S_t$ according to \emph{\unitFlow}, any $v\in S_t$ has $l'_t(v) \ge 1$. Since $(\Delta_t, f'_t, l'_t)$ is a valid-solution, for every $v\in S_t \cap A^r$, we get that $f'_t(v)\ge T(v)=\bar{m}_v$ (by the second item of Definition \ref{def:G-valid}). Therefore  $\sum_{t\ge 1}{\sum_{v\in S_t}{\bar{m}_v}} \le \sum_{t\ge 1}{\sum_{v\in S_t}{f'_t(v)}}$, and by Lemma \ref{lemma:f_total_is_sum_of_f}, it is upper-bounded by $f_\text{total}$. So, we get that $\sum_{t\ge 1}{\sum_{v\in S_t}{\bar{m}_v}}\le f_\text{total}$.
    \end{proof}
\begin{corollary}
    \label{cor:vol_S_bar}
        $\vol(V\setminus S) \ge \frac{1}{3}m \ge \frac{1}{6}\vol(S)$.
    \end{corollary}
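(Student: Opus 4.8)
The plan is to read off the result almost immediately from the two preceding lemmas together with the size hypothesis $|A^r|\ge\tfrac{5}{24}m$ of Lemma~\ref{lemma:unit_flow}. Write $S=\bigcup_{t\ge 1}S_t$ for the cut that Algorithm~\ref{algo:unit_flow_matching} ultimately returns, so that $A'=V\setminus S$. The key chain is: Lemma~\ref{lemma:S_t_bound} controls how much of $A^r$ can be swallowed by $S$, and Lemma~\ref{lemma:delta_f_bound} controls $f_\text{total}$, which in turn controls that swallowed part.

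First I would lower-bound $|A^r\setminus S|$. By Lemma~\ref{lemma:S_t_bound} we have $|A^r\cap S|=\bigl|\bigcup_{t\ge 1}(S_t\cap A^r)\bigr|\le f_\text{total}$, and by Lemma~\ref{lemma:delta_f_bound} we have $f_\text{total}\le\tfrac{1}{6}m$, so
\begin{align*}
    |A^r\setminus S| = |A^r| - |A^r\cap S| \ge \frac{5}{24}m - \frac{1}{6}m = \frac{1}{24}m .
\end{align*}
Since $A^r\setminus S\subseteq V\setminus S$ and $G$ is connected, every vertex of $V\setminus S$ has degree at least $1$ in $G$, hence $\vol(V\setminus S)\ge |V\setminus S|\ge |A^r\setminus S|\ge\tfrac{1}{24}m$, which gives the first inequality.

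For the second inequality I would simply invoke the trivial bound $\vol(S)\le\vol(G)=2m$, so that $\tfrac{1}{48}\vol(S)\le\tfrac{1}{24}m\le\vol(V\setminus S)$, closing the argument. There is really no obstacle here — the only point needing care is that the step "$\vol(V\setminus S)\ge|V\setminus S|$" uses connectedness of $G$ to exclude isolated vertices on the surviving side, which is the standing hypothesis in this appendix (and is assumed in Lemma~\ref{lemma:unit_flow} anyway). Everything else is the bookkeeping already done in Lemmas~\ref{lemma:delta_f_bound} and~\ref{lemma:S_t_bound}, so the proof is a two-line computation.
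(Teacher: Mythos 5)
Your proof is correct and follows essentially the same route as the paper: combine Lemma~\ref{lemma:S_t_bound} and Lemma~\ref{lemma:delta_f_bound} to bound $|A^r\cap S|\le f_{\text{total}}\le\frac{1}{6}m$, subtract from $|A^r|\ge\frac{5}{24}m$, pass from count to volume via minimum degree $1$, and finish with $\vol(S)\le 2m$. The only cosmetic difference is that you make the connectedness assumption explicit where the paper leaves it implicit.
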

    \begin{proof}
        Recall that for each vertex $v\in V$, $\bar{m}_v\le \dG{}{v}$. We have
        \begin{align*}
            \vol(V\setminus S)& \ge \sum_{v\in V\setminus S}{\bar{m}_v} = |A^r| - \sum_{t\ge 1}{\sum_{v\in S_t}{\bar{m}_v}}
            \\
            &\numge{1} \frac{1}{2}m - f_\text{total} \numge{2} \frac{1}{2}m - \frac{1}{6}m = \frac{1}{3}m = \frac{1}{6}\cdot 2m \ge \frac{1}{6}\vol(S) \ ,
        \end{align*}
        where Inequalities $(1)$ and $(2)$ follow from Lemmas \ref{lemma:S_t_bound} and \ref{lemma:delta_f_bound}, respectively.
    \end{proof}

     Recall that $A' = V \setminus \bigcup_{t=1}^{\kappa} S_t = V\setminus S$. By the definition of $\Delta_{t+1}$, for every $v\in A'$ it holds that
    \[
    \Delta_{t+1}(v) - \Delta_t(v) 
    = 
    \max\left(0, \sum_{v\in S_t}{f'_t(v,u)}\right)
    =
    \max\left(0, f'_t\left(S_t, \{v\}\right) - f'_t\left(\{v\}, S_t\right) \right),
    \]

	so in particular $f'_t(S_t, \{v\}) - f'_t(\{v\}, S_t) \le \Delta_{t+1}(v) - \Delta_t(v) $. As a direct consequence, we get:

\begin{lemma}
	\label{lem:f(S_t, A_t)}
     $f'_t(S_t, A') - f'_t(A', S_t) \le  \Delta_{t+1}(A') - \Delta_{t}(A')$, for every $t\in [1,\kappa-1]$.
\end{lemma}

\begin{lemma}
\label{cor:f(S, A)}
    It holds that $\sum_{t=1}^{\kappa-1}{f'_t(S_t, A') - f'_t(A', S_t)} \le f_\text{total} - \Delta_1(A')$. 
\end{lemma}
\begin{proof}
    By Lemma~\ref{lem:f(S_t, A_t)} we have $\sum_{t=1}^{\kappa-1}{f'_t(S_t, A') - f'_t(A', S_t)} \le \Delta_\kappa(A') - \Delta_1(A')$. Because $f'_\kappa$ is a preflow on $G[A']$, we get $\Delta_\kappa(A') = \sum_{v \in A'}{f'_\kappa(v)}$. By Lemma \ref{lemma:f_total_is_sum_of_f}, we get $\sum_{v \in A'}{f'_\kappa(v)} \le f_\text{total}$.
\end{proof}
\begin{lemma}
\label{lem:f(S, A)B}
    $\sum_{t=1}^{\kappa-1}{f'_t(S_t, A')} \le \frac{1012}{1000}\vol(S)$. 
\end{lemma}
\begin{proof}
    By Lemma \ref{cor:f(S, A)}, 
    \begin{align*}
        \sum_{t=1}^{\kappa-1}{f'_t(S_t, A') - f'_t(A', S_t)} &\le f_\text{total} - \Delta_1(A') = \Delta_1(S) + \sum_{t > 1}{{\delta f}_t(A_t)}
        \\
        &\numle{2} \sum_{t\ge 1}{\sum_{v\in S_t}{m_v}} + \frac{6}{1000}\sum_{t > 1}{\vol(S_{t-1})} \numle{3} \frac{1006}{1000}\vol(S) \ ,
    \end{align*}
    where  Inequality $(2)$ follows from Lemma \ref{lemma:delta_f_bound}(2) and Inequality $(3)$ follows because $m_v \le \dG{}{v}$ for all vertices $v$.
    By Lemma \ref{lemma:delta_f_bound}(1), we get that $\sum_{t=1}^{\kappa-1}{f'_t(S_t, A')}\le \frac{1012}{1000}\vol(S)$.
\end{proof}

Now we use the bound on the flow crossing the cut to bound the number of edges crossing the cut.

\begin{corollary}
\label{cor:E(S,A)}
    It holds that $|E(S, A')|\le \frac{1018}{1000c}\vol(S)$.
\end{corollary}
\begin{proof}
    By Lemma \ref{lemma:unit_flow_cut}, for each $t\in [1, \kappa-1]$,  the number of unsaturated arcs $(u,v)$  where $u\in S_t$ to $v\in A_t\setminus S_t$ (that is $-c\le f'_t(u, v) < c$) is at most $\frac{6}{1000c}\cdot \vol(S_t)$. Each of the saturated arcs $(u,v)$ where $u\in S_t$ to  $v\in A_t\setminus S_t$ carries $c$ units of mass, so by Lemma \ref{lem:f(S, A)B}, the number of saturated edges in the direction from $S$ to $A'$ is at most $\frac{1}{c}\sum_{t=1}^{\kappa-1}{f'_t(S_t, A')}\le\frac{1012}{1000c}\vol(S)$.
    In total, $|E(S, A')|\le \frac{1018}{1000c}\vol(S)$.
\end{proof}

Finally, we bound the expansion of the cut as follows.

\begin{lemma}
\label{cor:unit_flow_correctness}
    At the end of Algorithm \ref{algo:unit_flow_matching}, if the returned cut $S = \bigcup_{t=1}^{\kappa-1}{S_t}$ is nonempty, it satisfies $\Phi_{G}(S, V\setminus S)\le\frac{7}{c}=O(\phi\log n)$ and $\vol(V\setminus S) \ge \frac{1}{3}m$.
\end{lemma}
\begin{proof}
    Assume $S\neq\emptyset$. By Corollary \ref{cor:E(S,A)},  $|E(S, A')|\le \frac{1018\vol(S)}{1000c}= O(\phi\log n\cdot\vol(S))$. 
    \looseness = -1 By Corollary \ref{cor:vol_S_bar}, we get that $\min(\vol(S), \vol(A'))\ge \frac{1}{6}\vol(S)$. 
    So, the conductance of the cut is $\frac{|E(S,A')|}{\min(\vol(S), \vol(A'))}\le \frac{1018\vol(S)}{1000c\cdot \frac{1}{6} \vol(S)} \le \frac{7}{c} = O(\phi\log n)$.
    By Corollary \ref{cor:vol_S_bar} we also have $\vol(V\setminus S) \ge \frac{1}{3}m$.
\end{proof}

The following analysis of the running time of the algorithm is heavily based on Lemma 3.7 in~\cite{saranurak2019expander}, which is similar to Lemma 3.4 in~\cite{henzinger2017flow}. We include the full proof here for completeness.
\begin{lemma}
\label{lemma:unit_flow_time}
    The total running time of Algorithm \ref{algo:unit_flow_matching} is $O(mh)=O(\frac{m}{\phi})$.
\end{lemma}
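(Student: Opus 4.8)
The plan is to bound the total work of Algorithm~\ref{algo:unit_flow_matching} by the standard push/relabel accounting, using the fact that all the ``bookkeeping'' that the dynamic version adds between iterations can be charged to the same quantities. First I would observe that the work in a single call to \textit{Unit-Flow} (Algorithm~\ref{algo:unit_flow}) is governed by (i) the number of relabel operations, (ii) the number of push operations, and (iii) the cost of extracting the level cut $S$. For (i): each vertex $v$ has its label $l(v)$ only increased, and never beyond $h$, so the number of relabels at $v$ across the whole execution (over all iterations $t$, since labels are only inherited and never reset) is at most $h$; summing over vertices gives $O(nh)$, and since the graph is connected $n \le m+1$, this is $O(mh)$. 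Finding, for a relabel, whether a valid push arc exists costs $O(\deg v)$ per relabel attempt if we scan the adjacency list, so the total relabel-related work is $O\!\big(\sum_v \deg(v)\cdot h\big) = O(mh)$.

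Next I would bound the number of pushes. Each push moves exactly one unit of mass along an edge. I would split pushes into \emph{saturating-direction} pushes (those that bring $f(v,u)$ from some value up by $1$) — along any single arc, the residual capacity is at most $c$, and between two consecutive pushes in opposite directions on the same arc a relabel must occur on one endpoint, so the number of pushes on a fixed arc is $O(c + \text{number of relabels of its endpoints})$. Summing over all $m$ edges and using the $O(mh)$ bound on total relabels gives $O(mc + mh) = O(mh)$ pushes total (recall $h = 1000\,c\log m \ge c$). Each push is $O(1)$ work plus the cost of locating the vertex of smallest label with positive excess; maintaining a bucket array indexed by label $\{0,\dots,h\}$ lets us do this in $O(1)$ amortized per push/relabel, for $O(mh)$ total. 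Extracting the level cut $S$ in iteration $t$ costs $O(\vol(S_t))$ by the last line of the proof of Lemma~\ref{lemma:unit_flow_cut}, and since the sets $S_t$ are disjoint, $\sum_t \vol(S_t) \le \vol(V) = 2m$, so this is $O(m)$ overall.

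Finally I would account for the inter-iteration bookkeeping in Algorithm~\ref{algo:unit_flow_matching}: restricting $f_t$ and $l_t$ to $A_{t+1}$, and updating $\Delta_{t+1}(u) \gets \Delta_t(u) + \max(0, \sum_{v\in S_t} f'_t(v,u))$. The restriction and the $\Delta$-update at iteration $t$ touch only vertices in $S_t$ and their neighbors, i.e.\ cost $O(\vol(S_t))$; again summing over disjoint $S_t$ gives $O(m)$. Adding everything: $O(mh) + O(mh) + O(m) + O(m) = O(mh)$, and since $h = 1000\,c\log m$ with $c = \Theta(1/(\phi\log m))$ we get $h = \Theta(1/\phi)$, hence total time $O(m/\phi)$, as claimed. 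The main subtlety — the step I would be most careful about — is verifying that the $O(nh)$ relabel bound survives the dynamic restarts: labels are never decreased when we pass from $A_t$ to $A_{t+1}$ (we simply restrict $l'_t$), and no vertex re-enters after removal, so the per-vertex relabel budget of $h$ is genuinely global and not per-iteration; this is exactly what Lemma~\ref{lemma:SW19_3.5} ($G[A_t]$-validity is preserved) lets us lean on, so the amortized push/relabel analysis carries over verbatim from the static case.
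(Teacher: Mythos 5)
Your accounting for the relabel work (each vertex is relabeled at most $h$ times globally because labels only increase and are inherited across iterations, giving $O\!\left(h\sum_v \deg v\right)=O(mh)$), for extracting the level cuts (charged to $\vol(S_t)$, total $O(m)$), and for the inter-round bookkeeping is all fine, and your observation that labels survive the dynamic restarts is a correct and worthwhile point. The gap is in the push bound, which is the heart of the proof.

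Your per-arc claim that the number of pushes on a fixed arc is $O(c + \text{relabels of its endpoints})$ does not follow from the argument you give. Between two pushes in opposite directions on the same arc a relabel of an endpoint is indeed required, but between two pushes in the \emph{same} direction no relabel is needed, and a maximal same-direction run can contain up to $2c$ unit pushes before the residual capacity is exhausted. So the natural bound is $O\!\bigl(c\cdot(1+\text{relabels of endpoints})\bigr)$, not $O(c + \text{relabels})$. Summing that corrected bound over arcs gives $O(mc) + O\!\bigl(c\sum_v \deg(v)\cdot\text{relabels}(v)\bigr) = O(mch)$, which is larger than the claimed $O(mh)$ by a factor of $c=\Theta\!\bigl(\tfrac{1}{\phi\log m}\bigr)$ and does not yield $O(m/\phi)$. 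The paper instead bounds pushes via the potential $\Lambda = \sum_{v} ex(v)\,l(v)$: each unit push decreases $\Lambda$ by exactly $1$, $\Lambda\ge 0$, so the number of pushes is at most the total increase of $\Lambda$. That increase has three sources: (i) relabels of vertices with $ex(v)\le 1$, contributing $O(nh)$; (ii) relabels of vertices with $ex(v)>1$, which (using that such excess can only be created by the inter-round updates, since the push rule never sends mass to a vertex with positive excess) are charged to $\delta f_t(v)$ at $O(h)$ per unit; and (iii) increases caused by edge removals between rounds, also charged to $\delta f_t(v)$ at $O(h)$ per unit. Combined with Lemma~\ref{lemma:delta_f_bound} ($f_{\text{total}}\le m/6$), this gives a total increase $O(nh + f_{\text{total}}h)=O(mh)$. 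This potential argument, together with the bound on $f_{\text{total}}$, is the ingredient your proof is missing; without it the push count is not under control.
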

\begin{proof}

    In a single iteration of \emph{\unitFlow}, initializing the list of active vertices $Q$ takes $O(m)$ time.

    We charge the work performed by \emph{\unitFlow} to the push and relabel operations. Each push operation is charged $O(1)$, and we charge $O(\deg(v))$ to each relabel of $v$. If we cannot push along an edge $(v,u)$ at some point, we won't be able to push along the edge until the label of $v$ increases. Therefore, we can implement \emph{\unitFlow} such that it checks the residual capacity of each edge incident to $v$ only once between relabels, for a total work of $O(\deg(v))$ per relabel.

    We only relabel $v$ at most $h$ times. So, in total, the relabels pay $O(h\cdot\sum_{v\in V}{\deg(v)})=O(mh)$.

    For the pushes, consider the potential function
    \begin{align*}
        \Lambda=\sum_{v\in A_t}{ex(v)l(v)} \ .
    \end{align*}
    Each push operation of $1$ unit of mass decreases $\Lambda$ by at least $1$. So we can charge the $O(1)$ work by the push to the decrease in $\Lambda$. $\Lambda$ starts at $0$ and is always non-negative. So, we can charge the decreases to the increases of $\Lambda$. 

    First, $\Lambda$ increases in relabels, where a relabel of $v$ adds $ex(v)$ units to $\Lambda$.
    The total number of relabels is at most $n\cdot h$. A relabel of a vertex $v$ with $ex(v) \le 1$, increases $\Lambda$ by at most $1$. Therefore, these relabels increase $\Lambda$ by at most $nh = O(mh)$.

    Consider now relabels of vertices $v$ with $ex(v) > 1$. We always push $1$ unit of mass, and  since we choose to push from an active vertex with a smallest label we  never push to a node with positive excess. So the only way for a vertex $v$ to have an excess larger than $1$ at round $t$, is if it starts the round with this excess. At the end of round $t-1$,\footnote{\ie, with respect to $f'_{t-1}$.} $v$ does not have excess: all vertices with positive excess were cut out in $S_{t-1}$ (this follows from Lemma \ref{lemma:unit_flow_cut}). So when round $t$ starts we have that $ex(v)\le f_t(v)-f'_{t-1}(v)={\delta f}_t(v)$. As long as $ex(v)>1$, no mass enters $v$. Therefore, while $ex(v)>1$, we can charge the increase of $ex(v)$ units in $\Lambda$, caused by a relabel of $v$, to ${\delta f}_t(v)$, such that each unit of ${\delta f}_t(v)$ is charged $O(1)$. In total, each unit of initial excess (of which there are at most ${\delta f}_t(A_t)$ in round $t$) is charged at most $O(h)$. It follows that the total increase of $\Lambda$ at round $t$ due to relabels of vertices $v$ with $ex(v)>1$ is $O({\delta f}_t(A_t)h)$.

    Apart from relabels, we may also increase $\Lambda$ when we remove edges between rounds. In this case, $\Lambda$ can increase by ${\delta f}_t(v)$ multiplied by the label of $v$, for each $v$ with  ${\delta f}_t(v)>0$. Thus, we can charge each unit of ${\delta f}_t(v)$ at most $O(h)$ for this increase of $\Lambda$. Again there are at most ${\delta f}_t(A_t)$ units of mass to charge at round $t$.

    In total, by Lemma \ref{lemma:delta_f_bound}, we charge $O(f_\text{total} h) = O(mh)$ for the increase of $\Lambda$. Additionally, it takes $O(\vol(S_t))$ time to compute $S_t$ in Lemma \ref{lemma:unit_flow_cut}, for a total of $O(m)$. Therefore, the running time is $O(mh)$.

\end{proof}

    \subsection{Proof of Lemma \ref{lemma:unit_flow}}
    \label{appendix:proof_of_unit_flow}
\begin{proof} [Proof of Lemma \ref{lemma:unit_flow}]
	\label{proof:lemma:unit_flow}
	The time bound follows from Lemma \ref{lemma:unit_flow_time}.

        When the algorithm terminates, denote $S=\bigcup_{t\ge 1}{S_t}$. If $S=\emptyset$ we have $A'=V$ and we get a feasible flow for the flow problem $\Pi(G)$, That is,  Case (1) of the lemma follows.
        Otherwise, $S\neq\emptyset$, and we get a feasible flow in $A' = V\setminus S$. By Lemma \ref{cor:unit_flow_correctness}, the cut satisfies the properties required by Case (2) of the lemma.
        Because we only added mass to the vertices throughout the execution (and did not increase the sinks), this feasible flow can in particular route the ``truncated'' flow problem where some of the vertices have reduced source mass.\footnote{Recall that we need to solve the problem in which only vertices of $v\in A^l \setminus S$ have positive source mass, and their source mass is $m_v$. During Algorithm \ref{algo:unit_flow_matching}, it is possible that additional vertices $u\in A'\setminus A^l$ will get $\Delta(u)>0$, or that vertices $v\in A^l\setminus S$ will get $\Delta(v)>m_v$. Crucially, $\Delta_t(v)$ is monotone increasing in $t$, so that $\Delta_\kappa(v)\ge\Delta_1(v)$.} This truncated flow can be found in $O(m \log n)$ time using dynamic trees~\cite{ST83}.

\end{proof}

\end{document}